\newtheorem{lossyredrule}{Lossy Reduction Rule}
\newtheorem{lossyredprotocol}[lossyredrule]{Lossy Reduction Protocol}
\newcommand{\eps}{\varepsilon}
\newcommand{\Oh}{\mathcal{O}}
\newcommand{\opt}{\ensuremath{\mathsf{opt}}}
\newcommand{\sub}{\subseteq}
\newcommand{\sm}{\setminus}
\newcommand{\nn}{\ensuremath{\mathbb{N}}}
\newcommand{\fcal}{\ensuremath{\mathcal{F}}}
\newcommand{\hcal}{\ensuremath{\mathcal{H}}}
\newcommand{\lcauv}{{\mathsf{LCA}}}
\newcommand{\lca}{\overline{\mathsf{LCA}}}
\newcommand{\fdel}{\textnormal{\textsc{$\mathcal{F}$-Deletion}}}
\newcommand{\twetadelsmall}{\textnormal{\textsc{Tw-$\eta$-Del}}}
\newcommand{\twetadel}{\textnormal{\textsc{Treewidth-$\eta$-Deletion}}}
\newcommand{\fdelsmall}{\textnormal{\textsc{$\mathcal{F}$-Del}}}
\newcommand{\val}{\textnormal{\texttt{val}}}
\newcommand{\planar}{\textnormal{\texttt{planar}}}
\newcommand{\clique}{\textnormal{\texttt{simp}}}
\newcommand{\nonclique}{\textnormal{\texttt{non-simp}}}
\newcommand{\flow}{\textnormal{\texttt{flow}}}
\newcommand{\tw}{\textnormal{\texttt{tw}}}
\title{Protrusion Decompositions Revisited: \\ Uniform Lossy Kernels for Reducing Treewidth and Linear Kernels for Hitting Disconnected Minors}
\titlerunning{Protrusion Decompositions Revisited}
\author{Roohani Sharma} {Discrete Mathematics Group, Institute for Basic Science (IBS), Daejeon, South Korea}{roohani@ibs.re.kr}{https://orcid.org/0000-0003-2212-1359}{Supported by the Young Scientist Fellowship of the Institute for Basic Science (IBS-R029-Y8).}
\author{Michał Włodarczyk}
{University of Warsaw, Institute of Informatics, Poland}
{michal.wloda@gmail.com}
{https://orcid.org/0000-0003-0968-8414}
{Supported by the Polish National Science Centre SONATA-19 grant 2023/51/D/ST6/00155.}
\authorrunning{R. Sharma, M. Włodarczyk} 
\keywords{kernelization, graph minors, treewidth, uniform kernels, minor hitting} 
\begin{document}

\maketitle

\begin{abstract}
Let \fcal\ be a finite family of graphs.
In the \fdel\ problem, one is given a graph $G$ and an integer $k$, and
the goal is to find $k$ vertices whose deletion results in a graph with no minor from the family \fcal.
This may be regarded as a far-reaching generalization of {\sc Vertex Cover} and {\sc Feedback vertex Set}.
In their seminal work, Fomin, Lokshtanov, Misra \& Saurabh [FOCS 2012] gave a polynomial kernel for this problem when the family \fcal\ contains a planar graph.
As the size of their kernel is $g(\fcal) \cdot k^{f(\fcal)}$, a natural follow-up question was whether the dependence on \fcal\ in the exponent of $k$ can be avoided. 
The answer turned out to be negative: Giannopoulou, Jansen, Lokshtanov \& Saurabh [TALG 2017] proved that this is already inevitable
 for the special case of the \twetadel\ problem.

In this work, we show that this non-uniformity can be avoided at the expense of a small loss.
First, we present a simple 2-approximate kernelization algorithm for \twetadel\ with a kernel of size $g(\eta) \cdot k^6$.
Next, we show that 
the approximation factor can be made arbitrarily close to $1$,
if we settle for a kernelization protocol with $\Oh(1)$
calls to an oracle that solves instances of size bounded by a uniform polynomial in $k$.
We extend the above results to general \fdel, whenever \fcal\ contains a planar graph, as long as an oracle for 
\twetadel\ is available for small instances. 
Notably, all our constants are computable functions of $\fcal$ and our techniques work also when some graphs in $\fcal$ may be disconnected.

Our results rely on two novel techniques.
First, we transform so-called ``near-protrusion decompositions'' into true protrusion decompositions by sacrificing a small accuracy loss. 
Secondly, we show how to optimally compress such a decomposition with respect to general \fdel.
Using our second technique,
we also obtain linear kernels on sparse graph classes
when $\fcal$ contains a planar graph, whereas the previously known theorems required all graphs in $\fcal$ to be connected.
Specifically, we 
generalize the kernelization algorithm by Kim, Langer, Paul, Reidl, Rossmanith, Sau \& Sikdar [TALG 2015] on graph classes that exclude a topological minor.
\end{abstract}

\newpage

\pagenumbering{arabic}

\section{Introduction}\label{sec:intro}
Consider a finite family $\fcal$ of graphs.
In the \fdel\ problem we are given a graph $G$ and a positive integer $k$ and we ask whether one can remove $k$ vertices from $G$ to obtain a graph that is $\fcal$-minor-free, that is, it does not contain any graph $F \in \fcal$ as a minor.
This captures a wide variety of graph problems such as {\sc Vertex Cover}, {\sc Feedback Vertex Set}, {\sc Diamond Hitting Set}, or {\sc $d$-Path Transversal}. 
It is known that \fdel\ is fixed-parameter tractable (FPT)~\cite{DBLP:journals/talg/SauST22} for every family \fcal\ and it admits a polynomial kernel\footnote{A polynomial kernel for a parameterized problem $L$ is a polynomial-time algorithm that given an instance $(I,k)$ of $L$ returns an equivalent one $(I',k')$ such that $|I'| + k' \le \mathsf{poly}(k)$. See the book~\cite{fomin2019kernelization}.} whenever \fcal\ contains at least one planar graph~\cite{DBLP:conf/focs/FominLMS12}.
This condition is equivalent to the existence of a~constant $\eta(\fcal) \in \nn$ such that every \fcal-minor-free graph has treewidth~\cite[\S 7]{DBLP:books/sp/CyganFKLMPPS15} bounded by $\eta(\fcal)$~\cite{DBLP:journals/jct/RobertsonS86}.
A canonical problem that fits into this category is \twetadel: remove $k$ vertices from a graph to reduce its treewidth to $\eta$.
The arguably simplest case for which \fcal\ does not contain a planar graph is {\sc Vertex Planarization}; here the existence of a polynomial kernel remains a major open problem~\cite{DBLP:journals/siamcomp/JansenW25}.

What may seem like a downside of the kernelization algorithm by Fomin et al.~\cite{DBLP:conf/focs/FominLMS12} is its non-uniformity: the size of the kernel is of the form $\Oh_\fcal(k^{f(\fcal)})$ for some function $f$.
{\bf This turns out to be inevitable: \twetadel\ does not admit a kernel of size $\Oh_\eta(k^{o(\eta)})$ unless NP $\sub$ coNP/poly}~\cite{DBLP:journals/talg/GiannopoulouJLS17}.
On the positive side, uniform kernels are known for {\sc Treedepth-$\eta$-Deletion} with $\Oh_\eta(k^{6})$ vertices~\cite{DBLP:journals/talg/GiannopoulouJLS17} and for families \fcal\ of connected graphs that include $K_{2,p}$ (for any $p \in \nn$) with $\Oh_\fcal(k^{10})$ vertices~\cite{DBLP:conf/isaac/LochetS24} (cf.~\cite{DBLP:journals/siamdm/FominLMPS16}). 

Another way to obtain uniform kernelization is to restrict the input graph $G$ to some well-behaving graph class.
There are several meta-theorems that yield linear kernels for miscellaneous problems over the class of planar graphs~\cite{DBLP:journals/jacm/BodlaenderFLPST16}, for every proper minor-closed graph class~\cite{DBLP:journals/siamcomp/FominLST20}, and even for every graph class that excludes some graph as a topological minor~\cite{DBLP:journals/siamdm/GarneroPST15, kim15linear}.
These approaches suffer from a different weakness though: they require the problem in question to satisfy a certain separation condition.
{\bf This translates to a restriction that every graph in the family \fcal\ must be connected.}
{Consequently, we could not handle deletion to graph classes such as ``planar graphs of bounded face cover'' or ``bounded-treewidth graphs embeddable on a torus''.
In the second example, observe that the family of minimal minor obstructions to torus embeddability includes $K_5 + K_5$.
}

\subparagraph{Lossy kernelization.}
In order to overcome known barriers against polynomial kernels~\cite{DBLP:journals/jcss/BodlaenderDFH09, fomin2019kernelization}, researchers started to study a lossy variant of kernelization~\cite{DBLP:conf/stoc/LokshtanovPRS17}.
Intuitively, an $\alpha$-approximate kernel reduces the task of finding an approximate solution to an instance of size $\mathsf{poly}(k)$ where $k$ bounds the optimum.
Here, the approximation factor $\alpha$ measures how much is ``lost in translation'' between the two instances: a $c$-approximate solution is translated to an $(\alpha \cdot c)$-approximate solution.
Such lossy kernels have been studied for {\sc Vertex Planarization}~\cite{DBLP:journals/siamcomp/JansenW25} and for \fdel\ with a connectivity constraint on solution~\cite{DBLP:conf/wg/EibenMR22, DBLP:conf/esa/000121a}.
To capture the difficulty of an approximation task by parameter $k$, solutions larger than $k$ are treated as equally bad and we define their cost as $k+1$~\cite{DBLP:conf/stoc/LokshtanovPRS17}.

Analogously to exact kernelization, one can allow the algorithm to repeatedly call an oracle that (approximately) solves an instance of bounded size.
Assuming that the oracle processes instances of size $\mathsf{poly}(k)$, such an algorithm is called an (approximate) polynomial Turing kernelization~\cite{DBLP:conf/esa/HolsKP20, DBLP:conf/iwpec/KratschK23}.
A restricted model in which the number of oracle calls
is bounded by $\mathsf{poly}(k)$ (so, in particular,
it does not depend on the input size) has been dubbed a {\em lossy kernelization protocol}~\cite{DBLP:conf/esa/FominLL0TZ23}.
In fact, allowing just two oracle calls unlocks technical tools unavailable for currently known one-shot lossy kernels~\cite{DBLP:conf/esa/FominLL0TZ23}.
Whereas there are problems that admit an exact polynomial Turing kernel and are unlikely to admit a regular polynomial kernel~\cite{DBLP:journals/algorithmica/JansenPW19}, the current lower bound techniques do not distinguish whether the oracle is called $\mathsf{poly}(k)$ times or just once~\cite[{\S 21}]{fomin2019kernelization}.  
It is therefore an intriguing question if and how (lossy) kernelization protocols can outperform one-shot (lossy) kernels.

\subparagraph{Our results.} 
In this paper we address the two raised issues that trouble the existing kernelization algorithms for \fdel.
First, we show that a uniform polynomial kernelization for \twetadel\  is possible if we settle for 2-approximation.
{This should be compared to polynomial-time $\Oh(\log \eta)$-approximation algorithm for \twetadel~\cite{DBLP:conf/soda/GuptaLLM019} whereas $c$-approximation with an absolute constant $c$ remains elusive.}
Our result extends to \fdel\, for any family $\fcal$ containing a planar graph, modulo the fact that we still need an oracle solving \twetadel.
As the oracle problem differs from the original one, we obtain a slightly weaker result, namely a {\em lossy polynomial compression.}

\begin{restatable}[Uniform lossy kernel]{theorem}{lossykernel}
\label{thm:lossy-kernel}
Let \fcal\ be a finite family of graphs containing at least one planar graph.
  Then \fdel\ admits a 
  $2$-lossy compression 
  with $\Oh_{\mathcal{F}}(k^{5})$ vertices and of size $\Oh_{\mathcal{F}}(k^{6})$. 
   Moreover, \twetadel\  
  admits a $2$-lossy kernel 
  with $\Oh_{\mathcal{F}}(k^{5})$ vertices and of size $\Oh_{\mathcal{F}}(k^{6})$. 
\end{restatable}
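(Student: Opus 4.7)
The plan is to assemble the two novel techniques advertised in the introduction into a coherent pipeline. First, I would invoke a polynomial-time $\Oh_\fcal(1)$-approximation for $\fdel$ to obtain a deletion set $X$ of size $\Oh_\fcal(k)$, declaring a NO-instance if the approximation certifies $\opt > k$. Since $G - X$ is $\fcal$-minor-free it has treewidth bounded by some $\eta = \eta(\fcal)$, and the protrusion decomposition machinery initiated by Bodlaender, Fomin, Lokshtanov, Penninkx, Saurabh, and Thilikos, and refined by Fomin, Lokshtanov, Misra, and Saurabh, can then partition $V(G) \setminus X$ into regions with small boundaries to $X$ and bounded treewidth. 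With a careful combinatorial argument this yields a \emph{near-protrusion decomposition} of $G$ whose parameters scale polynomially in $k$.

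Next, I would apply the paper's first novel technique to convert the near-protrusion decomposition into a true protrusion decomposition. Crucially, this step sacrifices a multiplicative factor of $2$ in the approximation ratio, which is precisely the source of the $2$-lossy guarantee in the theorem. The output is a partition of $V(G)$ into a core of size $\Oh_\fcal(k)$ and $\Oh_\fcal(k^a)$ genuine protrusions of bounded treewidth and bounded boundary size, for a small constant $a$.

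Then, I would reduce each protrusion by the paper's second novel technique, which replaces a protrusion by a small representative that is approximately equivalent with respect to the target problem. For \twetadel, the replacement is self-contained and yields a proper $2$-lossy kernel. For general $\fdel$, the replacement is produced with the aid of an oracle for \twetadel\ applied to a gadget encoding, yielding a $2$-lossy compression rather than a kernel. Tracking the sizes through this pipeline should give $\Oh_\fcal(k^5)$ vertices and $\Oh_\fcal(k^6)$ total bits.

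The main obstacle will be the uniform protrusion replacement. The non-uniform lower bound of Giannopoulou, Jansen, Lokshtanov, and Saurabh rules out an exact replacement scheme whose size depends on $\eta$ alone, so the factor-$2$ slack must be used \emph{essentially}; it cannot simply be absorbed into an exact kernel. I expect the argument to hinge on the observation that a $2$-lossy replacement only needs to preserve the protrusion's contribution to the optimum up to a factor of $2$, which allows one to classify protrusions into a bounded number of ``types'' parameterised by $\eta$ and the boundary size alone, rather than by the full description of $\fcal$.
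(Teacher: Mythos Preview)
Your pipeline has a genuine gap at the transition from near-protrusion to true protrusion decomposition. After the step you describe as ``convert the near-protrusion decomposition into a true protrusion decomposition by sacrificing a factor~$2$'' (this is the paper's Lossy Reduction Rule~\ref{lrr:protrusion} with $\epsilon=1$, which removes large clique-neighbourhoods in $X$), the number of protrusions is \emph{not} bounded by any uniform polynomial in $k$. Lemma~\ref{lem:protrusion-decomposition} only yields an $(\alpha,\Oh_\eta(1))$-protrusion decomposition for some unbounded~$\alpha$; your claim that one obtains ``$\Oh_\fcal(k^a)$ genuine protrusions'' at this point is exactly what the Giannopoulou--Jansen--Lokshtanov--Saurabh lower bound rules out in the exact setting, and the $2$-lossy slack spent in this reduction does nothing to reduce the \emph{count} of protrusions.

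The paper handles this differently. After Rule~\ref{lrr:protrusion}, protrusions are partitioned into \emph{simplicial} ones (whose neighbourhood is a clique in an augmented graph $G_{\flow}$) and \emph{non-simplicial} ones; only the non-simplicial family is uniformly bounded, by $\Oh_\eta(k^5)$ (Lemma~\ref{lem:non-clique}). The simplicial protrusions are then \emph{discarded entirely} and the remaining graph is output as an instance of \twetadel\ (Lossy Reduction Rule~\ref{lrr:two}); this is where the multiplicative factor~$2$ and the distinction between compression and kernel actually arise. The lifting for Rule~\ref{lrr:two} works because, given a \twetadel-solution $S'$ on the reduced instance, each simplicial protrusion has a clique neighbourhood in $G_{\flow}-S'$ that fits in a single bag, so $G-S'$ has treewidth $\Oh(\eta)$ and \fdel\ can be solved \emph{exactly} on it in polynomial time. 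Your final paragraph also misreads the second technique: the protrusion replacement of Lemma~\ref{lem:dichotomy:compress} is exact, not $2$-approximate; its purpose is to cope with disconnected graphs in $\fcal$, not to absorb approximation slack.
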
 

We can improve the approximation factor from 2 to $(1+\epsilon)$ for any $\epsilon >0$ by calling the oracle in several rounds.
The number of rounds, as well as the degree in the kernel size,
depend only on $\epsilon$.
Hence we obtain a uniform lossy compression protocol with approximation factor arbitrary close to 1 and with constant number of rounds.

\begin{restatable}[Uniform lossy protocol]{theorem}{lossyprotocol}
\label{thm:lossy-protocol}
Let \fcal\ be a finite family of graphs containing at least one planar graph.
  For any $\epsilon >0$, \fdel\ admits a $(1+\epsilon)$-lossy compression protocol of 
  call size $\Oh_{\mathcal{F}}(k^6 + k^{3r +1})$,
    and at most $1+r$ rounds, where $r=1+  \left \lceil  \log_{1+\epsilon} (\frac{1}{\epsilon}) \right \rceil$. 
 
   For the special case of \twetadel, there is a $(1+\epsilon)$-lossy kernelization protocol with same call size and number of rounds.
\end{restatable}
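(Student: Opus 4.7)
The plan is to use the 2-lossy compression of \cref{thm:lossy-kernel} as a black box and boost it via iterated oracle calls: the first round produces a crude 2-approximation, while each later round contracts the approximation gap by a factor of $(1+\epsilon)$, so that after $1+r$ rounds the gap shrinks from $1$ down to at most $\epsilon$.

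Round~$0$ applies \cref{thm:lossy-kernel} directly to $(G,k)$, compressing to size $\Oh_\fcal(k^6)$; a single oracle query recovers an optimal solution of the compressed instance, which lifts back to a solution $S_0$ of $(G,k)$ of cost at most $2\cdot\opt$ (or we declare failure at cost $k+1$ when $\opt>k$). This pins $\opt$ into the interval $[|S_0|/2,|S_0|]$ and sets up the invariant $|S_i|\le(1+\beta_i)\opt$ with $\beta_0=1$. In round $i\ge1$ we retighten by setting $\beta_i:=\beta_{i-1}/(1+\epsilon)$ and re-invoking \cref{thm:lossy-kernel} with the reduced budget $k_i:=\lceil|S_{i-1}|/(1+\beta_i)\rceil$. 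If the oracle certifies that the new compressed instance has no solution of size $\le k_i$, then by 2-lossiness the original has no such solution either, so $\opt>k_i$ and $S_{i-1}$ already satisfies the tighter invariant. Otherwise, the returned kernel solution lifts to a solution of the original graph whose size, after a balancing argument using the $(1+\beta_{i-1})$-approximation of $S_{i-1}$, is at most $(1+\beta_i)\opt$. After $r=1+\lceil\log_{1+\epsilon}(1/\epsilon)\rceil$ rounds, $\beta_r\le\epsilon$ and the final solution is $(1+\epsilon)$-approximate.

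The compression call in round $i$ has size $\Oh_\fcal(k_i^6)$, which is already $\Oh_\fcal(k^6)$ in the worst case, but bookkeeping the exponent blow-up when \cref{thm:lossy-kernel} is iterated on residual subinstances whose protrusion hierarchies may nest $\Oh(r)$ levels deep yields the claimed $\Oh_\fcal(k^6+k^{3r+1})$ call size.

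The main obstacle I anticipate is the refinement step itself: the raw 2-lossy guarantee only promises factor-$2$ fidelity, yet the invariant demands a genuine $(1+\epsilon)$ improvement per round. Overcoming this should require running the compression on a \emph{residual} instance whose true optimum equals the current gap $\beta_{i-1}\cdot\opt$ rather than $\opt$ itself; concretely, one identifies a core $C_i\subseteq S_{i-1}$ that is forced into every near-optimum solution (via a crown-style argument or a subsidiary oracle query on a small induced subproblem) and applies \cref{thm:lossy-kernel} to $G-C_i$ with budget $|S_{i-1}|-|C_i|$. The factor-$2$ loss then acts only on this gap, which converts into the desired $(1+\epsilon)$-factor on $\opt$. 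Verifying that this core-freezing step fits inside a single oracle round, that it truly localizes the approximation error onto the gap, and that the \twetadelsmall\ oracle is powerful enough to support the re-parameterization needed for general \fdel, constitute the key technical hurdles.
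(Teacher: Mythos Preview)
Your proposal has a genuine gap. The boosting scheme you describe---re-running the $2$-lossy kernel of \Cref{thm:lossy-kernel} with a tightened budget $k_i$---does not by itself improve the approximation factor: the $2$-lossy guarantee says only that a $\beta$-approximate solution to the compressed instance lifts to a $2\beta$-approximate solution to the original, and this factor $2$ is incurred \emph{regardless} of the budget $k_i$. So running it again with a smaller $k_i$ still returns at best a $2$-approximation (or tells you $\opt>k_i$, which does not help tighten the gap below $2$). You correctly identify this as the ``main obstacle'' and propose a \emph{core-freezing} workaround, but this is where the real content of the theorem lies, and it is left as a conjecture rather than a proof. There is no known crown-style argument that extracts, from a $2$-approximate solution $S_{i-1}$, a core $C_i$ forced into every near-optimal solution with $|S_{i-1}\setminus C_i|\le O(\beta_{i-1}\cdot\opt)$; if such a mechanism existed, it would likely yield a polynomial-time $(1+\epsilon)$-approximation for \twetadel\ outright.

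The paper's proof is structurally different and does \emph{not} treat \Cref{thm:lossy-kernel} as a black box. It re-opens the near-protrusion decomposition of \Cref{lem:near-protrusion}, applies \Cref{lrr:protrusion} with the given $\epsilon$, and then makes all its oracle calls on the \emph{root bag} $G_\flow[P_0]$ (of size $\Oh_\eta(k^3)$), not on the whole graph. The iteration (\Cref{lem:partition-mod}) asks for a \twetadel\ solution $S_j$ inside the previous solution, producing nested sets $P_0\supseteq S_1\supseteq S_2\supseteq\cdots$; it halts either when $|S_j|$ fails to shrink by a $(1+\epsilon)$ factor (so deleting $Q_0:=S_{j-1}$ is $(1+\epsilon)$-lossy) or after $r$ rounds, when $|S_r|\le\epsilon\cdot\opt$. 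After removing $Q_0$, the remaining $P_0\setminus Q_0$ is partitioned into $r$ pieces each of treewidth $\le\eta$, so the number of cliques in $G_\flow[P_0\setminus Q_0]$---and hence the number of distinct simplicial-protrusion neighbourhoods---is at most $\prod_{i=1}^r \Oh_\eta(k^3)=\Oh_\eta(k^{3r})$. \emph{That} is the origin of the $k^{3r+1}$ call size, not ``nested protrusion hierarchies''. With the simplicial parts grouped, \Cref{lem:dichotomy:compress} compresses the graph to the stated size and a single final oracle call finishes.
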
 

The proofs of Theorems \ref{thm:lossy-kernel} and \ref{thm:lossy-protocol} rely on ``protrusion techniques''~\cite[\S 16]{fomin2019kernelization}.
An~{\em $r$-protrusion} is a vertex subset $X \sub V(G)$ such that (i) treewidth of $G[X]$ is at most $r$ and (ii) $X$ contains at most $r$ vertices with a neighbor outside $X$.
A graph $G$ admits an {\em $(\alpha, \beta)$-protrusion decomposition} if there is a vertex subset $P_0 \sub V(G)$ of size $\le \alpha$ so that $G-P_0$ can be covered by $\alpha$ many $\beta$-protrusions (see \Cref{sec:prelims} for formal definitions).
In the proofs we transform so-called {\em near-protrusion decompositions}~\cite{DBLP:conf/focs/FominLMS12} into true protrusion decompositions with bounded parameters $\alpha, \beta$ by sacrificing small accuracy loss.
The known techniques can compress such structures in a uniform fashion as long as all graphs in the family $\fcal$ are connected. 
To tackle the remaining cases, we prove the following lemma.
Here, $\opt_\fcal(G)$ denotes the minimum size of a solution to \fdel\ in $G$, i.e., an $\fcal$-deletion set.

\begin{lemma}[Handling disconnected forbidden minors]\label{thm:dichotomy-intro}
    Let $\cal F$ be a finite family of graphs. 
    There is an algorithm that, given a graph $G$
    with an $(\alpha, \beta)$-protrusion decomposition and integer $k$, runs in time $\Oh_{\fcal,\beta}(|V(G)|)$ and
    outputs a graph $G'$ with $\Oh_{\fcal,\beta}(\alpha + k)$ vertices
    such that $\min(\opt_\fcal(G), k+1) = \min(\opt_\fcal(G'), k+1)$.  
\end{lemma}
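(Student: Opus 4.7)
\emph{Proof plan.} The plan is to apply a protrusion replacement procedure, refined to handle disconnected graphs in $\fcal$. For each $\beta$-protrusion $X_i$ of the given decomposition, the goal is to substitute it by a small canonical gadget that preserves how $X_i$ contributes to hitting minors from $\fcal$, while a separate global gadget accounts for effects that do not fit inside a single protrusion.

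First, I set up an equivalence relation on $\beta$-boundaried graphs. Classically, two such boundaried graphs are considered equivalent if, for every outer boundaried graph glued along the boundary, the resulting $\fdel$ instances have the same optimum up to a fixed additive offset; for $\fcal$ consisting of connected graphs, this equivalence has finite index depending only on $\fcal$ and $\beta$, enabling standard protrusion replacement (see \cite[\S 16]{fomin2019kernelization}). To handle disconnected members of $\fcal$, I would enrich the class by a \emph{profile vector}: for every connected graph $C$ that arises as a component of some $H \in \fcal$, record the maximum number of pairwise vertex-disjoint minor models of $C$ realized entirely in the ``deep'' interior of the boundaried graph (i.e., avoiding the boundary and its neighborhood). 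Capping every coordinate at $k+2$ keeps the number of enriched classes finite as a function of $\fcal$ and $\beta$ only, and the class of each $X_i$ can be computed in $\Oh_{\fcal,\beta}(|X_i|)$ time by dynamic programming on the treewidth-$\beta$ decomposition of $X_i$.

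Next, replace each $X_i$ by a canonical representative of its enriched class, which has size $\Oh_{\fcal,\beta}(1)$. If one literally encoded the full profile vector inside each representative, the representatives could have $\Omega(k)$ vertices, giving a total bound of $\Omega(\alpha k)$. To avoid this, I would split the stored information into a ``boundary part,'' kept inside each per-protrusion gadget, and a ``deep part,'' aggregated globally: for every component $C$, once $\sum_i m_i^C$ reaches $k+1$ any further deep copies can be discarded, since they cannot influence $\min(\opt_\fcal,k+1)$. Concretely, outside the $\alpha$ boundary gadgets, I attach a collection of disjoint single-component protrusions realizing up to $k+1$ isolated copies of each relevant $C$, contributing $\Oh_{\fcal}(k)$ vertices in total. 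Summing gives $\Oh_{\fcal,\beta}(\alpha) + \Oh_{\fcal}(k)=\Oh_{\fcal,\beta}(\alpha+k)$ vertices, as required.

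The main obstacle will be correctness, i.e.\ showing $\min(\opt_\fcal(G), k+1) = \min(\opt_\fcal(G'), k+1)$. Each minor model of some $H\in \fcal$ decomposes, by vertex-disjointness of branch sets, into (i) components whose branch sets touch the central part $P_0$ or a boundary of some protrusion, which are preserved under the boundary equivalence in both directions, and (ii) components living entirely in the deep interior of a single protrusion, which are tracked by the profile vector together with the global dummy protrusions. Vertex-disjointness across components is automatic in $G'$ since the single-component dummy protrusions are pairwise disjoint from each other and from the boundary gadgets; the only delicate point is to verify that the global cap at $k+1$ is lossless with respect to $\min(\cdot,k+1)$, which follows because any configuration of $k+1$ vertex-disjoint $H$-models in $G'$ witnesses $\opt_\fcal(G') \ge k+1$ and pulls back (via the equivalences) to an analogous witness in $G$, and conversely.
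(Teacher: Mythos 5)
There is a genuine gap in your approach, rooted in the way you aggregate the ``deep part'' of the profile into global dummy protrusions.

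Your profile vector records, for each connected graph $C$ arising as a component of some $H\in\fcal$, the maximum number $m_i^C$ of pairwise disjoint deep minor models of $C$ inside protrusion $X_i$. But these are \emph{per-component} packing numbers, so they do not capture whether different components $C_1\ne C_2$ can be realized \emph{disjointly} inside the same $X_i$. When you then instantiate the global pool as a disjoint collection of single-component gadgets (one for each $C$, with multiplicity capped at $k+1$), you make all these models automatically pairwise disjoint in $G'$, which can create forbidden minors that $G$ does not have. Concretely, take $\fcal=\{K_4+C_5\}$ and a graph $G$ whose only protrusion $X_1$ contains a $K_4$-model and a $C_5$-model that \emph{share vertices} (so no disjoint pair exists anywhere in $G$), and no $K_4$- or $C_5$-minor elsewhere. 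Then $K_4+C_5\not\le_m G$ and $\opt_\fcal(G)=0$, but your profile reads $m_1^{K_4}=m_1^{C_5}=1$, so $G'$ receives a disjoint dummy $K_4$ and dummy $C_5$ and has $K_4+C_5$ as a minor, giving $\opt_\fcal(G')\ge 1$. This already violates $\min(\opt_\fcal(G),k+1)=\min(\opt_\fcal(G'),k+1)$. More generally, the $h$-folio of the interior of a protrusion (which records disconnected small minors as wholes, including $K_4+C_5$) carries strictly more information than the vector of per-component packing numbers, and your encoding collapses non-equivalent protrusions into the same class.

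The paper's argument sidesteps exactly this issue, and does so in a way that is conceptually quite different from your proposal. Rather than trying to encode deep packings numerically, it first \emph{modifies the decomposition} to enforce a global ``dichotomy property'': for each connected $F$ on at most $h$ vertices, either $F$ is not a minor of $G-P_0$ at all (achieved by absorbing a small $F$-hitting set into $P_0$, via the packing/covering duality of Raymond--Thilikos), or $F$ has $k+h$ private protrusions that each contain an $F$-model. Given this property, it shows (Lemma~\ref{lem:dichotomy:bounded}) that some \emph{optimal} $\fcal$-deletion set intersects each protrusion in $\Oh_{\fcal,\beta}(1)$ vertices: whenever a solution deletes many vertices from some $P_j$, one replaces that part by boundary sets of private protrusions, and if a disconnected $H\in\fcal$ then tried to place a component in $P_j$, the privacy/surplus of unused private protrusions lets one shift that component elsewhere, contradicting minimality of the model. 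Only \emph{after} this structural step does the paper apply per-protrusion replacement (Lemma~\ref{lem:prelim:replace}), which preserves the full $h$-folio of each $G[P_i]$ --- not merely per-component packing numbers --- so joint realizability inside a protrusion is kept exactly, and the dichotomy property is an invariant of the replacement. No global pool of dummy gadgets is introduced. Your scheme is missing both (a) a mechanism to guarantee that optimal solutions use few vertices per protrusion, which is needed to apply any finite-index replacement, and (b) an encoding that preserves joint realizability of multiple components inside a single protrusion; the dichotomy property addresses (a) and the $h$-folio preservation addresses (b).
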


We remark that the factor hidden in the $\Oh(\cdot)$-notation is a computable function of $(\fcal,\beta)$.
See \Cref{lem:dichotomy:compress} for a full statement tailored for lossy kernelization.
As a corollary of \Cref{thm:dichotomy-intro} we obtain linear kernels on graph classes where one can compute an $(\Oh(k), \Oh(1))$-protrusion decomposition.
The most general cases where such a construction is known are classes with excluded topological minor~\cite{kim15linear}.
Whereas the known linear kernels for \fdel\ require all the graphs in $\fcal$ to be connected~\cite{DBLP:journals/jacm/BodlaenderFLPST16, DBLP:journals/siamcomp/FominLST20, DBLP:journals/siamdm/GarneroPST15,  kim15linear}, we are able to
 drop this assumption.

\begin{restatable}[Linear kernel on sparse graphs]{theorem}{thmSparse}
\label{thm:sparse}
    Let $H$ be a graph and $\mathcal{F}$ be a finite family of graphs containing at least one planar graph. 
    Then \fdel\ admits a linear kernel on $H$-topological-minor-free graphs.
\end{restatable}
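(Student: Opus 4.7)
The plan is to reduce to Lemma~\ref{thm:dichotomy-intro} via the known protrusion-decomposition machinery on sparse graphs. Lemma~\ref{thm:dichotomy-intro} is precisely the ingredient that was missing from the framework of Kim et al.~\cite{kim15linear} when $\fcal$ contains disconnected graphs: it compresses any $(\alpha,\beta)$-protrusion decomposition in a manner oblivious to the connectivity of the forbidden minors, yielding an equivalent instance with $\Oh_{\fcal,\beta}(\alpha + k)$ vertices.

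First, I would run a constant-factor approximation algorithm for \fdel\ on the $H$-topological-minor-free input $(G,k)$ (available on such classes via the standard bounded-treewidth-modulator framework, since any $\fcal$-deletion set leaves a graph of treewidth $\eta(\fcal)=\Oh_{\fcal}(1)$). If the approximation certifies $\opt_\fcal(G) > c \cdot k$ for the associated constant $c=c(H,\fcal)$, we return a trivial NO-instance of constant size. Otherwise, we obtain an $\fcal$-deletion set $S \sub V(G)$ with $|S| = \Oh(k)$, so $G-S$ has treewidth at most $\eta(\fcal)$.

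Second, I would feed $(G,S)$ to the protrusion-decomposition construction of~\cite{kim15linear}: on an $H$-topological-minor-free graph equipped with a treewidth-$\Oh_\fcal(1)$ modulator $S$ of size $\Oh(k)$, this construction outputs in linear time an $(\Oh(k), \Oh_{H,\fcal}(1))$-protrusion decomposition of $G$. Applying Lemma~\ref{thm:dichotomy-intro} to this decomposition then produces, in linear time, a graph $G'$ with $\Oh_{H,\fcal}(k)$ vertices satisfying $\min(\opt_\fcal(G), k+1) = \min(\opt_\fcal(G'), k+1)$, so $(G',k)$ is the desired linear kernel for \fdel.

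The main obstacle will be verifying that the protrusion-decomposition step of~\cite{kim15linear} can be extracted cleanly from a pipeline that was originally designed under the connectivity assumption on $\fcal$. Their proof interleaves the decomposition construction with protrusion replacement, so one has to argue that the pre-replacement part is oblivious to $\fcal$ beyond needing a bounded-treewidth modulator and the excluded-topological-minor structure of the host class; the connectivity of $\fcal$ is first used only when replacing each protrusion via representative sets, and precisely that step we replace wholesale by Lemma~\ref{thm:dichotomy-intro}. Everything else is either a standard approximation call or a black-box invocation of a previously established decomposition algorithm, so the remaining work is essentially bookkeeping.
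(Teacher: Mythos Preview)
Your proposal is correct and follows essentially the same route as the paper: approximate an $\fcal$-deletion set, invoke the protrusion-decomposition construction on $H$-topological-minor-free graphs (stated in the paper as \Cref{lem:prelim:topological}), and compress via \Cref{thm:dichotomy-intro}/\Cref{lem:dichotomy:compress}. The one point you omit is that, for the output to be a \emph{kernel} rather than merely a compression, $G'$ must remain $H$-topological-minor-free; the paper secures this by invoking the extra clause of \Cref{lem:prelim:replace} guaranteeing that protrusion replacement preserves exclusion of a fixed topological minor.
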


\subparagraph{Organization of the paper.}
We begin with an informal exposition of our main technical ideas in \Cref{sec:overview}.
The most of the remaining space is devoted to \Cref{thm:dichotomy-intro} (\Cref{sec:disconnected}) which is too technical to be covered in the overview.
The proofs of lemmas marked with $(\bigstar)$ are postponed to the appendix, as well as the proofs of the main theorems.

\section{Preliminaries}\label{sec:prelims}
For any object $F$, the notation $\Oh_F(\cdot)$ hides factors depending on $F$.
For $p \in \mathbb{N}$ we denote $[p] = [1,p] =\{1, \ldots,p\}$.

\begin{definition}\label{def:prelim:lca}
Let $T$ be a rooted tree and $S \subseteq V(T)$ be a set of vertices in $T$. 
We define the least common ancestor of (not necessarily distinct) $u, v \in V(T)$, denoted as {$\mathsf{LCA}(u, v)$}, to be the deepest node $x$ which is an ancestor of both $u$ and $v$.
The LCA closure of $S$ is the set
\[
\overline{\mathsf{LCA}}(S) = \{\mathsf{LCA}(u, v): u, v \in S\}.
\]
\end{definition}

\begin{lemma}[{\cite[Lem. 9.26, 9.27, 9.28]{fomin2019kernelization}}]\label{lem:prelim:lca}
Let $T$ be a rooted tree, $S \subseteq V(T)$ be non-empty, and $L = \overline{\mathsf{LCA}}(S)$. {All of the following hold.}
\begin{enumerate}
    \item Each connected component $C$ of $T - L$ satisfies $|N_T(C)| \le 2$.
    \item $|L| \le 2\cdot |S| - 1$.
    \item $\overline{\mathsf{LCA}}(L) = L$.
\end{enumerate}
\end{lemma}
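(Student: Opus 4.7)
The plan is to prove the three items separately. I would start with (3), which follows from elementary algebraic properties of LCA, then (2) by induction, and finally (1) by a short argument using the defining property of $L$.

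For (3), the inclusion $L \sub \overline{\mathsf{LCA}}(L)$ is immediate from $\mathsf{LCA}(u,u) = u$. For the converse, given $u, v \in L$, I would write $u = \mathsf{LCA}(s_1, s_2)$ and $v = \mathsf{LCA}(s_3, s_4)$ with $s_i \in S$. If $u, v$ are comparable in the ancestor order, $\mathsf{LCA}(u,v) \in \{u,v\} \sub L$; otherwise I would show $\mathsf{LCA}(u,v) = \mathsf{LCA}(s_1, s_3) \in L$, by checking that $\mathsf{LCA}(s_1, s_3)$ cannot be a proper descendant of $u$ (a proper descendant would lie in a single child-subtree of $u$ yet would have to be an ancestor of $s_3$, which sits outside $u$'s subtree since $v$ is incomparable with $u$), and symmetrically for $v$; a depth comparison with $\mathsf{LCA}(u,v)$ then forces equality.

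For (2), I would induct on $|S|$; the base $|S| = 1$ is trivial. For the step, pick $s$ to be a deepest element of $S$ and set $S' = S \sm \{s\}$, $L' = \overline{\mathsf{LCA}}(S')$. The choice of $s$ guarantees that no element of $S'$ is a descendant of $s$, so $\mathsf{LCA}(s, v)$ is a strict ancestor of $s$ for every $v \in S'$; the set $P = \{\mathsf{LCA}(s, v) : v \in S'\}$ is therefore a chain of strict ancestors of $s$ and admits a unique deepest element $p$. The key step will be to prove that every $q \in P$ strictly above $p$ already lies in $L'$: writing $p = \mathsf{LCA}(s, v_p)$ and $q = \mathsf{LCA}(s, v_q)$, I claim $q = \mathsf{LCA}(v_p, v_q)$. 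That $\mathsf{LCA}(v_p, v_q)$ is a descendant of $q$ follows because $q$ is a common ancestor of $v_p$ (being an ancestor of $p$) and of $v_q$; ruling out a strict descent amounts to observing that any deeper common ancestor of $v_p, v_q$ would have to sit on the $s$-side of $q$ (to reach $v_p$ via $p$) and on the $v_q$-side of $q$ simultaneously, which is impossible. This gives $L \sm L' \sub \{s, p\}$, hence $|L| \le |L'| + 2 \le 2|S| - 1$.

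For (1), fix a component $C$ of $T - L$. Since $C$ is a connected subtree of the rooted tree $T$, it has at most one parent-neighbor in $L$ (the parent of its topmost vertex), so it suffices to show $C$ has at most one child-neighbor. If $a, b \in L$ were two distinct children of vertices $c_a, c_b \in C$ respectively, I would pick $s_a, s_b \in S$ lying in the subtrees of $a$ and $b$ (which exist because every element of $L$, as an LCA of two members of $S$, has descendants in $S$). A routine depth argument yields $\mathsf{LCA}(s_a, s_b) = \mathsf{LCA}(c_a, c_b)$; but $c_a, c_b$ lie in the connected subtree $C$, so the unique $T$-path between them, and in particular their LCA, is contained in $C$, contradicting $\mathsf{LCA}(s_a, s_b) \in L$. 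I expect the inductive argument for (2) to be the most delicate, since verifying $q = \mathsf{LCA}(v_p, v_q)$ requires carefully distinguishing the two branches of $q$ containing $s$ and $v_q$; the other claims reduce to short tree manipulations once (2) is in place.
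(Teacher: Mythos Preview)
Your proposal is correct. Note, however, that the paper does not actually supply a proof of this lemma: it is quoted verbatim from \cite[Lem.~9.26, 9.27, 9.28]{fomin2019kernelization} and used as a black box. Your argument follows the standard textbook route --- idempotency via the algebraic identity $\mathsf{LCA}(u,v)=\mathsf{LCA}(s_1,s_3)$ for incomparable $u,v$, the size bound by removing a deepest element and showing the closure grows by at most two, and the neighbour bound by exhibiting an element of $L$ inside any component with two child-neighbours --- which is exactly how the cited source proceeds. One small point worth making explicit in part~(1): before invoking $\mathsf{LCA}(s_a,s_b)=\mathsf{LCA}(c_a,c_b)$ you should observe that $a$ and $b$ are necessarily incomparable, since otherwise the $T$-path between $c_a$ and $c_b$ would pass through one of $a,b\in L$, contradicting that $c_a,c_b$ lie in the same component of $T-L$.
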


\subparagraph*{Protrusions.} 
For any $\beta \in \mathbb{N}$,  
a \emph{$\beta$-protrusion} in a graph $G$ is a vertex set $X \subseteq V(G)$ 
such that the treewidth of $G[X]$ is at most $\beta$ and 
$|\partial_G(X)| \leq \beta$, where $\partial_G(X) = N_G(V(G) \sm X)$.
When $G$ is clear from the context, we drop the subscript.

\begin{definition}
For $\alpha,\beta \in \mathbb{N}$, 
an $(\alpha,\beta)$-protrusion decomposition of a graph $G$ is a partition of the vertex set $V(G)=(P_0, P_1, \ldots, P_{\ell})$ such that:

\begin{enumerate}
    \item $\max\{|P_0|,\, \ell\} \leq \alpha$, and
    \item for each $i \in [1,\ell]$, $N[P_i]$ is a $\beta$-protrusion, and
    \item for each $i \in [1, \ell]$, $N(P_i) \subseteq P_0$.
\end{enumerate}
We refer to $P_0$ as the {\em root bag} of the decomposition.
{A protrusion decomposition is called {\em nice} if for each $i \in [1,\ell]$ it holds that $|N(P_i)| \le \beta$.}
\end{definition}

\begin{lemma}\label{lem:prelims:protrusion-neighborhood}
    An $(\alpha,\beta)$-protrusion decomposition $(P_0, P_1, \ldots, P_{\ell})$ of $G$ can be transformed in linear time into a nice $(\alpha,\beta)$-protrusion decomposition $(P'_0, P'_1, \ldots, P'_{\ell})$ such that for each $i \in [1, \ell]$ it holds that $P_i \subseteq P'_i$.
\end{lemma}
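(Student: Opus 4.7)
The plan is to enlarge each part $P_i$ ($i \ge 1$) by absorbing the ``interior'' vertices of $N[P_i]$, i.e., the vertices of $N(P_i) \subseteq P_0$ all of whose neighbors lie inside $N[P_i]$. Concretely, for each $i \in [1,\ell]$ I define
\[ I_i = \{v \in N(P_i) : N(v) \subseteq N[P_i]\}, \]
put $P_i' = P_i \cup I_i$, and $P_0' = P_0 \setminus \bigcup_{i \ge 1} I_i$. A first observation is that the $I_i$'s are pairwise disjoint: any $v \in I_j$ satisfies $N(v) \subseteq N[P_j] = P_j \cup N(P_j) \subseteq P_j \cup P_0$, which rules out a neighbor in $P_i$ for $i \ne j$, so $v \notin I_i$. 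Hence $(P_0', P_1', \ldots, P_\ell')$ is still a partition of $V(G)$ with $P_i \subseteq P_i'$ and $|P_0'| \le |P_0| \le \alpha$.

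The key step will be to prove the inclusion $N(P_i') \subseteq \partial(N[P_i])$, which automatically delivers the desired ``nice'' bound $|N(P_i')| \le \beta$. Fix $u \in N(P_i')$ and a witnessing neighbor $w \in P_i'$. If $w \in P_i$, then $u \in N(P_i) \subseteq P_0$, and $u \notin I_i$ (because $u \in P_0'$), so by definition of $I_i$ the vertex $u$ has some neighbor outside $N[P_i]$, whence $u \in \partial(N[P_i])$. If instead $w \in I_i$, then $N(w) \subseteq N[P_i]$ forces $u \in N[P_i] \setminus P_i \subseteq N(P_i)$, and again $u \notin I_i$ gives $u \in \partial(N[P_i])$. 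As a byproduct, $N(P_i') \subseteq N(P_i) \subseteq P_0$; furthermore, any neighbor of $I_j$ lies in $N[P_j]$ and therefore cannot sit in $P_i'$ when $j \ne i$, so $N(P_i') \subseteq P_0'$.

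The treewidth and boundary conditions of the protrusion then follow routinely: $N[P_i'] \subseteq N[P_i]$, so $G[N[P_i']]$ is an induced subgraph of $G[N[P_i]]$ and inherits treewidth at most $\beta$, and $|\partial(N[P_i'])| \le |N(P_i')| \le \beta$. For the linear-time implementation, I would compute, in one pass over the edges, the sets $S_v = \{i \ge 1 : v \text{ has a neighbor in } P_i\}$ for each $v \in P_0$; then $v$ falls into some $I_i$ exactly when $S_v = \{i\}$ and every neighbor $u$ of $v$ in $P_0$ satisfies $i \in S_u$, a check executed in $O(\deg(v))$ time with each $S_u$ stored as a hash set.

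The only mild subtlety I anticipate is the second branch of the key claim, where a new edge $u \sim w \in I_i$ might at first glance appear to create a fresh boundary vertex absent from $\partial(N[P_i])$; the rescue is that interior vertices already have all their neighbors inside $N[P_i]$, so such a $u$ must have been in $N(P_i)$ to begin with. Beyond that, the argument is essentially bookkeeping and should go through cleanly.
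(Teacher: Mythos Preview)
Your approach is essentially the same as the paper's: both move the ``interior'' vertices $I_i = N(P_i)\setminus\partial(N[P_i])$ from $P_0$ into $P_i$, the paper phrasing this as an exhaustive one-vertex-at-a-time process while you do it in a single batch and verify the invariants more explicitly. One tiny imprecision worth fixing: you justify $u\notin I_i$ by ``because $u\in P_0'$'', which has not yet been established at that point; the direct reason is simply $u\notin P_i'=P_i\cup I_i$.
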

\begin{proof}
    Fix $i \in [1, \ell]$. By definition, we have that $|\partial(N[P_i])| = |N(V(G) \sm N[P_i])| \le \beta$. Suppose that $v \in N(P_i) \sm \partial(N[P_i])$.
    Then $N(v) \sub N[P_i]$  
    and moving $v$ from $P_0$ to $P_i$ does not alter the set $N[P_i]$, hence $N[P_i] \cup \{v\}$
    remains a $\beta$-protrusion.
    After performing such an operation exhaustively for each $i \in [1, \ell]$ and $v \in N(P_i) \sm \partial(N[P_i])$, we obtain a nice $(\alpha,\beta)$-protrusion decomposition.
\end{proof}

\subparagraph*{Boundaried graphs.}
A $t$-boundaried graph is a triple ${\bf H} = (H,B,\lambda)$ where $H$ is a graph, $B \sub V(H)$ is of size $t$ 
and $\lambda \colon [t] \to B$ is a bijection representing a labeling function. 
We refer to the set $B$ as $\partial {\bf H}$ and call it the {\em boundary} of ${\bf H}$. 
By $V({\bf H})$ we refer to $V(H)$.
Given two $t$-boundaried graphs ${\bf H_1} =(H_1,B_1,\lambda_1)$ and ${\bf H_2} =(H_2,B_2,\lambda_2)$,
${\bf H_1}   \oplus   {\bf H_2}$ denotes the graph $H$ obtained by first taking the disjoint union of $H_1$ and $H_2$ and then identifying each vertex $u_1$ in $B_1$ with $u_2 \in B_2$ such that $\lambda_1^{-1}(u_1)= \lambda_2^{-1}(u_2)$.

For any $h \in \mathbb{N}$
and a graph $G$, {\em $h$-{folio}$(G)$} is the set of all graphs on at most $h$ vertices which appear in $G$ as a minor.

\begin{restatable}[$\bigstar$]{lemma}{protrusionreplacer}\label{lem:prelim:replace}
    There is a computable function $\gamma \colon \nn \times \nn \times \nn \to \nn$ so that the following holds.
    Let  $h, r, d \in \nn$ and $\fcal$ be a family of graphs where each graph has at most $h$ vertices.
    Next, for a graph $G$ and an $r$-protrusion $X$, let us denote ${\bf H} = (G[X], \partial_G(X), \lambda)$, where $\lambda$ is some labeling function, and ${\bf H^*}=(G[N[V(G) \setminus X]] ,  \partial_G(X), \lambda)$.
    Note that $G= {\bf H} \oplus {\bf H^*}$.
    
    Then there is an algorithm that, given $G$ and $X$, 
    runs in time $\Oh_{h,r,d}(|X|)$ and returns a graph $G'$ with the following properties.

    \begin{enumerate}
        \item $G'$ is obtained from $G$ by replacing the boundaried graph ${\bf H}$ with a $|\partial_G(X)|$-boundaried graph ${\bf \widehat H}$, that is $G'= {\bf \widehat{H}} \oplus {\bf H^*}$
        such that:  
        \begin{itemize}
            \item $h$-{\em folio}$({\bf H} - \partial {\bf H}) = h$-{\em folio}$({\bf {\widehat  H} } - \partial {\bf {\widehat  H} } ) $  
            \item $|V({\bf \widehat H })| \le \gamma(h,r,d)$.
        \end{itemize}
        \item Given an \fcal-deletion set $S'$ in $G'$ such that $|S' \cap V({\bf \widehat H })| \le d$, one can in polynomial time find an \fcal-deletion set $S$ in $G$ of size at most $|S'|$. 
        And vice versa, an \fcal-deletion set $S$ in $G$ such that $|S \cap V({\bf H})| \leq d$ can be lifted to an \fcal-deletion set $S'$ of size at most $|S|$ in $G'$.     
    \end{enumerate}
    {Additionally, suppose that $Q$ is a fixed graph and $G$ is $Q$-topological-minor free.
    Then the algorithm outputs a graph $G'$ that is also $Q$-topological-minor free in time $\Oh_{h,r,d,Q}(|X|)$.}

\end{restatable}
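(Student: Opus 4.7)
The plan is to follow the classical protrusion-replacement recipe (in the style of Bodlaender et al.\ and Fomin–Lokshtanov–Misra–Saurabh) while explicitly tracking the $h$-folio of the interior, which is what allows the technique to work when graphs in $\fcal$ may be disconnected. Fix $t\le r$ and consider the class $\mathcal{G}_{t,r}$ of $t$-boundaried graphs of treewidth at most $r$. I would define an equivalence relation $\equiv$ on $\mathcal{G}_{t,r}$ by declaring $\mathbf{H}_1\equiv\mathbf{H}_2$ iff
\begin{enumerate}
\item $h$-\emph{folio}$(\mathbf{H}_1-\partial\mathbf{H}_1)=h$-\emph{folio}$(\mathbf{H}_2-\partial\mathbf{H}_2)$, and
\item there is a constant $c=c(\mathbf{H}_1,\mathbf{H}_2)$ such that for every $t$-boundaried graph $\mathbf{H}^\star$ and every $s\in\{0,1,\dots,d\}$, the minimum size of an $\fcal$-deletion set in $\mathbf{H}_i\oplus\mathbf{H}^\star$ that uses at most $s$ vertices from $V(\mathbf{H}_i)\setminus\partial\mathbf{H}_i$ differs between $i=1$ and $i=2$ by exactly $c$ (truncated at $k+1$).
\end{enumerate}
Condition (1) handles disconnected minors, since a minor $F\in\fcal$ appearing in $\mathbf{H}\oplus\mathbf{H}^\star$ intersects the interior in at most $h$ vertices, so only the $h$-folio of $\mathbf{H}-\partial\mathbf{H}$ matters for that piece. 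Condition (2) is the truncated finite-integer-index condition that ensures solution lifting.

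I would then argue that $\equiv$ has a number of classes bounded by a computable function of $h,r,d$. The $h$-folio takes only finitely many values. For condition (2), minor containment of each $F\in\fcal$ is CMSO-expressible; combined with the bound $r$ on treewidth, a Myhill–Nerode/Courcelle-style argument shows that the ``signature'' of a boundaried graph is determined by a finite state computable via standard dynamic programming on a tree decomposition of width $r$. Hence there are finitely many classes. For each non-empty class $\mathcal{C}$ I would select a representative $\mathbf{R}(\mathcal{C})$ of minimum size and set $\gamma(h,r,d)$ to be the maximum such size over all classes; this maximum is finite and computable since the set of signatures is finite and each class contains a small member (if a putative ``smallest'' member were large, one could locate a subprotrusion whose replacement yields a strictly smaller equivalent graph, contradicting minimality). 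To make lifting additively tight I would normalize each class so that its canonical representative realizes the minimum of the constant $c$ over the class, making the offset $0$. Given $\mathbf{H}$, the equivalence class is computed in time $\Oh_{h,r,d}(|X|)$ via the same DP on a width-$r$ tree decomposition of $G[X]$ (which can be obtained in linear time by the Bodlaender algorithm), and $\mathbf{H}$ is replaced by $\mathbf{R}$ of its class, giving the graph $G'$. Property~1 is immediate from the definition of $\equiv$, and property~2 follows because condition~(2) allows one to convert any solution $S'$ in $G'$ with $|S'\cap V(\widehat{\mathbf{H}})|\le d$ into an equally-sized solution $S$ in $G$ with $|S\cap V(\mathbf{H})|\le d$ (and symmetrically).

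For the topological-minor addendum, I would restrict the canonical representative to the subclass of $Q$-topological-minor-free $t$-boundaried graphs. Since this subclass is closed under the substitution operations used in the argument, the same finiteness and size bound arguments apply, yielding a $Q$-topological-minor-free $\widehat{\mathbf{H}}$; consequently $G'=\widehat{\mathbf{H}}\oplus\mathbf{H}^\star$ remains $Q$-topological-minor-free. The main technical obstacle is establishing that $\gamma(h,r,d)$ is a \emph{computable} bound: this requires giving a fully constructive description of the finite set of signatures (combining the enumeration of possible $h$-folios with an explicit CMSO formula for bounded-size $\fcal$-deletion truncated at $d$), and then exhibiting an explicit ``shrink-one-protrusion'' procedure that terminates in bounded size, rather than appealing to a non-constructive finiteness argument.
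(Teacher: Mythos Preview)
Your high-level plan is essentially the paper's: define an equivalence on $t$-boundaried graphs of treewidth $\le r$ that refines both the $h$-folio of the interior and the ``solution-lifting'' behaviour, show it has finitely many classes bounded by a computable function of $(h,r,d)$, argue small representatives exist, and locate one by repeatedly replacing a subprotrusion of threshold size by a smaller equivalent one (the DP-friendly ``shrink-one-protrusion'' step). The paper carries this out with an \emph{explicit encoder}: it records, for every graph $F$ on at most $h+r$ vertices and every partition $\mathcal A$ of a subset of the boundary labels, whether an $(F,\mathcal A)$-rooted minor survives after deleting a set $S$, and defines $f_{\mathbf H}(R)$ as the minimum $|S|$ realizing a given collection $R$ of such pairs; two graphs are equivalent when their functions $f_{\mathbf H}$ (capped at $d$) and their interior $h$-folios coincide. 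This replaces your CMSO/Courcelle appeal by a concrete combinatorial signature, which is exactly what resolves the computability concern you flag as ``the main technical obstacle''.

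Two points in your formulation need fixing. First, casting condition~(2) as finite-integer-index with an additive offset $c$ and then normalizing is a detour: the paper (and the overview of \S\ref{sec:overview}) stresses that \fdel\ with disconnected $\fcal$ does \emph{not} have finite integer index, and your ``truncated at $k{+}1$'' drags in a parameter $k$ that is absent from the lemma and would make $\gamma$ depend on it. With the cap $|S\cap V(\mathbf H)|\le d$ the right relation is simply \emph{equality} of the signature function, as the paper does. Second, your topological-minor addendum has a gap: choosing $\widehat{\mathbf H}$ itself to be $Q$-topological-minor-free does not guarantee that $G'=\widehat{\mathbf H}\oplus\mathbf H^\star$ is. The paper instead \emph{refines} the equivalence by additionally requiring $\mathbf H_1\oplus \mathbf G^\star\in\mathcal G \Leftrightarrow \mathbf H_2\oplus \mathbf G^\star\in\mathcal G$ for all $\mathbf G^\star$ (which has bounded index by~\cite{DBLP:journals/siamdm/GarneroPST15}); then equivalence of $\mathbf H$ and $\widehat{\mathbf H}$ transfers $Q$-freeness of $G$ to $G'$.
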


The proof of Lemma~\ref{lem:prelim:replace} appears in Section~\ref{sec:computable-protrusion-repalcer} for completeness and follows the proof of~\cite[Theorem~$1.1$]{DBLP:journals/siamdm/GarneroPST15}. 
Note that the key highlight of Lemma~\ref{lem:prelim:replace} is that the function $\gamma$ is computable and the family $\mathcal{F}$ is arbitrary (and not necessarily of connected graphs).

\section{Technical overview}
\label{sec:overview}

We sketch the main ideas in Theorems \ref{thm:lossy-kernel} and \ref{thm:lossy-protocol} for the simplest case of \twetadel.
The starting point is a {near-protrusion decomposition} from~\cite{DBLP:conf/focs/FominLMS12}:
one can assume that graph $G$ is equipped with disjoint sets $X, Z$ so that $|X| = \Oh_\eta(k)$, $|Z| =\Oh_\eta(k^3)$, $\tw(G-X) \le \eta$, each component $C$ of $G - (X\cup Z)$ has $\Oh_\eta(1)$
neighbors in $Z$ and each $u,v \in N(C) \cap X$ are highly connected.
The last condition implies that for any solution $S$ of size $\le k$, if $\{u,v\} \cap S = \emptyset$ then $u,v$ must share a bag in any tree decomposition of $G-S$ of width $\le \eta$.
Therefore, it is safe to insert the edge $uv$ to $G$.
We can thus assume that $N(C) \cap X$ is a clique.
If this clique has size $\ge 2(\eta+1)$ then we know that any solution must remove at least half of its vertices.
And so we can remove the entire clique by paying 2 in the approximation factor.
After applying such a reduction rule, each component $C$ of $G - (X\cup Z)$ has $\Oh_\eta(1)$  neighbors in total.
This implies that $N[C]$ forms an {$\Oh_\eta(1)$-protrusion}, having both its treewidth and boundary bounded. 
The framework of protrusion replacement~\cite{DBLP:journals/jacm/BodlaenderFLPST16, DBLP:journals/siamdm/GarneroPST15} allows us to {replace} $G[N[C]]$ by a subgraph of constant size that exhibits the same behavior with respect to the problem in question.
The caveat is that the number of such protrusions may be as large as $k^{\Omega(\eta)}$ which is prohibitive for constructing a uniform kernel. 

\subparagraph*{Uniform lossy kernel.}
We would like to transform the current decomposition into a $(\Oh_\eta(k^c), \Oh_\eta(1))$-protrusion decomposition, where $c$ does not depend on $\eta$.
We call a component  $C$ of $G - (X\cup Z)$ {\em simplicial} if $N(C)$ forms a clique.
Because we can insert edges between highly connected vertices in $X \cup Z$, each non-edge may appear only in a bounded number of sets $N(C)$.
We use this argument to bound the number of non-simplicial components by $\Oh_\eta(k^5)$.
Let us neglect the simplicial components for a moment and let $G'$ denote the graph without them.
Then $G'$ admits a protrusion decomposition of desired form and
we can compress $G'$ to $G''$ on $\Oh_\eta(k^5)$ vertices.
This constitutes the output of the kernelization algorithm.

It remains to provide a mechanism to lift a solution $S''$ from $G''$ to $G$. 
First, we note that a solution $S''$ can be lifted to a solution $S'$ of $G'$ using standard tools. 
The interesting step is lifting $S'$ to a solution $S$ in $G$. This is the part where the lifting step of the algorithm does a nontrivial (yet, polynomial-time) computation exploiting a treewidth bound.
Consider a tree decomposition $\cal T$ of $G'-S'$ of width $\eta$ and a simplicial component  $C$ of $G - (X\cup Z)$.
Since $N(C) \sm S'$ is a clique, it must be covered by some bag in $\cal T$.
Therefore, we can plug $G[N[C]]$ into $\cal T$ by merging a tree decomposition of $G[N[C]]$ of width $2\eta +1 $ and the tree decomposition $\mathcal{T}$ alongside $N(C) \sm S'$.
This argument bounds the treewidth of $G-S'$ by $2\eta+1$
and on such a graph we can solve \twetadel\ exactly in polynomial time.
We output the union of $S'$ and the optimal solution in $G-S'$, resulting in 2-approximation.

\subparagraph*{Uniform lossy protocol.}
Improving the approximation factor below $2$ requires a different approach. 
We can still start with the sets $X \cup Z$ as described in the beginning of this section, and make each connected component $C$ of $G-(X \cup Z)$ a protrusion, this time by removing a clique in the neighbourhood of $C$ whose size is at least $(\eta+1) (1+\epsilon) / \epsilon$.
The main challenge is to bound the number of simplicial protrusions.
Let $Y_0 = X \cup Z$.
Since we now aim at a lossy kernelization protocol, we can send $G[Y_0]$ to the oracle which outputs $Y_1 \sub Y_0$ for which $\tw(G[Y_0 \sm Y_1]) \le \eta$.
Assume for simplicity that the oracle always returns an optimal solution.
Then $|Y_1| = \opt(G[Y_0]) \le \opt(G)$.
And once again, we ask the oracle for a solution $Y_2$ in $G[Y_1]$.

Suppose first that $|Y_2| > (1 - \eps) \cdot |Y_1|$.
Observe that any optimal solution $S$ in $G$ satisfies $|S \cap Y_1| \ge \opt(G[Y_1])$ so $\opt(G-Y_1) \le \opt(G) -  (1 - \eps) \cdot |Y_1|$.
A~crucial observation is that removing $Y_1$ from $G$ 
allows us to construct an $(\Oh_\eta(k^7), \Oh_\eta(1))$-protrusion decomposition.
For each simplicial component $C$, the set $N(C) \sm Y_1$ forms a clique in the graph $G[Y_0 \sm Y_1]$ of treewidth $\le \eta$.
But the number of cliques in a bounded-treewidth graph is linear so we can group the simplicial components into $\Oh_\eta(|Y_0 \sm Y_1|)$ protrusions according to their neighborhoods in $Y_0 \sm Y_1$.
Together with $\Oh_\eta(k^5)$ non-simplicial components, this allows us to compress the graph $G-Y_1$ into size $\Oh_\eta(k^5)$. 
Let $S_1$ be the solution in $G-Y_1$ computed with the help of the oracle.
Assuming $|S_1| = \opt(G-Y_1)$ we get $|S_1| + |Y_1| \le  \opt(G) -  (1 - \eps) \cdot |Y_1| + |Y_1| = \opt(G) + \epsilon \cdot |Y_1| \le (1+ \epsilon)\cdot \opt(G)$.

Suppose now that  $|Y_2| \le (1 - \eps) \cdot |Y_1|$.
We continue asking the oracle for solution $Y_{i+1}$ in $G[Y_i]$.
Let $j$ be the first iteration for which $|Y_{j+1}| > (1 - \eps) \cdot |Y_{j}|$.
We repeat the same trick but this time the set $Y_0 \sm Y_{j}$
is partitioned into $j$ subsets $(Y_0 \sm Y_1), (Y_1 \sm Y_2), \dots, (Y_{j-1} \sm Y_j)$, each inducing a graph of treewidth $\le \eta$.
The number of cliques in $G[Y_0 \sm Y_j]$ can be bounded 
by $f(\eta,j)\cdot |Y_0|^j$, 
allowing us to construct an $(\Oh_{\eta,j}(k^{\Oh(j) + 5}), \Oh_{\eta,j}(1))$-protrusion decomposition.

This procedure yields a uniform kernelization protocol as long as $j$ can be bounded.
But suppose we keep getting outcome $|Y_{j+1}| \le (1 - \eps) \cdot |Y_{j}|$.
Then for  $r=1+  \left \lceil  \log_{1-\epsilon} ({\epsilon}) \right \rceil$ we obtain $|Y_r| \le (1-\eps)^{r-1} \cdot |Y_1| = \eps \cdot |Y_1| \le \eps \cdot \opt(G)$.
In this case we can again use the oracle to compute a solution in $G-Y_r$, bounding the number of protrusions by $|Y_0|^r$, and merge it with $Y_r$, achieving $(1+\eps)$-approximation.
See \Cref{fig:lossy-protocol-oracle} on page \pageref{fig:lossy-protocol-oracle} for an illustration.

In order to handle \fdel, we need to be more careful because inserting an edge between highly connected vertices may no longer be safe.
To this end, we introduce an auxiliary graph which collects these inserted edges and exploit the fact that the optimum to \fdel\ can be lower bounded by the optimum to \twetadel, for some $\eta = \eta(\fcal)$.
We need to ask the oracle for solutions to  \twetadel\ over the auxiliary graph as well, and this is the reason why in general we end up with compression protocol, rather than kernelization protocol. 
Whereas the standard protrusion replacement technique suffices to finish the construction when all the graphs in $\fcal$ are connected, dealing with the general case requires one more tool.

\subparagraph*{Handling disconnected forbidden minors.}
There are two known ways to compress protrusions in the presence of disconnected graphs in $\fcal$.
First, one can consider an annotated version of \fdel, where some vertices are marked as undeletable~\cite{DBLP:journals/jacm/BodlaenderFLPST16}, but this requires the oracle to solve a significantly harder task.
Secondly, one can take advantage of a certain {\em well-quasi-order} over boundaried graphs~\cite{DBLP:conf/focs/FominLMS12} but this involves a non-constructive argument and results in factors that a priori may be an uncomputable function of $\fcal$.
Besides, both approaches are insufficient to construct a linear kernel in~\Cref{thm:sparse}.

To visualize the issue, 
consider $F = K_4 + C_5$ and
an $(\alpha,\beta)$-protrusion decomposition $(P_0,P_1,\dots,P_\ell)$ of a graph $G$.
It may be the case that $G[P_i]$ is $K_4$-free for some $i\in [1,\ell]$ but it contains a lot of minor models of $C_5$.
Depending on whether a solution to {\sc $F$-Deletion} hits all the copies of $K_4$ outside $P$, it may need to remove either none or a very large number of vertices in $P$.
In fact, such a problem does not admit {\em finite integer index}~\cite{kim15linear}, a property crucial for protrusion replacement~\cite{DBLP:journals/jacm/BodlaenderFLPST16, DBLP:journals/siamdm/GarneroPST15}.

Observe however that if $G[P_i]$ contains a disjoint packing of $k+1$ minor models of $C_5$, then we already know that no solution of size $\le k$ can make the graph $C_5$-minor-free, so it should ``focus'' on hitting the $K_4$-minors.
There is a caveat though: it might be the case that the only minor model of $K_4$ in $G$ intersects each model of $C_5$ and so $K_4 + C_5$ already does not appear as a minor.
In order to circumvent such error-prone corner cases, we refine an $(\alpha,\beta)$-protrusion decomposition to one with {\em dichotomy property}.
It states that for every connected graph $F$ on at most $h$ vertices (where $h$ is the maximum graph size in $\fcal$) either $F$ does not appear as a minor in $G-P_0$ or $F$ has a large ``private'' collection of protrusions in which it appears. 
This property implies that any optimal solution $S$ can use only $\Oh_{\fcal,\beta}(1)$ vertices from each protrusion, mimicking the missing separation property.
Intuitively, the aforementioned collections justify that any solution $S$ of size $\le k$ must focus on hitting minors that do not appear in $G-P_0$, and so it does not make sense to remove too many vertices from a single protrusion.
Consequently, the constructive protrusion replacement by Garnero et al.~\cite{DBLP:journals/siamdm/GarneroPST15} can be adapted to compress protrusions in a decomposition with dichotomy property.

To construct a decomposition with dichotomy property 
we generalize the packing/covering duality for connected minor models in bounded-treewidth graphs~\cite{DBLP:journals/dam/RaymondT17}.
In our case however we do not pack models of a single graph $F$ but we consider a family of connected graphs $\fcal'$ and in each step of a greedy procedure we need to choose one graph $F \in \fcal'$ to be packed, without spoiling the invariants for the remaining graphs from $\fcal'$.
The details are technical and we omit them here.

\section{Protrusion replacement for disconnected forbidden minors}\label{sec:disconnected}

In this section we collect ingredients to prove \Cref{lem:dichotomy:compress} which is an extension of \Cref{thm:dichotomy-intro}.
We begin with two crucial definitions.
We write $2^{[1,\ell]}$ to denote the power set of $[1,\ell]$.

\begin{definition}[Minor packing] 
    Let $c \in \nn$ and $\fcal$ be a finite family of graphs.
    We say that a protrusion decomposition  $(P_0,P_1,\dots,P_\ell)$ of $G$ admits an $(\fcal,c)$-minor packing if there is a mapping $I \colon \fcal \to 2^{[1,\ell]}$ such that:
    \begin{enumerate}
        \item for each $F \in \fcal$, $|I(F)| =c$,
        \item for each $F \in \fcal,\,i \in I(F),$ it holds that $F \le_m G[P_i]$,
        \item for each distinct $F_1,F_2 \in \fcal$, the sets $I({F_1}), I({F_2})$ are disjoint.
    \end{enumerate}
\end{definition}

\begin{definition}[Dichotomy property]
    Let $k,h \in \mathbb{N}$ and ${\cal H}^{\mathsf{conn}}_h$ denote the family of all connected graphs on at most $h$ vertices.
    
    We say that a protrusion decomposition $(P_0,P_1,\dots,P_\ell)$ of graph $G$ has $(k,h)$-dichotomy property if it admits an $(\fcal,k+h)$-minor packing, where $\fcal \sub {\cal H}^{\mathsf{conn}}_h$ is the family of connected graphs that belong to  $h$-{\em folio}$(G-P_0)$.
\end{definition}

We prove that any protrusion decomposition can be modified to satisfy $(k,h)$-dichotomy property.
After initializing $\fcal = {\cal H}^{\mathsf{conn}}_h$,
we consider each graph from $\fcal$ and as long as $G-P_0$ contains an $F$-deletion set $S_F$ of size $\Oh_{h,\beta}(k)$ for some $F \in \fcal$, we insert $S_F$ into $P_0$ and remove $F$ from $\fcal$.
More precisely, we first compute the LCA-closure (see \Cref{app:prelims}) for $S_F$ in a tree decomposition of $G-P_0$ to preserve the invariants of a protrusion decomposition.
When we cannot proceed, each remaining graph  $F \in \fcal$ must have a large packing of disjoint minor models in $G-P_0$.
Note that these models can intersect for different $F$.
We give a greedy procedure that scans a tree decomposition of $G-P_0$ bottom-up  and marks bags that allow us to separate a new protrusion containing a model of $F$ (see  \Cref{fig:dichotomy:multi-packing} on page \pageref{fig:dichotomy:multi-packing}).
We also care to keep the set of marked bags closed under LCA, so that we end up with a protrusion decomposition of $G-P_0$ with an $(\fcal, k+h)$-minor packing.
Finally, we merge it with the protrusion decomposition of $G$.

\begin{restatable}{lemma}{lemDichoConstruct}
\label{lem:dichotomy:construct}
    Suppose $G$ admits an $(\alpha, \beta)$-protrusion decomposition  $(P_0,P_1,\dots,P_\ell)$.
    Then it also admits a nice $(\alpha + \Oh_{h,\beta}(k),\, \Oh_{h,\beta}(1))$-protrusion decomposition with $(k,h)$-dichotomy property. 
    Furthermore, this new decomposition, together with the corresponding minor packing, can be constructed in time $\Oh_{h,\beta}(|V(G)|)$ given the old one.
\end{restatable}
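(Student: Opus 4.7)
The plan is to follow the two-phase strategy outlined in the overview. Build, in time $\Oh_\beta(|V(G)|)$ via Bodlaender's algorithm, a width-$\beta$ tree decomposition $\mathcal{T}$ of $G-P_0$; such a decomposition exists because every $G[P_i]$ lies inside a $\beta$-protrusion, so $\tw(G-P_0)\le \beta$. Root $\mathcal{T}$ arbitrarily. During the construction we shall maintain the invariant that $\mathcal{T}$ is a tree decomposition of the current $G - P_0$ (updating it whenever a chunk of vertices is pushed into $P_0$).

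\textbf{Phase 1: peel off easily hittable connected minors.} Iterate over $F\in\mathcal{H}^{\mathsf{conn}}_h$, which is a finite set of size depending only on $h$. For each $F$, run a Courcelle-style linear-time dynamic program on $\mathcal{T}$ to test whether $G-P_0$ admits an $F$-deletion set of size at most $c(h,\beta)\cdot k$, for a constant $c(h,\beta)$ to be fixed in Phase 2. If so, extract such a set $S_F$, take its LCA-closure in $\mathcal{T}$ (of size at most $2|S_F|-1$ by Lemma~\ref{lem:prelim:lca}), move the union of the corresponding bags into $P_0$, delete $F$ from the active family, and rebuild $\mathcal{T}$ for the new $G-P_0$ before the next iteration. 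After Phase 1, $P_0$ has absorbed $\Oh_{h,\beta}(k)$ extra vertices in total.

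\textbf{Phase 2: enforce dichotomy by greedy bottom-up marking.} Let $\fcal'$ denote the surviving subfamily. By construction, no $F\in\fcal'$ has an $F$-deletion set of size $\le c(h,\beta)\cdot k$ in $G-P_0$, which still has treewidth $\le \beta$. A simultaneous Erdős--Pósa style packing/covering duality for minor models on bounded-treewidth graphs, generalizing~\cite{DBLP:journals/dam/RaymondT17} from a single graph to a whole family, provides, for every $F\in\fcal'$, a large collection of vertex-disjoint minor models of $F$ inside $G-P_0$. To realise these packings simultaneously, we traverse $\mathcal{T}$ bottom-up: at a node $t$, if the subtree rooted at $t$ now contains a fresh model of some $F\in\fcal'$ that still needs additional protrusions, we commit that model to $F$ and mark $t$. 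At the end we close the set of marked nodes under LCA (again Lemma~\ref{lem:prelim:lca}) and move all vertices of the marked bags into $P_0$. By part~1 of Lemma~\ref{lem:prelim:lca}, every component of $\mathcal{T}$ minus the LCA-closed marked set has at most two neighbors among marked nodes, so the union of bags of each such component induces a $\Oh_\beta(1)$-protrusion. The new non-root parts are precisely these components, one of which contains a committed model for each $(F,\text{slot})$ pair, which furnishes the mapping $I\colon\fcal'\to 2^{[1,\ell]}$ witnessing the $(\fcal',k+h)$-minor packing. Finally, we make the decomposition nice via Lemma~\ref{lem:prelims:protrusion-neighborhood}, obtaining an $(\alpha+\Oh_{h,\beta}(k),\,\Oh_\beta(1))$-protrusion decomposition with the desired $(k,h)$-dichotomy property.

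\textbf{Main obstacle.} The delicate step is the greedy arbitration in Phase 2: different $F\in\fcal'$ compete for the same subtrees, and once a subtree is committed to one graph it cannot serve as a private protrusion for another. The remedy is to slightly oversize the packings in advance — choose $c(h,\beta)$ so that the packing/covering duality guarantees, for every $F\in\fcal'$, at least $|\fcal'|\cdot(k+h+h)$ pairwise disjoint minor models of $F$ in $G-P_0$ — so that during the bottom-up scan each surviving $F$ still ends up with $k+h$ committed models in distinct protrusions, even after losing subtrees to other members of $\fcal'$. Bounding this slack cleanly (and independently of $k$) is the technical core; the remaining invariants — linear running time ($\Oh_h(1)$ calls to Courcelle plus one LCA-closure), bounded boundaries, and niceness — then follow by standard tree-decomposition manipulations.
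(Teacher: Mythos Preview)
Your two-phase plan is exactly the paper's strategy: Phase~1 peels off each $F\in\mathcal{H}^{\mathsf{conn}}_h$ that admits a small $F$-deletion set in $G-P_0$ (the paper uses the packing/covering duality of \Cref{lem:dichotomy:erdos} rather than Courcelle, but either works), and Phase~2 runs a greedy bottom-up scan to realise an $(\fcal',k+h)$-minor packing for the surviving family $\fcal'$. Your diagnosis of the main obstacle --- the competition between different $F$'s for subtrees --- and the remedy of oversizing the packings are both correct and match the paper's approach.

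Two technical points deserve more care. First, closing the marked set under LCA \emph{only at the end} of Phase~2 is dangerous: an LCA node $\hat t$ inserted afterwards may fall strictly inside a component you already committed to some $F$, and if the model you recorded passes through $\chi(\hat t)$ it gets destroyed. The paper avoids this by interleaving LCA insertions with the marking (see the two scenarios in the proof of \Cref{lem:dichotomy:multi-packing}): before marking the chosen node $t$, it first checks whether there is an outstanding LCA of two already-marked nodes below $t$ and, if so, inserts that instead. This guarantees that once a component is assigned to $F$, no further LCA node will land inside it. Second, you work entirely inside $G-P_0$ and claim the resulting pieces are $\Oh_\beta(1)$-protrusions, but the boundary must be bounded in $G$, which also includes neighbours in the original $P_0$ and in each set absorbed during Phase~1. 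The paper handles this bookkeeping via an explicit merging lemma (\Cref{lem:dichotomy:product-packing}), which is why the second parameter ends up $\Oh_{h,\beta}(1)$ rather than $\Oh_\beta(1)$: each of the $|\mathcal{H}^{\mathsf{conn}}_h|$ Phase~1 iterations contributes an additive $\Oh(\beta)$ to the boundary. Neither point changes the overall architecture, but both need to be addressed for the argument to go through.
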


\begin{figure}[t]
    \centering
    \includegraphics[width=\linewidth]{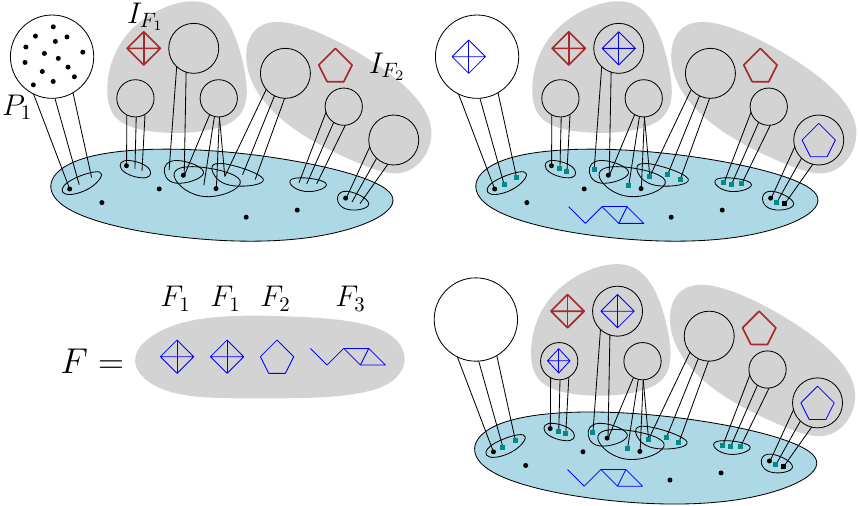}
    \caption{Illustration to \Cref{lem:dichotomy:bounded} for $F$ being a disjoint union $F_1 + F_1 + F_2 + F_3$ and $\fcal = \{F\}$.
    Top left: An $\fcal$-deletion set $S$ is marked with black discs.
    The gray areas correspond to those $P_i$ which belong to $I({F_1}), I({F_2})$.
    Their neighborhoods in $P_0$ are sketched as well.
    Each $P_i$ in the corresponding family contains a minor model of the graph drawn in brown and is disjoint from $S$.
    Top right: $\widehat S$ is obtained from $S$ by removing $S \cap P_1$ and adding  the vertices marked with squares.
    A potential minor model $\Phi$ of $F$ in $G-\widehat S$ is drawn in blue.
    It places one copy of $F_1$ inside $P_1$.
    Bottom right: A new model $\Phi'$ of $F$ is obtained from $\Phi$ by moving the image of $F_1$ to an unused component in  $I({F_1})$.
    But this model is also present in $G-S$ which leads to a contradiction.
    }
    \label{fig:dichotomy:bounded}
\end{figure}

The proof of Lemma~\ref{lem:dichotomy:construct} is a technical generalization of known ideas and we 
defer it to Section~\ref{sec:dichotomy-prop} to keep the focus of the current section.
We now explain how dichotomy property ensures that an optimal $\fcal$-deletion set can remove only a bounded number of vertices from each protrusion.

\begin{lemma}\label{lem:dichotomy:bounded}
    Suppose $G$ admits a nice $(\alpha, \beta)$-protrusion decomposition $(P_0,P_1,\dots,P_\ell)$ with $(k,h)$-dichotomy property and let $\cal F$ be a family of graphs of maximum size $h$. 
    Next, suppose that $G$ has an $\cal F$-deletion set $S$ of size  $\le k$.
    Then $G$ has an $\cal F$-deletion set $\widehat S$ with the property that for each $i \in [1, \ell]$ we have $|\widehat S \cap P_i| \le \beta\cdot(h\cdot |{\cal H}^{\mathsf{conn}}_h| + 1)$ and $|\widehat S| \le |S|$.
    
    Furthermore, one can compute $\widehat S$, given $G$, $S$ and $(P_0,P_1,\dots,P_\ell)$, in polynomial time.
     Moreover, every minimum $\cal F$-deletion set in $G$ has the described property.
\end{lemma}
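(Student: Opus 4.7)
My plan is to prove this via an exchange argument. Given an $\fcal$-deletion set $S$ with $|S| \leq k$ and a protrusion $P_i$ with $|S \cap P_i| > \beta(hD+1)$ (writing $D := |{\cal H}^{\mathsf{conn}}_h|$), I would swap $S$ for $S^{\dagger} := (S \setminus P_i) \cup N(P_i)$. Niceness of the decomposition gives $|N(P_i)| \leq \beta$, so together with $|S \cap P_i| > \beta$ the size strictly decreases: $|S^{\dagger}| < |S|$. Moreover $S^{\dagger} \cap P_i = \emptyset$, and since $N(P_i) \subseteq P_0$ is disjoint from every $P_{i'}$ with $i' \neq 0$, the swap never spoils the bound on other protrusions. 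Iterating this operation on every heavy protrusion terminates in at most $k$ rounds and yields $\widehat S$ satisfying $|\widehat S| \leq |S|$ and $|\widehat S \cap P_i| \leq \beta(hD+1)$ for all $i$. The ``every minimum deletion set has the property'' clause then follows immediately, since a violation would allow the same swap to produce a strictly smaller deletion set.

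The core of the argument is to show that $S^{\dagger}$ is still an $\fcal$-deletion set. Assume toward contradiction that $F \leq_m G - S^{\dagger}$ via some minor model $\Phi$. Because $N(P_i) \subseteq S^{\dagger}$, the set $P_i$ is isolated in $G - S^{\dagger}$, forcing each connected component $F_j$ of $F$ to have image $\Phi|_{F_j}$ entirely inside $P_i$ or entirely outside. Let $J$ index the components inside $P_i$. For each $j \in J$, $F_j$ is a connected graph on $\leq h$ vertices satisfying $F_j \leq_m G[P_i] \subseteq G - P_0$, so $F_j$ belongs to the family $\fcal'$ in the $(k,h)$-dichotomy property, and hence $|I(F_j)| = k+h$. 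Since $|S| \leq k$, at least $h$ protrusions in $I(F_j)$ are disjoint from $S$. The disjointness of $I(\cdot)$ across different graphs, combined with $|J| \leq h$, lets me greedily pick distinct indices $i_j \in I(F_j) \setminus \{i\}$ with $S \cap P_{i_j} = \emptyset$, and take a minor model $\Psi_j$ of $F_j$ inside $G[P_{i_j}]$. Replacing $\Phi|_{F_j}$ by $\Psi_j$ for $j \in J$ would then produce a minor model of $F$ in $G - S$, contradicting the hypothesis that $S$ is an $\fcal$-deletion set.

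The hardest step will be verifying pairwise disjointness of the new model. The preserved branch sets $\Phi|_{F_{j'}}$ (for $j' \notin J$) lie in $V(G) \setminus N[P_i] \setminus S$, and the relocated ones $\Psi_j$ lie in $P_{i_j}$, which is disjoint from $N[P_i]$; however a connected $\Phi|_{F_{j'}}$ could in principle still contain vertices of $P_{i_j}$. I plan to resolve this by strengthening the choice of $i_j$, additionally requiring $P_{i_j}$ to be untouched by $\Phi$'s preserved branch sets. Since $\Phi$ has only $|V(F)| \leq h$ branch sets and any two distinct protrusions are separated by $P_0$, a counting argument bounds by a function of $h$ alone the number of protrusions in each $I(F_j)$ that are obstructed by $\Phi$, so the slack $k+h$ from the dichotomy provides enough room to satisfy every constraint. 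Combined with the iterative swap procedure, this yields $\widehat S$ in polynomial time and completes the proof.
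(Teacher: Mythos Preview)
Your overall strategy---an exchange argument followed by relocating the components $F_j$ inside $P_i$ to packed protrusions $P_{i_j}$---matches the paper's, but the specific swap $S^{\dagger}=(S\setminus P_i)\cup N(P_i)$ is too weak, and the gap shows up exactly where you flagged it.

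In $G-S^{\dagger}$ only $P_i$ has been isolated; the candidate target protrusions $P_m$ with $m\in I(F_j)$ still have their neighborhoods in $P_0$ available. Consequently, a single connected image $\Phi(V(F_{j'}))$ (for $j'\notin J$) can weave back and forth through $P_0$ and intersect arbitrarily many of the $P_m$'s. The fact that $\Phi$ has at most $h$ branch sets does not help: one connected branch set can already touch unboundedly many protrusions. So your proposed counting argument---``bounded by a function of $h$ alone''---does not go through, and there is no guarantee that any index in $I(F_j)$ is simultaneously disjoint from $S$ and from all preserved branch sets. A further symptom: your size estimate only uses $|S\cap P_i|>\beta$, leaving the extra $\beta h D$ of the threshold completely unused.

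The paper's fix is to spend that budget. For each $H\in{\cal H}^{\mathsf{conn}}_h$ that appears in $G-P_0$ it first fixes a set $J_H\subseteq I(H)$ of $h$ indices with $P_m\cap S=\emptyset$, and then sets
\[
\widehat S \;=\; (S\setminus P_i)\;\cup\; N(P_i)\;\cup\;\bigcup_{H}\bigcup_{m\in J_H} N(P_m).
\]
Now $|\widehat S|<|S|$ uses the full threshold $\beta(hD+1)$, and every $P_m$ with $m\in\bigcup_H J_H$ is a union of connected components of $G-\widehat S$. Hence each connected $\Phi(V(F_{j'}))$ lies in at most one such $P_m$, so at most $h-1$ indices of $J_{F_1}$ are obstructed by the other components, and one is free for the relocation. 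The paper then finishes with the minimality trick (choose $\Phi$ minimizing the number of components inside $P_i$) rather than relocating all of $J$ at once.
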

\begin{proof}
    Let ${\cal H}_G \sub {\cal H}^{\mathsf{conn}}_h$ be the family of those connected graphs on at most $h$ vertices that appear as minors in $G - P_0$.
    Due to $(k,h)$-dichotomy property, for each $H \in {\cal H}_G$ there is a collection $I(H) \sub [1,\ell]$ of size $k+h$, such that for $i \in I(H)$ we have $H \le_m G[P_i]$ and the sets $I(H)$ are disjoint for distinct graphs $H$ from  ${\cal H}_G$.
    Since $|S| \le k$, for some $h$ indices $i \in I(H)$ the set $P_i$ is disjoint from $S$. 
    Let us denote this subset as $J_H \sub I(H)$.
    Consequently, for each $H \in {\cal H}_G$ we have $|J_H| = h$, the sets $J_H$ are pairwise disjoint, and for each $i \in J_H$ it holds that $H \le_m G[P_i]$ and $P_i \cap S = \emptyset$, 
    Note that $\sum_{H \in {\cal H}_G} |J_H| \le h \cdot |{\cal H}^{\mathsf{conn}}_h|$.

    Suppose that there exists $j \in [1,\ell]$ for which $|S \cap P_j| > \beta\cdot(h\cdot |{\cal H}^{\mathsf{conn}}_h| + 1)$.
    W.l.o.g. we will assume that $j=1$.
    Let $\widehat S$ be obtained from $S$ by (1) removing $S \cap P_1$, (2) inserting $N(P_1)$, (3) inserting $N(P_i)$ for every $i \in J_H$, $H \in {\cal H}_G$ (see \Cref{fig:dichotomy:bounded}).
    Note that $|\widehat S| < |S|$ because the given protrusion decomposition is nice.

    We will now prove that $\widehat S$ is also an \fcal-deletion set.
    Suppose not, and let $F \in \fcal$ be a graph appearing in $G - \widehat S$ as a minor.
    Let $F_1,\dots,F_c$, $c \le h$, be the connected components of $F$.
    Note that some of them may be isomorphic.
    Consider a minor model 
    $\Phi$ of $F$ in $G - \widehat S$ for which the number of $F_j$ appearing in $P_1$ is minimized.
    More precisely, we minimize the number $c'$ of indices $j \in [c]$
    for which $\Phi(V(F_j)) \sub P_1$. 
    Observe that $P_1$ is a union of connected components of   $G - \widehat S$ so if $\Phi(V(F_j))$ intersects $P_1$ then it must be fully contained in $P_1$.
    If $c'=0$ then $F$ is a minor of  $G - (P_1 \cup \widehat S)$.
    But this is an induced subgraph of $G-S$, which is \fcal-minor-free by assumption, so this is impossible.

    Consequently, the  minor model $\Phi$ places $c' \ge 1$ connected components of $F$ inside $P_1$.
    Assume w.l.o.g. that this is the case for $F_1$.
    Aiming at contradiction, we want to modify $\Phi$ to place $F_1$ outside $P_1$, without affecting the models of remaining $F_j$, $j > 1$.
    Since $F_1 \le_m G[P_1]$ it follows that $F_1 \in {\cal H}_G$.
    Recall that for each $i \in J_{F_1}$ it holds that $\widehat S \cap P_i = \emptyset$ (because  $S \cap P_i = \emptyset$) and $N(P_i) \sub \widehat S$.
    Since each $F_j$ is connected, 
    $\Phi(F_j)$ can intersect at most one $P_i$ for  $i \in J_{F_1}$.
    Next, $|J_{F_1}| = h$ and $c \le h$, so there is some $i \in J_{F_1}$ for which $P_i$ is disjoint from all $\Phi(F_j)$, $j > 1$.
    Therefore, we can move the model of $F_1$ from $P_1$ to $P_i$, obtaining a model $\Phi'$ of $F$ in $G-\widehat S$ with $c'-1$ components contained in $P_1$.
    This contradicts the choice of the model $\Phi$ and, consequently, contradicts the assumption that $F \le_m G - \widehat S$.
    Hence $\widehat S$ is a valid \fcal-deletion set. 

    We can repeat this process to reduce the intersection of  $\widehat S$ with each $P_i$ because the described modification to $S$ never adds new vertices from any $P_i$.
    A single refinement step can be implemented in polynomial time using any polynomial-time algorithm for minor testing, for example~\cite{DBLP:journals/jct/RobertsonS95b,DBLP:conf/focs/KorhonenPS24}, which also yields an algorithm to construct $\widehat S$. 
    Finally, observe that if $G$ has an \fcal-deletion set of size at most $k$,
    then every minimum $\fcal$-deletion set $S$ in $G$ must satisfy $|S \cap P_i| \le \beta \cdot (h \cdot |{\cal H}^{\mathsf{conn}}_h| + 1)$ as otherwise we could find a smaller $\fcal$-deletion set.
\end{proof}

We can now combine \Cref{lem:dichotomy:bounded} with \Cref{lem:prelim:replace} to compress each $P_i$ in a protrusion decomposition to a constant size.
It is crucial that \Cref{lem:prelim:replace} preserves the $h$-folio of $G[P_i]$ therefore the dichotomy property is maintained after the replacement.
The last point of the lemma statement refers to the solution cost capped at $k+1$ (see \Cref{sec:prelims-lossy}), which will be convenient in the design of a lossy kernelization protocol.

\begin{restatable}{lemma}{lemDichoCompress}
\label{lem:dichotomy:compress}
    Let $k,h \in \nn$ and $\cal F$ be a family of graphs of maximum size $h$.
    There is an algorithm that, given a graph $G$
    with an $(\alpha, \beta)$-protrusion decomposition, runs in time $\Oh_{h,\beta}(|V(G)|)$ 
    and 
    returns a graph $G'$ on $\Oh_{h,\beta}(\alpha + k)$ vertices
    such that $\min(\opt_\fcal(G), k+1) = \min(\opt_\fcal(G'), k'+1)$.  
    
    Furthermore, given an \fcal-deletion set $S'$ in $G'$, one can in polynomial time lift it to an \fcal-deletion set $S$ in $G$, satisfying
    \begin{align*}
    \frac{\val^{\fdelsmall}_{(G,k)}(S)}{\opt_{\fdelsmall}((G,k))} \leq \frac{\val^{\fdelsmall}_{(G',k)}(S')}{\opt_{\fdelsmall}((G',k'))}.
    \end{align*}
\end{restatable}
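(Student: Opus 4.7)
The plan is to first refine the given decomposition to bound the intersection of any reasonable solution with each protrusion, and then replace each protrusion by a constant-size gadget. Concretely, invoke \Cref{lem:dichotomy:construct} to obtain a nice $(\alpha_1,\beta_1)$-protrusion decomposition $(P_0,P_1,\ldots,P_\ell)$ of $G$ with the $(k,h)$-dichotomy property, where $\alpha_1=\alpha+\Oh_{h,\beta}(k)$ and $\beta_1=\Oh_{h,\beta}(1)$. Set $d := \beta_1\cdot(h\cdot|{\cal H}^{\mathsf{conn}}_h|+1)+\beta_1 = \Oh_{h,\beta}(1)$; combining \Cref{lem:dichotomy:bounded} with $|\partial(N[P_i])|\leq\beta_1$ ensures that every $\fcal$-deletion set of size $\leq k$ can be refined so that $|S\cap N[P_i]|\leq d$ for every $i$. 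Now, for each $i\in[1,\ell]$, invoke \Cref{lem:prelim:replace} on the $\beta_1$-protrusion $N[P_i]$ with parameter $d$, obtaining a boundaried graph $\widehat{\mathbf{H}}_i$ of size $\gamma(h,\beta_1,d)=\Oh_{h,\beta}(1)$. The resulting graph $G'$ has $|P_0|+\ell\cdot\gamma(h,\beta_1,d)=\Oh_{h,\beta}(\alpha+k)$ vertices, and the construction runs in $\Oh_{h,\beta}(|V(G)|)$ time.

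The crucial structural claim is that $G'$ inherits a nice protrusion decomposition $(P_0,P_1',\ldots,P_\ell')$ with $P_i':=V(\widehat{\mathbf{H}}_i)\setminus\partial\widehat{\mathbf{H}}_i$, which still satisfies the $(k,h)$-dichotomy property via the same index mapping $I$. Niceness of the original decomposition gives $\mathbf{H}_i-\partial\mathbf{H}_i=G[P_i]$ and $\widehat{\mathbf{H}}_i-\partial\widehat{\mathbf{H}}_i=G'[P_i']$, so \Cref{lem:prelim:replace} yields $h$-{\em folio}$(G[P_i])=h$-{\em folio}$(G'[P_i'])$; since every connected minor on at most $h$ vertices in $G-P_0$ (resp.\ $G'-P_0$) sits inside a single $P_i$ (resp.\ $P_i'$), the connected subfamilies of $h$-{\em folio}$(G-P_0)$ and $h$-{\em folio}$(G'-P_0)$ coincide and each $F\leq_m G[P_i]$ certifies $F\leq_m G'[P_i']$. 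With dichotomy available on both sides, one deduces $\min(\opt_\fcal(G),k+1)=\min(\opt_\fcal(G'),k+1)$: whenever $\opt_\fcal(G)\leq k$, apply \Cref{lem:dichotomy:bounded} to a minimum $S$ to produce $\widehat S$ with $|\widehat S|\leq|S|$ and $|\widehat S\cap N[P_i]|\leq d$, and iterate the $G\to G'$ direction of \Cref{lem:prelim:replace}(2) protrusion-by-protrusion to obtain an $\fcal$-deletion set in $G'$ of size at most $|S|$; the converse is symmetric. If both optima exceed $k$, both capped values equal $k+1$.

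For the lifting guarantee, given an $\fcal$-deletion set $S'$ in $G'$: if $|S'|>k$, return $S:=V(G)$, so $\val^{\fdelsmall}_{(G,k)}(S)\leq k+1=\val^{\fdelsmall}_{(G',k)}(S')$ and the approximation inequality is immediate since the capped optima agree. Otherwise $|S'|\leq k$ and \Cref{lem:dichotomy:bounded} applied in $G'$ yields $\widehat S'$ with $|\widehat S'|\leq|S'|$ and $|\widehat S'\cap N[P_i']|\leq d$; iterating the $G'\to G$ direction of \Cref{lem:prelim:replace}(2) produces $S$ in $G$ with $|S|\leq|\widehat S'|\leq|S'|$, which combined with $\opt_{\fdelsmall}((G,k))=\opt_{\fdelsmall}((G',k))$ delivers the stated ratio bound. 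The main technical hurdle I anticipate is certifying that dichotomy really does transfer through the replacement: this rests on the nice-decomposition identity $\mathbf{H}_i-\partial\mathbf{H}_i=G[P_i]$ so that the folio-preservation clause of \Cref{lem:prelim:replace} applies precisely to the interiors that witness the minor packings, together with the fact that distinct $N[P_i],N[P_j]$ meet only inside $P_0$, which guarantees that successive protrusion replacements do not disturb each other's intersections with any intermediate solution.
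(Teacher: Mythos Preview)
Your proposal is correct and follows essentially the same route as the paper: apply \Cref{lem:dichotomy:construct}, set the intersection bound $d$ via \Cref{lem:dichotomy:bounded}, replace each protrusion with \Cref{lem:prelim:replace}, and argue that the $h$-folio preservation clause keeps the $(k,h)$-dichotomy property intact so that the forward and backward solution transfers go through protrusion-by-protrusion. The only cosmetic differences are that the paper fixes $d=\widehat\beta\cdot(h\cdot|{\cal H}^{\mathsf{conn}}_h|+1)$ (without the extra $+\beta_1$) and phrases the argument as a single-step replacement followed by explicit iteration, whereas you state the end product up front and then appeal to iteration; both are equivalent.
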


\begin{proof}
    We apply \Cref{lem:dichotomy:construct} to construct a nice $(\widehat \alpha, \widehat \beta)$-protrusion decomposition $(P_0,P_1,\dots,P_\ell)$ with the $(k,h)$-dichotomy property, where $\widehat \alpha = \alpha + \Oh_{h,\beta}(k)$ and $\widehat\beta = \Oh_{h,\beta}(1)$.
    Let $d := \widehat\beta\cdot(h\cdot |{\cal H}^{\mathsf{conn}}_h| + 1)$.
    We will iteratively replace each protrusion induced by $N_G[P_i]$ with one of constant size 
    using \Cref{lem:prelim:replace}.
    Below we describe this process for $P_1$.

    We fix some labeling $\lambda$ of $\partial_G(N_G[P_1]))$ and
    replace the boundaried graph ${\bf H} = (G[N[P_1]], \allowbreak N(P_1), \allowbreak \lambda)$ with a boundaried graph ${\bf \widehat H}$ of size $\Oh_{h,\widehat\beta,d}(1) \in \Oh_{h,\beta}(1)$.
    Denote the new graph $G'$ and let $P'_1$ be the set of vertices put in place of $P_1$.
   
    \begin{claim}\label{claim:dichotomy:forward}
    Suppose that  $\opt_\fcal(G) \le k$. Then $\opt_\fcal(G') \le \opt_\fcal(G)$.
    \end{claim}
    \begin{proof}
        By \Cref{lem:dichotomy:bounded} there exists an optimal \fcal-deletion set $S$ in $G$ such that $|S \cap P_1| \le d$.
        \Cref{lem:prelim:replace} guarantees that such $S$ can be translated to  an  $\fcal$-deletion set in $G'$ of size at most~$|S|$.
    \end{proof}

    \begin{claim}\label{claim:dichotomy:preserve}
        The protrusion decomposition $(P_0,P'_1,P_2,\dots,P_\ell)$ of $G'$ admits $(k,h)$-dichotomy property.
    \end{claim}
    \begin{proof}
        \Cref{lem:prelim:replace} ensures that
         $h$-folio$({\bf H-\partial H}) = h$-folio$({\bf \widehat H}-\partial{\bf  \widehat H})$ which can be rewritten as  $h$-folio$(G[P_1]) = h$-folio$(G'[P'_1])$.
        If $P_1$ contributed a minor model to the minor packing corresponding to $(k,h)$-dichotomy property, it can be replaced by a minor model of the same graph in $G'[P'_1]$
    \end{proof}

    \begin{claim}\label{claim:dichotomy:backward}
        Suppose that  $\opt_\fcal(G') \le k$. Then $\opt_\fcal(G) \le \opt_\fcal(G')$.
        Furthermore, given an \fcal-deletion set $S'$ in $G'$ of size $\le k$, one can in polynomial time lift it to an \fcal-deletion set $S$ in $G$ of size at most $|S'|$.
    \end{claim}
    \begin{proof}
        The previous claim allows us to apply
        \Cref{lem:dichotomy:bounded} to graph $G'$ and $S'$.
        Given $S'$ we can in polynomial time find an \fcal-deletion set $\widehat S$ in $G'$ of size $\le |S'|$ for which $|\widehat S \cap P'_1| \le d$.
        By applying \Cref{lem:prelim:replace} we can lift $\widehat S$ to a solution in $G$ of size at most $|\widehat S| \le |S'|$.
        By taking $S'$ to be an optimal $\fcal$-deletion set in $G'$, we infer that 
        $\opt_\fcal(G) \le \opt_\fcal(G')$.
    \end{proof}

    Claims \ref{claim:dichotomy:forward} and \ref{claim:dichotomy:backward} imply that $\min(\opt_\fcal(G), k+1) = \min(\opt_\fcal(G'), k+1)$
    and provide a mechanism to lift a bounded-size solution from $G'$ to $G$
    while preserving its size.
    \Cref{claim:dichotomy:preserve} ensures that
    we preserve the $(k,h)$-dichotomy property in the new protrusion decomposition of $G'$, so we can repeat this process to replace each $P_i$.
    Application of \Cref{lem:prelim:replace} takes time linear in the size of the protrusion being replaced, so the total running time is linear.

    We now justify the final inequality for lifting solutions. 
    Recall that when $S$ is a feasible \fcal-deletion set  in $G$ then $\val^{\fdelsmall}_{(G,k)}(S) =\min \{|S|,k+1\}$ and $\opt_{\fdelsmall}((G,k))$ is the minimum over $\val^{\fdelsmall}_{(G,k)}(S)$.
    In these terms, we have $\opt_{\fdelsmall}((G,k)) = \opt_{\fdelsmall}((G',k))$. 
    Suppose first that $|S'| > k$ so $\val^{\fdelsmall}_{(G',k)}(S') = k+1$. 
    Then we simply return $S = V(G)$ for which $\val^{\fdelsmall}_{(G,k)}(S) = k+1$ and the inequality holds.
    Otherwise, $|S'| \le k$. We use the lifting algorithm described in \Cref{claim:dichotomy:backward} to compute a solution $S$ in $G$ so that $|S| \le |S'|$ and we have $\frac{|S|}{\opt_{\fdelsmall}((G,k))} \le \frac{|S'|}{\opt_{\fdelsmall}((G',k))}$.
\end{proof}

Note that Lemma~\ref{lem:dichotomy:compress} implies Lemma~\ref{thm:dichotomy-intro}.

\thmSparse*
\begin{proof}
    We begin with computing a constant-factor approximation to \fdel\  with \Cref{lem:prelim:apx}.
    When $c$ is the approximation factor, then the found approximate solution $X \sub V(G)$ must have size at most $c\cdot k$, otherwise we can already answer that there is no solution of size at most $k$.
    Since $\fcal$ contains a planar graph, we know that there is some constant $\beta(\fcal)$
    such that $\tw(G-X) \le \beta(\fcal)$~\cite[Theorem~$1.1$]{DBLP:conf/soda/GuptaLLM019}.
    We apply \Cref{lem:prelim:topological} to compute an $(\Oh_{\fcal,H}(k), \Oh_{\fcal,H}(1))$-protrusion decomposition of $G$.
    Next, we use \Cref{lem:dichotomy:construct} to turn it into a nice $(\Oh_{\fcal,H}(k), \Oh_{\fcal,H}(1))$-protrusion decomposition with $(k,h)$-dichotomy property, where $h$ is the maximum size of a graph in $\fcal$. 
    \Cref{lem:dichotomy:compress} allows us to compute a graph $G'$
    of size $\Oh_{\fcal,H}(k)$
      so that $\opt_\fcal(G') \le k$ if and only if $\opt_\fcal(G) \le k$. 
      Finally, observe that applying the replacement from \Cref{lem:prelim:replace} preserves being $H$-topological-minor-free so \Cref{lem:dichotomy:compress} must output a graph from this class.      
\end{proof}

\subsection{Proof of Lemma~\ref{lem:dichotomy:construct}}\label{sec:dichotomy-prop}

We first summon the known packing-covering duality for bounded-treewidth graphs.

\begin{proposition}[{\cite[Lemma 3.10]{DBLP:journals/dam/RaymondT17}}]
\label{lem:dichotomy:erdos}
    Let $\beta,\ell \in \nn$ and $G,F$ be graphs, such that $\tw(G) \le \beta$ and $F$ is connected.
    Then either $G$ contains a packing of $\ell$ disjoint minor models of $F$ or there is a set $S \sub V(G)$
     of size at most $(\beta+1)\cdot\ell$ such that $G-S$ is $F$-minor-free.

    Furthermore, for fixed $F$ there is a linear algorithm that outputs one of these two objects, when given a corresponding tree decomposition of $G$.
\end{proposition}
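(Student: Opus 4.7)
The plan is to prove Proposition~\ref{lem:dichotomy:erdos} by a recursive algorithm over a rooted tree decomposition, exploiting the connectedness of $F$ to obtain a single-bag separator for each minor model. I will start by computing a rooted tree decomposition $\mathcal{T} = (T,\{B_t\}_{t\in V(T)})$ of $G$ of width $\beta$, and for each $t \in V(T)$ set $V_t = \bigcup_{t'\text{ in the subtree of }t} B_{t'}$. For every minor model $M$ of $F$ in $G$, consider the subtree $T_M = \{t : V(M) \cap B_t \ne \emptyset\}$; since $F$ is connected, the image of the model is a connected subgraph of $G$, and the standard tree-decomposition axioms then force $T_M$ to be a connected subtree of $T$. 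Denote by $t^*(M)$ the root of $T_M$, so that $V(M) \subseteq V_{t^*(M)}$ and $B_{t^*(M)} \cap V(M) \ne \emptyset$.

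The structural claim I will prove is: if $M$ is chosen so that the depth of $t^* := t^*(M)$ is maximum over all minor models of $F$ in $G$, then $G[V_{t^*} \setminus B_{t^*}]$ is $F$-minor-free. Indeed, any hypothetical model $M'$ contained in $V_{t^*} \setminus B_{t^*}$ would satisfy $V(M') \cap B_{t^*} = \emptyset$, which together with the single-vertex connectedness axiom forces every node of $T_{M'}$ to be a strict descendant of $t^*$, so $t^*(M')$ would be strictly deeper than $t^*$, contradicting the maximality of the depth.

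With this in hand, I will design a recursive algorithm $\mathsf{PackHit}(G,\mathcal{T},\ell)$: if $G$ is $F$-minor-free, return the empty hitting set; if $\ell = 0$, return the empty packing; otherwise extract a minor model $M$ with $t^* = t^*(M)$ of maximum depth and recurse on $G' := G - V_{t^*}$ with the tree decomposition restricted to $T$ minus the subtree of $t^*$ (still of width $\le \beta$) and parameter $\ell - 1$. If the recursion returns a packing $\mathcal{P}'$ of $\ell-1$ disjoint models, output $\mathcal{P}' \cup \{M\}$; vertex-disjointness is immediate because $V(M) \subseteq V_{t^*}$ while $V(\mathcal{P}') \subseteq V(G')$. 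If the recursion returns a hitting set $S'$ of size at most $(\beta+1)(\ell-1)$ in $G'$, output $S' \cup B_{t^*}$, of size at most $(\beta+1)\ell$. This is where the structural claim pays off: any surviving $F$-model in $G - (S' \cup B_{t^*})$ is connected and avoids $B_{t^*}$, hence by the separator property lies entirely inside $V_{t^*} \setminus B_{t^*}$ (excluded by the structural claim) or entirely outside $V_{t^*}$ (excluded by $S'$ being a hitting set in $G'$).

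For the linear-time statement I will replace the recursive template by a single bottom-up pass over $\mathcal{T}$: for fixed $F$ and $\beta$, the auxiliary predicates ``$G[V_t]$ contains $F$ as a minor'' and ``$G[V_t \setminus B_t]$ contains $F$ as a minor'', together with the extraction of a witness model rooted at a maximum-depth node, can be computed by a standard tree-decomposition dynamic program whose table size depends only on $F$ and $\beta$ (alternatively by Courcelle's theorem, since containment of a fixed minor is MSO-expressible). I expect the main obstacle to be the maximum-depth separator argument behind the structural claim, since this is the unique place where connectedness of $F$ is used: for disconnected $F$ the subtree $T_M$ need not be connected, an entire branch set of $M$ could live strictly inside $V_{t^*} \setminus B_{t^*}$ without reaching $B_{t^*}$, and the single-bag reduction collapses.
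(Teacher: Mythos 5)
Your argument is correct, and it is the standard ``deepest model root'' proof of Erd\H{o}s--P\'osa duality on bounded-treewidth graphs; note that the paper itself only cites this statement to Raymond and Thilikos (Lemma~3.10) and does not reprove it, so there is no internal proof to compare against. You correctly isolate the role of connectedness of $F$: since $T_M$ is a connected subtree, every vertex of $V_{t^*}\setminus B_{t^*}$ appears only in bags strictly below $t^*$, so a connected $F$-model avoiding $B_{t^*}$ lies either entirely inside the subtree of $t^*$ (excluded by maximality of depth) or entirely outside (handled by the recursive cover), and the bag-by-bag accounting indeed gives a cover of size at most $(\beta+1)\ell$. The one soft spot is the running-time claim: your recursion as written rescans the tree decomposition in each of up to $\ell$ levels, which is $\Oh(\ell\cdot n)$ rather than linear, and the ``single bottom-up pass'' you gesture at needs to be spelled out. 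The usual fix is to traverse the tree once, maintaining at each node a boundaried-minor (folio) table relative to the current separator $S$; whenever $F$ appears as a minor in $G[V_t\setminus S]$, mark $t$, add $B_t$ to $S$, reset $t$'s table to that of the empty boundaried graph, and continue upward, which produces both the packing (from the marked nodes) and the cover (from the collected bags) in one linear pass.
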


The following lemma utilizes the LCA closure in a tree decomposition to cover a given set by the root bag of a protrusion decomposition.

\begin{restatable}[$\bigstar$]{lemma}{lemDichLCA}
\label{lem:dichotomy:lca}
     Let $\beta \ge 1$, $G$ be a graph of treewidth at most $\beta$,
     and $S \sub V(G)$.
     Then there exists a nice $(2(\beta+1)\cdot |S|,\, 2(\beta+1))$-protrusion decomposition $(P_0,P_1,\dots,P_p)$ of $G$ with $S \sub P_0$.
     Furthermore, $(P_0,P_1,\dots,P_p)$ can be computed in linear time when a corresponding tree decomposition of $G$ is given.
\end{restatable}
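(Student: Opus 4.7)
The plan is to apply the LCA-closure technique to a tree decomposition of width at most $\beta$, and then to group the resulting near-protrusions by their attachment to the closure in order to bound their number linearly in $|S|$.

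First, for each $v \in S$ I pick some $t(v) \in V(T)$ with $v \in B_{t(v)}$, set $S' := \{t(v) : v \in S\}$, and compute the LCA-closure $L := \overline{\mathsf{LCA}}(S')$ in $T$; after standard $\mathcal{O}(|V(T)|)$ preprocessing, both the representatives and the closure are computed in linear time. By Lemma~\ref{lem:prelim:lca} we have $|L| \le 2|S|-1$ and every connected component $C$ of $T-L$ satisfies $|N_T(C)| \le 2$.

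I would then take $P_0 := \bigcup_{t \in L} B_t$ and, for each component $C$ of $T-L$, put $Q_C := \bigl(\bigcup_{t \in C} B_t\bigr) \setminus P_0$. Using the subtree property of vertex supports in a tree decomposition, every $v \in V(G)$ is either in $P_0$ (when its support meets $L$) or in a unique $Q_C$ (when its support is disjoint from $L$ and hence contained in a single component of $T-L$), so $P_0$ together with the non-empty $Q_C$'s partitions $V(G)$. Each bag has size at most $\beta+1$, so $|P_0| \le (\beta+1)|L| \le 2(\beta+1)|S|$.

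To bound the number of parts, I would group them by their neighborhoods in $L$: for every non-empty $A \subseteq L$ with $|A| \le 2$ that is realized, set $P_A := \bigcup \{Q_C : N_T(C) = A\}$. Because the unique path between two nodes of a tree cannot be realized by two disjoint connected sets, a pair $A = \{t_1,t_2\}$ can be realized by at most one component $C$, so the number of non-empty $P_A$'s is at most $|L| + (|L|-1) = 2|L|-1 \le 4|S|-3 \le 2(\beta+1)|S|$ (using $\beta \ge 1$). For a singleton $A = \{t\}$, I build a tree decomposition of $G[N[P_A]]$ by introducing a new root bag equal to $B_t$ and attaching the restrictions of $T$ to each component $C$ with $N_T(C) = A$; since the restricted bags are contained in the original ones and each vertex's support remains connected (it must pass through the $T$-neighbor of $t$ inside $C$), the resulting width is at most $\beta$. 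For $|A|=2$, the unique witness $C$ already gives a tree decomposition of $G[N[P_A]]$ of width $\le \beta$ via the restriction of $T$ to $C \cup A$. In all cases $N_G(P_A) \subseteq \bigcup_{t \in A} B_t \subseteq P_0$ with $|N_G(P_A)| \le 2(\beta+1)$ and $|\partial_G(N[P_A])| \le 2(\beta+1)$, so each $N[P_A]$ is a $2(\beta+1)$-protrusion and the decomposition is already nice; if desired, Lemma~\ref{lem:prelims:protrusion-neighborhood} could be invoked to enforce niceness.

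The main obstacle is keeping the number of protrusions linear in $|S|$ rather than linear in $|V(T)|$: a high-degree LCA-closed node could a priori spawn many pendant components of $T-L$. The grouping by $N_T(C)$ absorbs this blow-up, replacing the crude bound on the number of components by the number of distinct boundary types, which by uniqueness of paths in a tree is at most $2|L|-1$. The degenerate cases $V(G)=\emptyset$ (take the empty decomposition) or $|S|=0$ with $V(G)\ne \emptyset$ (augment $S$ by any single vertex) are handled separately, and all other steps run in linear time given the input tree decomposition.
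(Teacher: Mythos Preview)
Your proposal is correct and follows essentially the same approach as the paper: take representatives of $S$ in the tree decomposition, compute their LCA closure $L$, set $P_0 = \chi(L)$, and group the components of $T-L$ by their neighborhood in $L$. One small point: your justification ``a pair $\{t_1,t_2\}$ can be realized by at most one component'' only bounds the number of two-neighbor components by $\binom{|L|}{2}$; to get $|L|-1$ you need the additional (easy) observation that contracting each two-neighbor component to an edge yields a tree on $L$, which is exactly the argument the paper uses.
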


Next, we will need a mechanism to refine a protrusion decomposition by another way, while preserving an $(\fcal, c)$-minor packing.

\begin{restatable}[$\bigstar$]{lemma}{lemDichProduct}
\label{lem:dichotomy:product-packing}
    Let $\fcal$ be a family of connected graphs, $c\in \nn$, $(P_0,P_1,\dots,P_p)$ be a nice $(\alpha_1, \beta_1)$-protrusion decomposition of a graph $G$, and $(Q_0,Q_1,\dots,Q_q)$ be a nice $(\alpha_2, \beta_2)$-protrusion decomposition of $G-P_0$ with an $(\fcal, c)$-minor packing.
    Then  $G$ admits a nice $(\alpha_1 + 2\alpha_2\cdot \beta_2, \,\beta_1 + \beta_2)$-protrusion decomposition with an $(\fcal, c)$-minor packing, whose root bag is $P_0 \cup Q_0$.
    Furthermore, the new decomposition and its minor packing can be computed in linear time. 
\end{restatable}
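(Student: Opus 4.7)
I would take the new root bag to be $P_0 \cup Q_0$, of size at most $\alpha_1 + \alpha_2 \le \alpha_1 + 2\alpha_2\beta_2$ (for $\beta_2 \ge 1$; the case $\beta_2=0$ is degenerate). The guiding observation is that in $G - (P_0 \cup Q_0)$, any edge $uv$ must lie within a common part $P_i$ of the first decomposition (otherwise it would cross $N_G(P_i) \subseteq P_0$) and within a common part $Q_j$ of the second (otherwise, inside $G - P_0$, it would cross $N_{G-P_0}(Q_j) \subseteq Q_0$). Consequently every connected component of $G - (P_0 \cup Q_0)$ lies in a unique intersection $P_i \cap Q_j$ with $i, j \ge 1$, a property which drives the whole construction.

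\textbf{Defining the pieces.} I would call a pair $(i,j)$ \emph{touching} if $N_{G-P_0}(Q_j) \cap P_i \neq \emptyset$, and \emph{pure} otherwise. For each touching pair with $P_i \cap Q_j \neq \emptyset$, I take the entire intersection $P_i \cap Q_j$ as one piece; its external neighborhood splits into at most $\beta_1$ vertices in $N_G(P_i) \cap P_0$ and at most $\beta_2$ vertices in $N_{G-P_0}(Q_j) \cap Q_0$. For each $F \in \fcal$ and each $j \in I(F)$, if the connected minor model of $F$ in $G[Q_j]$ sits inside a component $C$ belonging to a pure pair, I promote $C$ itself to a dedicated piece; since $C$ has no neighbor in $Q_0$, its boundary lies in $N_G(P_i) \cap P_0$ with size at most $\beta_1$. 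All remaining pure components within each $P_i$ are bundled into a single piece $R_i^{\mathrm{p}}$, again with boundary contained in $N_G(P_i)$. In all three cases, the treewidth is bounded by $\tw(G[N_G[P_i]]) \le \beta_1$, so each piece is a nice $(\beta_1 + \beta_2)$-protrusion. For the $(\fcal, c)$-minor packing, I map each $j \in I(F)$ to the unique piece containing $F$'s model in $G[Q_j]$; distinct $j$'s always yield distinct pieces (different second coordinate for touching, different component for dedicated pure), and disjointness of $I(F_1), I(F_2)$ transfers to the new packing.

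\textbf{Counting and main obstacle.} The key step is to count the pieces. For touching pairs, each touching $P_i$ for a fixed $j$ contains at least one vertex of $N_{G-P_0}(Q_j)$, and since every vertex lies in a unique $P_i$, there are at most $|N_{G-P_0}(Q_j)| \le \beta_2$ such pairs per $Q_j$, totaling at most $q\beta_2 \le \alpha_2\beta_2$. Dedicated pure pieces are indexed by pairs $(F, j)$ with $j \in I(F)$; since the sets $I(F)$ are pairwise disjoint subsets of $[1,q]$, their total count is at most $q \le \alpha_2$. Bundled pure pieces number at most $p \le \alpha_1$. Summing yields $\alpha_1 + \alpha_2 + \alpha_2\beta_2 \le \alpha_1 + 2\alpha_2\beta_2$, matching the claimed bound. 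The main obstacle is precisely the tension between bundling pure components (needed to control the count) and preserving the minor packing (which forbids two distinct minor models from ending up in the same piece); isolating the minor-containing pure components into dedicated pieces resolves this at the cost of an additive $\alpha_2$ that is absorbed into the bound. The construction runs in time $\Oh_{\fcal,\beta_1,\beta_2}(|V(G)|)$ via a single BFS over $G - (P_0 \cup Q_0)$ to identify components and classify pair types, combined with fixed-parameter minor-testing inside each bounded-treewidth piece $G[Q_j]$.
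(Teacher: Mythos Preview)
Your proposal is correct and follows essentially the same approach as the paper. The only organizational difference is that the paper first pre-processes the $Q_j$'s by splitting each packing-carrying $Q_j$ into the connected component containing the model and the rest, then treats the connected $Q_j$'s as whole pieces and handles disconnected ones via the touching pairs $(i,j)$; you instead skip the pre-processing and directly classify by touching/pure pairs while carving out model-carrying components in the pure case as dedicated pieces. Both routes rest on the same two observations---that every component of $G-(P_0\cup Q_0)$ lies in a unique $P_i\cap Q_j$, and that there are at most $\alpha_2\beta_2$ touching pairs---and both absorb the at most $\alpha_2$ packing-carrying pieces into the slack $\alpha_2\beta_2$ in the final bound.
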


The following lemma shows that if we have a sufficiently large packing of $F$-minors in a bounded-treewidth graph, for every $F \in \fcal \sub {\cal H}^{\mathsf{conn}}_h$, then we can choose a large subset from each packing, so that together all the picked minor models are disjoint.

\begin{lemma}\label{lem:dichotomy:multi-packing}
    Let $h \in \nn$, $\fcal \sub {\cal H}^{\mathsf{conn}}_h$, $G$ be a graph such that $\tw(G) \le \beta$ and for each $F \in \fcal$ there is a disjoint packing of $6(\beta+1)\cdot|{\cal H}^{\mathsf{conn}}_h|\cdot (k+h)$ minor models of $F$ in $G$.
    Then there exists a nice $(\Oh_{h,\beta}(k),\, 2\beta+2)$-protrusion decomposition $(Q_0,Q_1,\dots,Q_\ell)$ of $G$ with an $(\fcal, k+h)$-minor packing.
    Furthermore, $(Q_0,Q_1,\dots,Q_\ell)$ and the corresponding minor packing can be constructed in time $\Oh_{h,\beta}(|V(G)|)$.
\end{lemma}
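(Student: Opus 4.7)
The plan is to use a width-$\le\beta$ tree decomposition $(T,\mathsf{bag})$ of $G$ (computable in linear time for fixed $\beta$ and then rooted arbitrarily), and to build the output by greedily marking a small subset $M \subseteq V(T)$ whose LCA-closure $L$ yields the root bag $Q_0 = \bigcup_{s \in L}\mathsf{bag}(s)$, the protrusions being the vertex sets of the components of $T-L$. For every model $\phi \in \Phi_F$ in the given packing, let $\mathsf{bags}(\phi) := \{t : \mathsf{bag}(t) \cap V(\phi) \ne \emptyset\}$; this is a connected subtree of $T$ because $F$ and each of its branch sets are connected, and its root is $\top(\phi)$. Processing $V(T)$ in post-order, the algorithm maintains $M$ together with sets $I(F) \subseteq M$ for each $F \in \fcal$: upon visiting $t$, if some $F$ still has $|I(F)| < k+h$ and some $\phi \in \Phi_F$ with $\top(\phi) = t$ has not yet been destroyed by a prior mark, then $t$ is marked and added to $I(F)$, with $\phi$ recorded as the witness.

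The counting argument showing $|I(F)| = k+h$ for every $F$ uses the slack in the hypothesis $N = 6(\beta+1)|{\cal H}^{\mathsf{conn}}_h|(k+h)$. A model $\phi \in \Phi_F$ becomes unavailable only if some previously marked node sits in $\mathsf{bags}(\phi)$ strictly below $\top(\phi)$ (each mark kills at most $\beta+1$ such $\phi$ since $\Phi_F$ is vertex-disjoint and every bag has at most $\beta+1$ vertices) or if $\top(\phi)$ gets claimed for a competing $F' \ne F$ (again at most $\beta+1$ models of $F$ per mark, via shared tops). Since $|M| \le |\fcal|(k+h) \le |{\cal H}^{\mathsf{conn}}_h|(k+h)$, the total waste per $F$ stays below $2(\beta+1)|{\cal H}^{\mathsf{conn}}_h|(k+h) < N - (k+h)$, so the $(k+h)$-th mark is always reached before $\Phi_F$ runs dry.

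The main obstacle is guaranteeing that every claimed $\phi$ sits entirely inside a single protrusion of the final decomposition, so that $F \le_m G[Q_i]$ holds: marking $\top(\phi)$ itself would absorb $\mathsf{bag}(\top(\phi))$ into $Q_0$ and strip vertices from $V(\phi)$, destroying the minor model. The fix, sketched in \Cref{fig:dichotomy:multi-packing}, is to push every cut one level up — actually marking the parent of $\top(\phi)$, whose bag lies outside $\mathsf{bags}(\phi)$ by definition of the top — and simultaneously strengthening the availability predicate so that neither a parent-node nor any LCA which the closure will later introduce ends up inside a committed $\mathsf{bags}(\phi)$. This extra stringency costs only an $O(\beta+1)$ multiplicative factor per mark in the waste accounting and is comfortably absorbed by the $6(\beta+1)|{\cal H}^{\mathsf{conn}}_h|$ slack. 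Combined, $|L| \le 2|M| = \Oh_{h,\beta}(k)$ and $|Q_0| \le (\beta+1)|L| = \Oh_{h,\beta}(k)$, while by \Cref{lem:prelim:lca}(1) each component of $T - L$ has at most two neighbors in $L$, producing protrusions of boundary $\le 2(\beta+1)$ and treewidth $\le \beta$; the whole procedure runs in $\Oh_{h,\beta}(|V(G)|)$ via a linear-time treewidth algorithm, one post-order sweep with $\Phi_F$-indexed lookup, and the linear LCA-closure routine.
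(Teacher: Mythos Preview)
Your overall plan coincides with the paper's: work on a rooted width-$\le\beta$ tree decomposition, greedily mark a small set of nodes so that one component of $T-L$ per mark contains a fresh model of some still-incomplete $F\in\fcal$, and let $Q_0$ be the union of bags of $L=\overline{\mathsf{LCA}}(M)$. Your ``mark the parent of $\top(\phi)$'' is exactly the paper's ``mark the deepest $t$ with a model in $\bar\chi(C^t_i)$'', and your per-mark $O(\beta+1)$ waste estimate has the same flavour as the paper's invariant~(4).

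The one place where your sketch is genuinely incomplete is the LCA-interaction, which is also where the paper spends most of its effort. Your stated predicate (``$\phi$ has not yet been destroyed by a prior mark'', i.e.\ no prior $m\in M$ lies in $\mathsf{bags}(\phi)$) is \emph{not} sufficient: $\mathsf{bags}(\phi)$ is in general a proper subtree of the full subtree below $\top(\phi)$, so there can be two prior marks $m_1,m_2$ in the subtree of $\top(\phi)$ but outside $\mathsf{bags}(\phi)$ whose LCA equals $\top(\phi)\in\mathsf{bags}(\phi)$; the closure then puts $\top(\phi)$ into $L$ and kills the model you just committed. You recognise this and promise to ``strengthen the availability predicate'', but you never say what the predicate is, nor verify that committed models land in \emph{pairwise distinct} protrusions (needed for the $I(F)$'s to be disjoint). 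One concrete repair is to declare $\phi$ unavailable whenever $\overline{\mathsf{LCA}}(\text{current }M)$ meets $\mathsf{bags}(\phi)$; since $|\overline{\mathsf{LCA}}(M)|\le 2|M|$ and each closure node's bag meets at most $\beta+1$ models of a fixed $\Phi_F$, the extra waste stays within the slack, and one then checks (using post-order and parent-marking) that no \emph{future} mark or LCA can enter a committed $\mathsf{bags}(\phi)$.

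The paper avoids this post-hoc patching altogether by \emph{interleaving} LCA-maintenance with marking. In each iteration it locates the globally deepest $t$ below which some incomplete $F$ still has a model; if some $\hat t=\mathsf{LCA}(t_1,t_2)$ with $t_1,t_2\in M_i$ lies below $t$ and is not yet in $M_i\cup L_i$, it adds the deepest such $\hat t$ to $L_{i+1}$ (no score increases); otherwise it adds $t$ to $M_{i+1}$ and assigns the model to the newly created component. This keeps the invariant that every non-root component of $T-(M_i\cup L_i)$ already has at most two neighbours, so no closure step after commitment can ever disturb an assigned model. The per-iteration loss is bounded via the bags of $t$ and its two children in the \emph{binary} tree decomposition (at most $3(\beta+1)$ models of each $\Phi_F$), which is precisely what the constant $6(\beta+1)$ in the hypothesis is calibrated to.
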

\begin{proof}
    Consider a binary tree decomposition $(T,\chi)$ of $G$ of width $\le \beta$, rooted at some node $r$. 
    It can be constructed in linear time~\cite{DBLP:journals/siamcomp/Bodlaender96}.
    For a subset $C \sub V(T)$ let $\bar\chi(C) = \chi(C) \sm \chi(V(T) \sm C)$ be the set of vertices appearing only in the bags from $C$.
    We will describe a procedure that scans $T$ bottom-up and marks two disjoint sets $M,L \sub V(T)$, both initialized as empty.
    Intuitively, a node will be marked (and added to $M$) when it is a deepest node below which we can still find a minor model that can be added to the minor packing, and $L$ will form the LCA closure of $M$.
    When we add a new node to $M$ or $L$, we will create new connected components $C$ of $T - (M \cup L)$ not containing $r$.
    To some of these components we will assign $F \in \fcal$ such that $F \le_m G[\bar\chi(C)]$.
    We will refer to the number of components assigned to $F$ as the {\em score} of $F$.
    We say that $F$ is {\em completed} when its score reaches $k+h$.

    After the $i$-th step we shall maintain the following invariants.
    \begin{enumerate}
        \item The set $M_i \cup L_i$ comprises $i$ nodes and $L_i$ is contained in $\lca (M_i)$.  \label{item:dich:count}
        \item The sum of all scores equals $|M_i|$. \label{item:dich:scores}
        \item Let $C_i^r$ denote the connected component of $T - (M_i \cup L_i)$ that contains $r$. For every connected component $C$ of $T - (M_i \cup L_i)$ different from $C_i^r$, it holds that $|N(C)| \le 2$.\label{item:dich:nc2}      
        \item For each $F \in \fcal$ either $F$ is already completed or $G[\bar\chi(C^r_i)]$ contains a packing of $3(\beta+1)\cdot\big(2|{\cal H}^{\mathsf{conn}}_h|\cdot (k+h) - i \big)$ minor models of $F$.
        \label{item:dich:remain}
    \end{enumerate}

    Clearly, the invariants hold for $i = 0$, before the first iteration.
    We will never increase a score of $F$ that has been already completed so the algorithm terminates when $|M_i| = |\fcal|\cdot (k+h)$.
    As $|M_i \cup L_i| \le |\lca(M_i)| \le 2|M_i|$ (\Cref{lem:prelim:lca}), it follows that the algorithm will perform at most $2|\fcal|\cdot (k+h) \le 2|{\cal H}^{\mathsf{conn}}_h|\cdot (k+h)$ iterations.

    Consider the $i$-th iteration, $1 \le i < 2|\fcal|\cdot(k+h)$, and suppose there are still some not completed graphs in $\fcal$. 
    Consider $t \in V(T)$ that is not equal or descendant of any $t' \in M_i \cup L_i$. 
    For such $t$ let $C^t_i \sub V(T)$ denote the set of those nodes which are descendants of $t$ but not equal to or a descendant of any $t' \in M_i \cup L_i$.    
    We look for the deepest node $t \in V(T)$ for which 
    $G[\bar\chi(C^t_i)]$
    contains a minor model of some $F \in \fcal$ that is not yet completed.
    By invariant~(\ref{item:dich:remain}), such a node $t$ must exist because for each not completed $F$ there is a packing of at least $3(\beta + 1)$ models of $F$ in $\bar\chi(C^r_i)$ whereas $\chi(r)$ can intersect at most $\beta+1$ of them, so $t=r$ satisfies the criteria.
    
    We consider two scenarios. First suppose that there are some $t_1, t_2 \in M_i$ for which $\hat t =\lcauv(t_1,t_2)$ is  descendant of $t$ and  $\hat t$ does not belong to $L_i \cup M_i$.
    Choose such  $\hat t$ that is deepest and update $L_{i+1} = L_i \cup \{\hat t\}$, $M_{i+1} = M_i$.
    In the second scenario, when there is no such  $\hat t$, we set  $M_{i+1} = M_i \cup \{ t\}$, $L_{i+1} = L_i$.
    By the choice of $t$, in the latter scenario we have created a new component $C$ of  $T - (M_i \cup L_i)$, not containing $r$, for which $F \le_m G[\bar\chi(C)]$.
    Note that $C$ can be chosen as a single component of  $T - (M_i \cup L_i)$ because $F$ is connected.
    Then we assign $F$ to~$C$.
    See \Cref{fig:dichotomy:multi-packing} for an example.

     \begin{figure}[h]
    \centering
    \includegraphics[width=\linewidth]{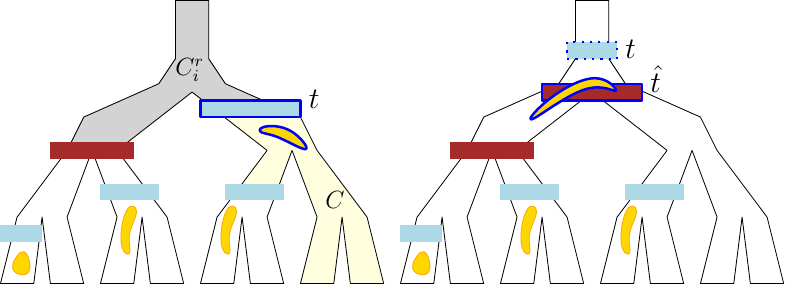}
    \caption{Illustration to \Cref{lem:dichotomy:multi-packing}.
    The bags corresponding to nodes from $M_i$ are drawn in blue and those from $L_i$ are drawn in brown.
    The already collected minor models, assigned to some components, are yellow.
    Left: A bag of $t$ selected in the $i$-th iteration and a minor model of some $F \in \fcal$ below $t$ have blue borders.
    The first scenario does not trigger so $t$ is added to $M_i$.
    This creates a component $C$ (light yellow) of $T - (M_i \cup L_i)$ to which we assign $F$.
    The component $C^r_i$ has gray background.
    Right: In this situation there is $\hat t$ below $t$ that is $\lcauv$ of two nodes from $M_{i-1}$.
    We do not add $t$ to $M_i$ but instead $\hat t$ is inserted to $L_i$.
    We do not increase any score in this scenario.
    }
    \label{fig:dichotomy:multi-packing}
\end{figure}
    
    We need to justify that all the invariants are preserved. The first two are clear.
    
    To show invariant (\ref{item:dich:nc2}), we inspect the two scenarios. In first one, we have chosen the   $\hat t =\lcauv(t_1,t_2)$ for some $t_1,t_2 \in M_i$ that does not yet belong to $L_i \cup M_i$.
    Suppose that one of the created components of $T-(M_i\cup L_i)$ below $\hat t$ has more than two neighbors.
    Then it has at least two neighbors different from $\hat t$.
    Let $s_1,s_2 \in M_i \cup L_i$ denote these neighbors and let $\hat s =\lcauv(s_1,s_2)$.
    Note that $\hat s$ must be a descendant of $\hat t$.
    By invariant (\ref{item:dich:count}), there are $s'_1,s'_2 \in M_i$ such that $\hat s =\lcauv(s'_1,s'_2)$ as well.
    This contradicts the choice of  $\hat t$ because  $\hat s$ satisfies the same condition and is located deeper.
    The same argument works in the second scenario. If some component of  $T-(M_i\cup L_i)$ created below $t$ has more than two neighbors, then $t$ must have a descendant $\hat s$ with the same property, hence it is the first scenario that must have occurred.

    To advocate invariant (\ref{item:dich:remain}) we show that for every not-yet-completed $F \in\fcal$  the 
    number of disjoint minor models of $F$ in $\bar\chi(C^r_i)$ can decrease only moderately.
    Let ${\cal M}^F$ denote the initial disjoint packing of $6(\beta+1)\cdot |{\cal H}^{\mathsf{conn}}_h|\cdot (k+h)$ minor models of $F$ in $G$.
    Let ${\cal M}^F_i \sub {\cal M}^F$ denote the subfamily of those minor models which are entirely contained in $\bar\chi(C^r_i)$.
    We aim to show that $|{\cal M}^F_{i+1}| \ge |{\cal M}^F_i| - 3(\beta+1)$; this will imply invariant (\ref{item:dich:remain}).

    Let us fix $F \in\fcal$ and let $t \in V(T)$ be the node chosen in the $i$-th iteration, in any of the two scenarios.
    Recall that we work on a binary tree decomposition so $t$ has 
    either one or two children; w.l.o.g. assume there are two of them: $t_1,t_2$.
    Furthermore, let $C_1,C_2$ be the connected components of $C^r_i - \{t\}$ that contains $t_1,t_2$, respectively.
    Let $U \sub V(G)$ induce a minor model from ${\cal M}^F_i \sm {\cal M}^F_{i+1}$.
    Suppose that $M \sub \bar\chi(C_1)$.  
    Because we have not chosen $t_1$ in the $i$-th iteration, 
    $U$ must intersect $\chi(t_1)$.
    Similarly, if $U \sub \bar\chi(C_2)$ then $U$ must intersect $\chi(t_2)$.
    If neither of these cases hold, $U$ must intersect $\chi(t)$ because $G[U]$ is connected.
    Since $|\chi(t) \cup \chi(t_1) \cup \chi(t_2)| \le 3(\beta+1)$
    and the considered minor models are disjoint, we infer that $|{\cal M}^F_i \sm {\cal M}^F_{i+1}| \le 3(\beta+1)$.
    Hence invariant (\ref{item:dich:remain}) is preserved.

    Having completed all $F \in \fcal$, the process stops.
    Let $j$ denote the number of the last iteration, let $M = M_j, L = \lca(M)$.
    Invariant (\ref{item:dich:scores}) implies that $|M| = |\fcal|\cdot (k+h)$. 
    By invariant (\ref{item:dich:nc2}) we know that $L \sm L_j \sub C^r_j$, that is, we do not add any new vertices to $L$ that would intersect a component of $T -(M_j \cup L_j)$ to which we have assigned some $F \in \fcal$.
    We define $Q_0 = \bigcup_{t \in  L} \chi(t)$; this set has at most $2(\beta+1) \cdot |M| \le 2(\beta+1) \cdot|{\cal H}^{\mathsf{conn}}_h|\cdot (k+h)$
    vertices.
    By \Cref{lem:prelim:lca} the neighborhood of each connected component of $G-Q_0$ is contained in at most two bags  of $(T,\chi)$, so it contains at most $2(\beta+1)$ vertices.
    For each $F \in \fcal$ we gather the components of  $T - L$ assigned to $F$.
    For each such component $C$ we add the set $Q_C = \bar\chi(C)$ to the protrusion decomposition.
    This gives us the $(\fcal,k+h)$-minor packing.
    The remaining components of  $T - L$ can be grouped into $\le 2|L|$ sets, each with neighborhood in at most two bags. 
    This concludes the proof.
\end{proof}

We are ready to prove the main lemma of this subsection and thus complete the proof of \Cref{lem:dichotomy:compress}.

\begin{proof}[Proof of Lemma~\ref{lem:dichotomy:construct}]
    Let $\widehat k = 6(\beta+1)\cdot |{\cal H}^{\mathsf{conn}}_h| \cdot (k+h)$.
    Due to \Cref{lem:prelims:protrusion-neighborhood} we can assume that  $(P_0,P_1,\dots,P_\ell)$ is nice, i.e., for $i \in [1,\ell]$ we have $|N_G(P_i)| \le \beta$.
    Initialize $\hcal^1 = {\cal H}^{\mathsf{conn}}_h$, $\alpha_1 = \alpha$, $\beta_1 = \beta$, and ${\cal P}^1 = (P_0,P_1,\dots,P_\ell)$, $P^1 = P_0$. 
    In the $i$-th iteration we will remove one set from $\hcal^i$ and construct a new protrusion decomposition ${\cal P}^{i+1}$ with a root bag denoted as $P^{i+1}$.
    We will maintain an invariant that $G-P^i$ is $F$-minor-free for each $F \in {\cal H}^{\mathsf{conn}}_h \sm \hcal^i$.

    In the $i$-th iteration, $i\ge 1$, we apply \Cref{lem:dichotomy:erdos} for every $F \in \hcal^i$, the graph $G-P^i_0$ and $\ell = \widehat k$.
    Suppose that for some $F \in \hcal^i$ this call returns a vertex set $S_F$ of size $\le (\beta+1) \cdot \widehat k$ for which $G-(P^i \cup S_F)$ is $F$-minor-free.
    Then we apply \Cref{lem:dichotomy:lca} to the graph $G-P^i$ and set $S_F$ to compute 
    a nice $(2(\beta+1)\cdot \widehat k, \,2(\beta+1))$-protrusion decomposition of $G-P^i$ whose root bag contains $S_F$.
    Next, we apply \Cref{lem:dichotomy:product-packing}, disregarding the parameter $\fcal$, to transform it into a nice $(\alpha_{i+1}, \beta_{i+1})$-protrusion decomposition ${\cal P}^{i+1}$ of $G$ whose root bag $P^{i+1}$ contains $S_F$ and $P^{i}$.
    We can estimate $\alpha_{i+1} = \alpha_{i} + 2(\beta+1)^2\cdot \widehat k$ and $\beta_{i+1} = \beta_i + 2(\beta + 1)$.
    We set $\hcal^{i+1} = \hcal^i \sm \{F\}$ and the invariant is preserved.

    Let $j$ denote the iteration number at which this process stops.
    Let $\fcal = \hcal^j$.
    We have $j \le  |{\cal H}^{\mathsf{conn}}_h|+1$ so $\alpha_j = \alpha + \Oh_{h,\beta}(k)$ and $\beta_j = \Oh_{h,\beta}(1)$.
    If $\fcal = \emptyset$ then we are done so let us assume otherwise.

    Since the process has stopped, \Cref{lem:dichotomy:erdos} guarantees that every $F \in \fcal$ has a packing of $\widehat k$ disjoint minor models in  $G-P^j$.
    Hence the requirements of \Cref{lem:dichotomy:multi-packing} are satisfied with respect to the graph $G-P^j$.
    Applying the lemma gives us a nice $(\Oh_{h,\beta}(k),\, 2\beta+2)$-protrusion decomposition $\cal Q$ of $G-P^j$ with an $(\fcal,k+h)$-minor packing.
    Finally, we apply \Cref{lem:dichotomy:product-packing} with respect to ${\cal P}_j$, $\cal Q$, and the constructed $(\fcal,k+h)$-minor packing for $\cal Q$.
    As a result, we obtain a nice $(\alpha + \Oh_{h,\beta}(k),\, \Oh_{h,\beta}(1))$-protrusion decomposition $\cal R$ of $G$ with an $(\fcal,k+h)$-minor packing.
    Furthermore, the root bag of $\cal R$ contains $P^j$.
    Therefore $G-P^j$ is $F$-minor-free for each $F \in {\cal H}^{\mathsf{conn}}_h \sm \fcal$ and so the above minor packing yields $(k,h)$-dichotomy property.
\end{proof}

\section{Uniform $2$-lossy polynomial compression algorithm} 
\label{sec:lossy-kernel}
In this section we prove Theorem~\ref{thm:lossy-kernel}.
Several reduction rules from the section will also come in useful in
Section~\ref{sec:lossy-protocol}.
Throughout Sections~\ref{sec:lossy-kernel} and~\ref{sec:lossy-protocol}, 
we assume that the family $\mathcal{F}$ contains a planar graph.
Fix an arbitrary $F_{\planar} \in \mathcal{F}$ which is a planar graph. The Grid Minor Theorem implies that
for any \fcal-deletion set $S$ of $G$, the treewidth of $G-S$ is  bounded in terms of $|V(F_{\planar})|$~\cite{DBLP:journals/jct/RobertsonS86}. 
We shall denote by $\eta = \eta(\fcal)$ the upper bound on the treewidth of an $\fcal$-minor-free graph.

Following the convention from \cite{DBLP:conf/stoc/LokshtanovPRS17}, we consider solution cost capped at $k+1$, i.e., we measure the cost of  an $\fcal$-deletion set $S$ in $G$ as  
$\val_{(G,k)}(S): = \min\{|S|,k+1\}$. 
This ensures that the parameter $k$ captures the difficulty of instance $(G,k)$.

Our kernelization algorithm (and protocol) starts by constructing a {\em near-protrusion decomposition} of $V(G)$ as done in~\cite[Lemma~$25$]{DBLP:conf/focs/FominLMS12}. The construction is~\cite{DBLP:conf/focs/FominLMS12} is randomized because it relies on a randomized constant-factor approximation algorithm for \fdel. This step, and hence the entire construction of~\cite[Lemma~$25$]{DBLP:conf/focs/FominLMS12}, can be made deterministic by using the deterministic constant-factor approximation for \fdel\ from \Cref{lem:prelim:apx}.

\begin{lemma}[{\cite[Lemma~$25$]{DBLP:conf/focs/FominLMS12} together with \Cref{lem:prelim:apx}}]\label{lem:near-protrusion}
    There is a polynomial-time algorithm that given an instance $(G,k)$ of \fdel, when $\mathcal{F}$ contains a planar graph, either reports correctly that $(G,k)$ is a no-instance of \fdel, 
    or computes $X,Z \subseteq V(G)$, $X \cap Z = \emptyset$ such that:
    \begin{itemize}
        \item $|X|=\Oh_{\eta}(k)$, $|Z| =\Oh_{\eta}(k^3)$, 
     $X$ is an \fcal-deletion set of $G$,
        \item for each connected component $C$ of $G-(X\cup Z)$, $|N_G(C) \cap Z| \leq 2(\eta+1)$, and
        \item for each connected component $C$ of $G-(X\cup Z)$, and distinct $u,v \in N_G(C) \cap X$, 
        there are at least $k+\eta+2$ vertex-disjoint paths from $u$ to $v$ in $G$. 
    \end{itemize}
\end{lemma}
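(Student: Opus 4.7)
The plan is to follow Lemma~25 of~\cite{DBLP:conf/focs/FominLMS12} and replace its only randomized ingredient -- a constant-factor approximation for \fdel\ -- with the deterministic version from \Cref{lem:prelim:apx}. I would first invoke \Cref{lem:prelim:apx} on $(G,k)$ to obtain an $\fcal$-deletion set $X_0$. If $|X_0|$ exceeds $c \cdot k$, where $c$ is the approximation ratio, then $\opt_\fcal(G) > k$ and we output a no-instance; otherwise, set $X := X_0$. Then $|X| = \Oh_\eta(k)$ and, since $G-X$ is $\fcal$-minor-free, the Excluded Grid theorem yields $\tw(G-X) \le \eta$, settling the first property.

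Next, for every unordered pair $\{u,v\}$ of distinct vertices of $X$, I would use max-flow to test whether $G$ admits $k+\eta+2$ internally vertex-disjoint $u$-$v$ paths. If not, the algorithm returns a minimum $u$-$v$ separator $S_{u,v} \sub V(G) \sm \{u,v\}$ of size at most $k+\eta+1$, which I collect into an auxiliary set $Y$. Since the number of pairs is $\Oh_\eta(k^2)$, we obtain $|Y| = \Oh_\eta(k^3)$. To bound the neighborhood of each component in the final $Z$, I would fix a tree decomposition $(T,\chi)$ of $G-X$ of width $\le \eta$ (computable in linear time for constant $\eta$), mark every node $t \in V(T)$ with $\chi(t) \cap (Y \sm X) \ne \emptyset$, take the LCA-closure $L$ of the marked nodes, and set $Z := \bigcup_{t \in L} \chi(t)$. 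By \Cref{lem:prelim:lca}, $|L|$ is at most twice the number of marked nodes, so $|Z| \le (\eta+1)\cdot|L| = \Oh_\eta(k^3)$, and each component of $(G-X)-Z$ -- equivalently of $G-(X\cup Z)$ -- has its neighborhood contained in at most two bags of $(T,\chi)$, yielding $|N_G(C)\cap Z| \le 2(\eta+1)$.

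For the third property, pick a component $C$ of $G-(X\cup Z)$ and distinct $u,v \in N_G(C) \cap X$, and assume toward contradiction that fewer than $k+\eta+2$ internally vertex-disjoint $u$-$v$ paths exist in $G$. Then $S_{u,v}$ was added to $Y$, and by our LCA construction $S_{u,v} \sm X \sub Z$, so $S_{u,v} \sub X \cup Z$. Since $C$ is disjoint from $X \cup Z$ and contains both a neighbor of $u$ and a neighbor of $v$, concatenating these with a path through $C$ yields a $u$-$v$ path in $G - S_{u,v}$, contradicting the separator property. The main technical point to verify is tracking the constants through the LCA step, so that the statement's tight bound $|N_G(C)\cap Z| \le 2(\eta+1)$ is obtained rather than a larger constant; this is precisely what item~(1) of \Cref{lem:prelim:lca} provides, provided we take the full LCA closure rather than the marked nodes alone.
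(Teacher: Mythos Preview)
The paper does not give its own proof of this lemma; it simply cites \cite[Lemma~25]{DBLP:conf/focs/FominLMS12} and remarks that the only randomized ingredient there, the constant-factor approximation for \fdel, can be replaced by the deterministic one from \Cref{lem:prelim:apx}. Your proposal carries out exactly this plan and reconstructs the cited argument faithfully: approximate solution $X$, low-connectivity separators collected into $Y$, then an LCA-closure step on a width-$\eta$ tree decomposition of $G-X$ to produce~$Z$.

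There is one slip to fix. You write ``mark every node $t\in V(T)$ with $\chi(t)\cap(Y\sm X)\ne\emptyset$'' and then bound $|L|$ by twice the number of marked nodes. As stated this fails: a single vertex of $Y\sm X$ may lie in arbitrarily many bags, so the set of marked nodes need not be $\Oh_\eta(k^3)$. The standard remedy, used verbatim in \Cref{lem:dichotomy:lca} of the paper, is to mark \emph{one} node per vertex of $Y\sm X$ (any bag containing it) and only then take the LCA closure. Since $S\subseteq\lca(S)$, you still get $Y\sm X\subseteq Z$, so your separator argument for the third property goes through unchanged, while the size bound $|Z|\le(\eta+1)\cdot|L|\le 2(\eta+1)\cdot|Y\sm X|=\Oh_\eta(k^3)$ now holds.
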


Throughout the rest of this section, $X,Z$ stand for the sets returned by Lemma~\ref{lem:near-protrusion} for input $(G,k)$.
We now design a $(1+\epsilon)$-lossy strict reduction rule
(see \Cref{sec:prelims-lossy})
that bounds the size of the neighborhood of each connected component of $G-(X\cup Z)$.

\begin{lossyredrule}\label{lrr:protrusion}
    Let $\epsilon >0$ and 
    let $C$ be a connected component of $G-(X \cup Z)$ such that $|N_G(C) \cap X| \geq \frac{(1+\epsilon)(\eta+1)}{\epsilon} $.
    In this case the reduction algorithm takes as input an instance $(G,k)$ of \fdel\ and outputs the instance $(G':= G-(N_G(C)\cap X),\,k':= k-|N_G(C) \cap X|)$ of \fdel.

    The solution lifting algorithm takes as input instances $(G,k), (G',k')$ and a set $S' \subseteq V(G')$ and outputs the set $S \subseteq V(G)$, where $S:=S'\cup (N_G(C) \cap X)$.
\end{lossyredrule}

\begin{lemma}[$\bigstar$]\label{lem:lrr-one-correct}
    \Cref{lrr:protrusion} is a $(1+\epsilon)$-lossy strict reduction rule.
\end{lemma}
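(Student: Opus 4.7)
The plan is to derive, for the lifted solution $S = S' \cup N$ with $N = N_G(C) \cap X$ and $n = |N|$, two key bounds: an \emph{optimum-drop bound}
\[
\opt_{\fdelsmall}((G,k)) \;\ge\; \opt_{\fdelsmall}((G',k')) + \frac{n}{1+\epsilon},
\]
and a \emph{value-lifting bound} $\val^{\fdelsmall}_{(G,k)}(S) \le \val^{\fdelsmall}_{(G',k')}(S') + n$. These two together imply the $(1+\epsilon)$-lossy strict guarantee after a short algebraic manipulation. Throughout I will repeatedly use the rearrangement $n - (\eta+1) \ge n/(1+\epsilon)$, which is exactly the hypothesis $n \ge (1+\epsilon)(\eta+1)/\epsilon$.

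The combinatorial heart of the proof is the claim that \emph{every} $\fcal$-deletion set $S$ of $G$ with $|S| \le k$ satisfies $|S \cap N| \ge n - (\eta+1)$. Since $G - S$ is $\fcal$-minor-free, its treewidth is at most $\eta$; fix a tree decomposition $\mathcal{T}$ of $G - S$ of width $\le \eta$. For any two distinct $u, v \in N \sm S$, Lemma~\ref{lem:near-protrusion} provides $k + \eta + 2$ vertex-disjoint $u$--$v$ paths in $G$, so at least $\eta + 2$ such paths survive in $G - S$. Since any separating bag of $\mathcal{T}$ has at most $\eta + 1$ vertices, $u$ and $v$ must share a bag; by the Helly property of subtrees of a tree, the whole set $N \sm S$ lies in a single bag of size $\le \eta + 1$. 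Hence $|S \cap N| \ge n - (\eta+1) \ge n/(1+\epsilon)$.

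The optimum-drop bound now follows by case analysis on whether $G$ admits an $\fcal$-deletion set of size at most $k$. If yes, pick an optimal such $S^*$; the combinatorial claim gives $S^* \sm N$ as a feasible $\fcal$-deletion set in $G' = G - N$ of size at most $\opt_{\fdelsmall}((G,k)) - n/(1+\epsilon)$, which upper-bounds $\opt_{\fdelsmall}((G',k'))$. If not, then $\opt_{\fdelsmall}((G,k)) = k+1$, and any feasible $T \sub V(G')$ in $G'$ with $|T| \le k' = k - n$ would lift to $T \cup N$, a feasible $\fcal$-deletion set of $G$ of size $\le k$; since no such set exists, every feasible $G'$-solution has size $\ge k' + 1$, so $\opt_{\fdelsmall}((G',k')) = k' + 1$ and the gap $(k+1) - (k'+1) = n \ge n/(1+\epsilon)$ is as required. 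The value-lifting bound is immediate from $|S| = |S'| + n$ and $k+1 = (k'+1) + n$, together with the validity of $S$ in $G$, which holds because $G - S = G' - S'$ is $\fcal$-minor-free whenever $S'$ is feasible in $G'$.

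Finally, letting $a = \val^{\fdelsmall}_{(G',k')}(S')$ and $b = \opt_{\fdelsmall}((G',k'))$, the lossy inequality $\val^{\fdelsmall}_{(G,k)}(S)/\opt_{\fdelsmall}((G,k)) \le (1+\epsilon) \cdot a/b$ reduces, after substituting the two bounds above, to $(a+n) b \le (1+\epsilon) a (b + n/(1+\epsilon))$, equivalent to $n(b - a) \le \epsilon a b$; this is automatic since $a \ge b \ge 0$. The degenerate case $b = 0$ is handled via the standard $0/0 = 1$ convention: $G'$ is then $\fcal$-minor-free, $S = N$, $\val^{\fdelsmall}_{(G,k)}(S) \le n$, and the combinatorial claim gives $\opt_{\fdelsmall}((G,k)) \ge n/(1+\epsilon)$, so the lifted ratio is at most $1+\epsilon$. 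The main obstacle is the combinatorial core: cleanly packaging the vertex-connectivity guarantee from Lemma~\ref{lem:near-protrusion}, the bag-separation property of tree decompositions, and Helly's property to obtain the sharp bound $|S \cap N| \ge n - (\eta+1)$; once this is in hand, everything else is arithmetic.
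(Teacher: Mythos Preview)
Your combinatorial core and the two intermediate bounds are correct and essentially identical to the paper's approach (the paper packages the core as a separate observation and keeps the tighter form $\opt_{\fdelsmall}(I)\ge \opt_{\fdelsmall}(I') + n-(\eta+1)$ rather than weakening to $n/(1+\epsilon)$, but this is cosmetic).

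There is, however, a genuine gap in your final step: you prove the \emph{multiplicative} bound
\[
\frac{\val^{\fdelsmall}_{(G,k)}(S)}{\opt_{\fdelsmall}((G,k))}\ \le\ (1+\epsilon)\cdot\frac{a}{b},
\]
but the definition of a $(1+\epsilon)$-lossy \emph{strict} reduction rule requires the \emph{max} bound
\[
\frac{\val^{\fdelsmall}_{(G,k)}(S)}{\opt_{\fdelsmall}((G,k))}\ \le\ \max\Bigl\{\,1+\epsilon,\ \frac{a}{b}\,\Bigr\}.
\]
Since $a/b\ge 1$ always, one has $(1+\epsilon)\cdot a/b \ge \max\{1+\epsilon,\,a/b\}$, so your inequality is strictly weaker and does not establish the lemma. (Concretely, if $a/b=2$ and $\epsilon=0.1$, you only get $2.2$ where $2$ is required; this matters because the rule is applied exhaustively and the max form is what prevents the loss from compounding.)

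The fix is immediate from your own ingredients: from $\val_{(G,k)}(S)\le a+n$ and $\opt_{\fdelsmall}((G,k))\ge b+\frac{n}{1+\epsilon}$, the mediant inequality (Proposition~\ref{prop:fact} in the paper) gives
\[
\frac{a+n}{\,b+\frac{n}{1+\epsilon}\,}\ \le\ \max\Bigl\{\,\frac{a}{b},\ \frac{n}{n/(1+\epsilon)}\,\Bigr\}\ =\ \max\Bigl\{\,\frac{a}{b},\ 1+\epsilon\,\Bigr\},
\]
which is exactly what is needed. This is precisely how the paper concludes.
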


After an application of \Cref{lrr:protrusion}, some vertices from $X$ may get deleted. But the remaining set $X$, together with $Z$, still satisfies the properties of Lemma~\ref{lem:near-protrusion} in the resulting graph.
We keep the variable $X$ to denote this set in the new graph.

\begin{lemma}[$\bigstar$]\label{lem:protrusion-decomposition}
    When \Cref{lrr:protrusion} is no longer applicable, then there is an $(\alpha,\Oh(\eta / \epsilon))$-protrusion decomposition
     $(P_0,P_1, \ldots, P_{\ell})$ of $G$ for some $\alpha$, where $P_0:= X \cup Z$ and for each $i \in [1,\ell]$, $P_i$ is a connected component of $G-(X \cup Z)$.
     Furthermore, $(P_0,P_1, \ldots, P_{\ell})$ can be computed in polynomial time.
\end{lemma}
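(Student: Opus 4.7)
\textbf{Proof plan for \Cref{lem:protrusion-decomposition}.}
The plan is to take $P_0 := X \cup Z$ and let $P_1, \ldots, P_\ell$ be the vertex sets of the connected components of $G - P_0$, and then to verify the three conditions of an $(\alpha, \beta)$-protrusion decomposition for $\beta = \Oh(\eta/\eps)$ and $\alpha := \max(|P_0|, \ell)$. Conditions (1) and (3) will be immediate: the $P_i$'s form a partition of $V(G)$ by construction, and since each $P_i$ is a connected component of $G-P_0$, we have $N(P_i) \subseteq P_0$. So the real content is to verify condition~(2), namely that $N_G[P_i]$ is an $\Oh(\eta/\eps)$-protrusion for every $i \in [1,\ell]$.

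The first step is to bound the boundary $\partial_G(N_G[P_i]) = N_G(P_i)$. I would split this into $N_G(P_i) \cap X$ and $N_G(P_i) \cap Z$. By the non-applicability of \Cref{lrr:protrusion} to the component $P_i$ we get $|N_G(P_i) \cap X| < \tfrac{(1+\eps)(\eta+1)}{\eps}$, and \Cref{lem:near-protrusion} gives $|N_G(P_i) \cap Z| \le 2(\eta+1)$. Summing these yields $|N_G(P_i)| = \Oh(\eta/\eps)$, which handles the boundary part.

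The second, main step is to bound the treewidth of $G[N_G[P_i]]$. Here the key observation is that $X$ is an \fcal-deletion set by \Cref{lem:near-protrusion}, so $G - X$ is $\fcal$-minor-free and hence has treewidth at most $\eta$. Consequently $G[N_G[P_i] \setminus X]$, being an induced subgraph of $G-X$, has treewidth at most $\eta$. Taking a tree decomposition of $G[N_G[P_i] \setminus X]$ of width $\le \eta$ and adding the at most $\tfrac{(1+\eps)(\eta+1)}{\eps}$ vertices of $N_G(P_i) \cap X$ into every bag yields a tree decomposition of $G[N_G[P_i]]$ of width $\eta + \Oh(\eta/\eps) = \Oh(\eta/\eps)$, as required. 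Combined with the boundary bound, this shows $N_G[P_i]$ is an $\Oh(\eta/\eps)$-protrusion.

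For the algorithmic part, computing the connected components of $G - (X\cup Z)$ takes linear time, so producing the partition is polynomial. The part that requires a little more care is producing the constant-width tree decomposition of $G-X$ needed in the treewidth argument: since $\eta$ is a constant depending only on $\fcal$, this can be done in polynomial (in fact linear) time using a standard bounded-treewidth algorithm such as Bodlaender's, and the described insertion of $N_G(P_i)\cap X$ into every bag is then straightforward. I expect the main technical point where one must be careful is precisely this treewidth argument, because the tree decomposition of $G[N_G[P_i]]$ must be derived from one of $G-X$ rather than invoked directly on $G[N_G[P_i]]$, but the ``add $N_G(P_i)\cap X$ to all bags'' trick handles it cleanly and controls the width by exactly the boundary bound we already established.
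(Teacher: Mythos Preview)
Your proposal is correct and follows essentially the same approach as the paper: set $P_0=X\cup Z$, take the components of $G-P_0$, bound $|N(P_i)|$ by splitting into the $X$- and $Z$-parts via non-applicability of \Cref{lrr:protrusion} and \Cref{lem:near-protrusion}, and bound $\tw(G[N[P_i]])$ by starting from a width-$\eta$ decomposition (since $X$ is an $\fcal$-deletion set) and padding bags with the few neighbours in $X$. The only cosmetic difference is that the paper pads each bag with all of $N(P_i)$ rather than just $N(P_i)\cap X$, and it groups components with identical neighbourhoods into a single $P_i$; neither affects the argument.
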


Let $(P_0,\ldots, P_{\ell})$ be the protrusion decomposition obtained from Lemma~\ref{lem:protrusion-decomposition}. 
We now define the {\em augmented graph $G_{\flow}$ of $G$} which is a supergraph of $G$ on the same vertex set as $G$, and which is obtained by additionally
adding the edge set $\{uv: u,v \in P_0, \text{ there exist at least } k+\eta+2 \allowbreak \text{ internally vertex-disjoint \allowbreak  paths \allowbreak  from } \allowbreak  u  \allowbreak \text{ to } \allowbreak  v \allowbreak  \text{ in } \allowbreak G\}$ in $G$.

\begin{lemma}[$\bigstar$]\label{lem:fdel-tweta-flow}
    The following two (in)equalities hold:
    \begin{enumerate}
        \item $\opt_{\twetadelsmall}((G,k))  =  \allowbreak  \opt_{\twetadelsmall}((G_{\flow},k))$, 
       \item $\opt_{\fdelsmall}((G,k)) \geq \opt_{\twetadelsmall}((G_{\flow},k))$.
       \end{enumerate}
\end{lemma}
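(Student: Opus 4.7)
The plan is to first prove a single structural claim from which both items follow: for every $S \subseteq V(G)$ with $|S| \le k$, any tree decomposition of $G-S$ of width at most $\eta$ is also a valid tree decomposition of $G_{\flow}-S$. Concretely, I need to check that every edge $uv \in E(G_{\flow}) \setminus E(G)$ with $u,v \notin S$ is covered by some bag. By the definition of $G_{\flow}$, the graph $G$ contains $k+\eta+2$ internally vertex-disjoint $u$-$v$ paths; since $|S|\le k$, at most $k$ of them intersect $S$ in an internal vertex, so at least $\eta+2$ internally vertex-disjoint $u$-$v$ paths survive in $G-S$. By Menger's theorem, every $u$-$v$ separator in $G-S$ has size at least $\eta+2$.

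Now suppose, toward contradiction, that the subtrees $T_u$ and $T_v$ of bags containing $u$ and $v$ in a width-$\eta$ tree decomposition of $G-S$ are disjoint. Then the intersection of two adjacent bags on the tree-path between them is a $u$-$v$ separator in $G-S$ of size at most $\eta+1$, contradicting the Menger bound. Hence $T_u\cap T_v\neq\emptyset$, so some bag contains both endpoints of $uv$, validating the flow edge.

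For item (1), since $E(G)\subseteq E(G_{\flow})$, every tree decomposition of $G_{\flow}-S$ is also a tree decomposition of $G-S$, giving $\opt_{\twetadelsmall}((G,k))\le \opt_{\twetadelsmall}((G_{\flow},k))$ even after capping at $k+1$. For the reverse direction, if $\opt_{\twetadelsmall}((G,k))\le k$, take a minimum treewidth-$\eta$-deletion set $S$ of $G$; the structural claim yields $\tw(G_{\flow}-S)\le \eta$, so $\opt_{\twetadelsmall}((G_{\flow},k))\le |S|=\opt_{\twetadelsmall}((G,k))$. If instead $\opt_{\twetadelsmall}((G,k))=k+1$, the already-established direction forces $\opt_{\twetadelsmall}((G_{\flow},k))=k+1$ as well.

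For item (2), I use the fact that $\fcal$ contains the planar graph $F_{\planar}$, so every $\fcal$-minor-free graph has treewidth at most $\eta$ by the Grid Minor Theorem. Hence every $\fcal$-deletion set of $G$ is also a treewidth-$\eta$-deletion set of $G$, giving $\opt_{\twetadelsmall}((G,k))\le \opt_{\fdelsmall}((G,k))$; combining with item (1) yields $\opt_{\twetadelsmall}((G_{\flow},k))\le \opt_{\fdelsmall}((G,k))$. The only non-routine step is the Menger-based verification above; everything else amounts to monotonicity and the capping convention.
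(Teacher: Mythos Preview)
Your proof is correct and follows essentially the same approach as the paper: both establish the key structural claim that any width-$\eta$ tree decomposition of $G-S$ (for $|S|\le k$) is already a tree decomposition of $G_{\flow}-S$, via the Menger argument that at least $\eta+2$ internally vertex-disjoint $u$--$v$ paths survive in $G-S$, forcing the subtrees of $u$ and $v$ to intersect. Your treatment is in fact slightly more careful than the paper's in explicitly handling the capped-at-$(k+1)$ convention for item~(1).
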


We partition  $\{P_1, \ldots, P_{\ell}\}$ into two groups based on the neighborhood of $P_i$ in ${G_{\flow}}$.
Let $\mathcal{P}_{\clique}=\{P_i: G_{\flow}[N(P_i)] \text{ is a clique or $N(P_i)$ is empty}\}$, which we call  the set of {\em simplicial parts} of the protrusion decomposition.
Let $\mathcal{P}_{\nonclique}$
be the set of remaining parts, referred to as {\em non-simplicial parts}. 
Note that we cannot simply discard $P_i$ for which $N(P_i)=\emptyset$ in the case of \fdel\ when $\fcal$ contains disconnected graphs.

Observe that when $u,v\in P_0$ and $uv \not\in E(G_{\flow})$ then $\{u,v\}$ can belong to at most $k+\eta+1$ sets $N(P_i)$.
This argument leads to the following estimation.

\begin{lemma}[$\bigstar$]\label{lem:non-clique}
    $|\mathcal{P}_{\nonclique}| = \Oh_{\eta}(k^5)$.
\end{lemma}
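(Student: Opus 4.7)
The plan is to map each non-simplicial part $P_i\in\mathcal{P}_{\nonclique}$ to a witness pair $\{u_i,v_i\}\subseteq N(P_i)$ with $u_iv_i\notin E(G_{\flow})$, and then bound $|\mathcal{P}_{\nonclique}|$ as the product of the number of distinct witness pairs and the multiplicity of this assignment. First I would observe that $N(P_i)\cap X$ is a clique in $G_{\flow}$: by the third bullet of Lemma~\ref{lem:near-protrusion}, any two distinct $u,v\in N(P_i)\cap X$ admit at least $k+\eta+2$ internally vertex-disjoint paths in $G$, so $uv\in E(G_{\flow})$. Consequently, every witness pair must contain at least one endpoint in $Z$.

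Next, I would use the observation already flagged in the excerpt: distinct components $P_i$ containing a fixed pair $\{u,v\}\subseteq P_0$ in their neighborhoods yield pairwise internally vertex-disjoint $u$-$v$ paths through their interiors, so if $uv\notin E(G_{\flow})$ there are at most $k+\eta+1$ such $P_i$. Hence the assignment $P_i\mapsto\{u_i,v_i\}$ is at most $(k+\eta+1)$-to-one, and it suffices to bound the number of distinct witness pairs by $\Oh_\eta(k^4)$.

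I split witness pairs by the location of their endpoints. Pairs with one endpoint in $X$ and one in $Z$ number at most $|X|\cdot|Z|=\Oh_\eta(k)\cdot\Oh_\eta(k^3)=\Oh_\eta(k^4)$. For pairs fully contained in $Z$, both endpoints lie in $N(P_i)\cap Z$ for some $i$, which has size at most $2(\eta+1)$ by Lemma~\ref{lem:near-protrusion}. Contracting each $P_i$ to a single vertex $v_{P_i}$ produces a minor $H$ of $G-X$, so $\tw(H)\le\eta$, and every contracted vertex has degree at most $2(\eta+1)$ in $H$. The number of distinct $\{u,v\}\subseteq Z$ sharing a contracted neighbor equals the number of edges of the torso of $H$ at $Z$. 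Taking a width-$\eta$ tree decomposition $(T,\beta)$ of $H$ and, for each bag $\beta(t)$, adding to $\beta(t)$ the $H$-neighborhoods of all contracted vertices lying in $\beta(t)$ (at most $\eta+1$ such vertices, each contributing at most $2(\eta+1)$ new entries) yields a tree decomposition of width $\Oh(\eta^2)$ covering all torso edges. Hence the torso of $H$ at $Z$ has at most $\Oh(\eta^2)\cdot|Z|=\Oh_\eta(k^3)$ edges, which bounds the pure-$Z$ witness pairs. Summing both cases gives $\Oh_\eta(k^4)$ distinct witness pairs, and multiplying by $(k+\eta+1)$ yields $|\mathcal{P}_{\nonclique}|\le\Oh_\eta(k^5)$.

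The step I expect to be most delicate is the bag-enlargement for the torso of $H$ at $Z$: one must verify that expanding every bag by the neighborhoods of its contracted vertices still produces a valid tree decomposition (in particular, that the bag-subtree of each newly added vertex remains connected, which follows because every $z\in N_H(v_{P_i})$ already sits in some bag of the subtree $T_{v_{P_i}}$ through the edge $zv_{P_i}$) and that the width blows up only to $\Oh(\eta^2)$ rather than growing further with $\ell$. This refined counting of $Z$-internal pairs is presumably the improvement attributed in the acknowledgements to the anonymous reviewer: the naive estimate $\binom{|P_0|}{2}=\Oh_\eta(k^6)$ on distinct witness pairs would only deliver the weaker bound $|\mathcal{P}_{\nonclique}|=\Oh_\eta(k^7)$.
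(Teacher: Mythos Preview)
Your proof is correct and follows essentially the same strategy as the paper: the same $(k+\eta+1)$-to-one witness-pair assignment and the same case split by endpoint location, with the $|X|\cdot|Z|$ term dominating. The one technical difference is in handling the $Z$-internal pairs: the paper contracts each witnessing component directly into one of its $Z$-endpoints to produce a treewidth-$\eta$ minor on $Z$ (hence $\Oh_\eta(|Z|)$ edges), whereas you go through the torso of an auxiliary contraction $H$ and bound its width by $\Oh(\eta^2)$ via bag-enlargement---both arrive at $\Oh_\eta(|Z|)$, but the paper's direct contraction is a little simpler.
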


We show that ignoring the simplicial parts yields a 2-lossy compression reduction rule.

\begin{lossyredrule}\label{lrr:two}
    Suppose we are given an instance $(G,k)$ of \fdel\ with the corresponding protrusion decomposition $(P_0,P_1, \ldots, P_{\ell})$. 
    The reduction algorithm outputs $(G':= G_{\flow}[V(G) \setminus \bigcup_{P_i \in \mathcal{P}_{\clique}} P_i],\, k':=k)$.

    The solution lifting algorithm takes a solution $S'$ of the \twetadel\ problem on the instance $(G',k')$. 
    It computes an optimum-sized \fcal-deletion set $S''$ of $G-S'$ in $\Oh_{\fcal, \eta}(n^{\Oh(1)})$ time using the algorithm of \Cref{lem:f-del-on-tw} 
    (we argue in Lemma~\ref{lem:lrr-two-correct} that this is possible). 
    It then outputs $S:= S' \cup S''$ as a solution on \fdel\ for the instance $(G,k)$.
\end{lossyredrule}

Specifically, we prove that the simplicial parts can be added to $G'-S'$ while keeping the treewidth bounded by $\Oh(\eta)$.
Hence $S''$ can be computed in polynomial time.

\addtocounter{theorem}{1}
\begin{lemma}[$\bigstar$]\label{lem:lrr-two-correct}
    \Cref{lrr:two} is a $2$-lossy compression reduction rule.
\end{lemma}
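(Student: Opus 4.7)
The plan is to verify that \Cref{lrr:two} meets both requirements of a lossy compression reduction rule: a valid output instance of \twetadel\ and the $2$-lossy guarantee on the lifted solution. The key technical ingredient is bounding the treewidth of $G-S'$, which ensures that the lifting step can invoke \Cref{lem:f-del-on-tw}. Concretely, my first step would be to relate the two optima: by \Cref{lem:fdel-tweta-flow} we have $\opt_{\fdelsmall}((G,k)) \ge \opt_{\twetadelsmall}((G_{\flow},k))$, and since $G'$ is an induced subgraph of $G_{\flow}$ with all added edges retained, restricting any optimal \twetadel\ solution of $G_{\flow}$ to $V(G')$ yields a \twetadel\ solution of $G'$, giving $\opt_{\twetadelsmall}((G',k)) \le \opt_{\twetadelsmall}((G_{\flow},k)) \le \opt_{\fdelsmall}((G,k))$.

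The heart of the argument is the treewidth bound on $G-S'$ when $|S'| \le k$. Starting from a tree decomposition $(T',\chi')$ of $G'-S'$ of width at most $\eta$, I would glue back each simplicial part as follows. For $P_i \in \mathcal{P}_{\clique}$, the set $N_G(P_i)\setminus S'$ is a clique in $G_{\flow}[V(G')]-S' = G'-S'$ (or empty), and a standard property of tree decompositions yields a bag $B_i$ in $(T',\chi')$ containing it. Since $P_i \cap S' = \emptyset$ (because $P_i \not\subseteq V(G')$) and $N_G[P_i]$ is a $\Oh(\eta)$-protrusion by \Cref{lem:protrusion-decomposition}, the graph $G[N_G[P_i]]$ has a tree decomposition of width $\Oh(\eta)$; after adjoining $N_G(P_i)\setminus S'$ to every bag, every node of this decomposition contains $N_G(P_i)\setminus S'$ and its width is still $\Oh(\eta)$. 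Attaching it to $B_i$ preserves the tree-decomposition property and the overall width. Performing this for all $P_i \in \mathcal{P}_{\clique}$ yields a tree decomposition of $G-S'$ of width $\Oh(\eta)$, so \Cref{lem:f-del-on-tw} outputs an optimum-sized $\mathcal{F}$-deletion set $S''$ of $G-S'$ in polynomial time. Crucially, for any optimum \fdel\ solution $S^*$ of $G$, the set $S^*\setminus S'$ is an \fcal-deletion set of $G-S'$, so $|S''| \le |S^*\setminus S'| \le \opt_{\fdelsmall}((G,k))$.

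Finally, I would establish the $2$-lossy inequality $\val^{\fdelsmall}_{(G,k)}(S)/\opt_{\fdelsmall}((G,k)) \le 2\cdot \val^{\twetadelsmall}_{(G',k)}(S')/\opt_{\twetadelsmall}((G',k))$ by a case analysis. If $|S'| > k$, then $\val^{\twetadelsmall}(S')=k+1$ and the right-hand side is at least $2(k+1)/\opt_{\twetadelsmall}((G',k)) \ge 2\cdot\val^{\fdelsmall}(S)/\opt_{\fdelsmall}((G,k))$ using the optima inequality from the first step. If $|S'| \le k$, then $|S| \le |S'|+\opt_{\fdelsmall}((G,k))$ and (assuming $|S|\le k$; otherwise proceed as above) the ratio becomes at most $1+|S'|/\opt_{\fdelsmall}((G,k))$. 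Using $\opt_{\fdelsmall}((G,k)) \ge \opt_{\twetadelsmall}((G',k))$ and the fact that any feasible $S'$ satisfies $|S'| \ge \opt_{\twetadelsmall}((G',k))$, both $1$ and $|S'|/\opt_{\fdelsmall}$ are bounded by $|S'|/\opt_{\twetadelsmall}((G',k))$, yielding the desired factor~$2$.

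The main obstacle is the treewidth-merging argument: it is essential that the clique structure of $N_G(P_i)$ in $G_{\flow}$ survives the removal of $S'$ to guarantee that the boundary of a simplicial protrusion sits inside a single bag of $(T',\chi')$, which is what makes the glueing legal. This is precisely why edges between highly connected vertices are inserted into $G_{\flow}$ and why simplicial parts are identified with respect to $G_{\flow}$ rather than $G$.
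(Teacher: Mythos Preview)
Your proof follows essentially the same route as the paper's: bound $\tw(G-S')$ by gluing each simplicial protrusion back onto a bag containing the clique $N_G(P_i)\setminus S'$, then derive the factor $2$ from $|S''|\le \opt_{\fdelsmall}(I)$ together with $\opt_{\fdelsmall}(I)\ge \opt_{\twetadelsmall}((G_{\flow},k))\ge \opt_{\twetadelsmall}(I')$. One small wrinkle: the parenthetical ``(assuming $|S|\le k$; otherwise proceed as above)'' is both unnecessary and not quite right---the ``above'' case used $\val(S')=k+1$, which fails when $|S'|\le k$---but you can simply drop the assumption, since $\val^{\fdelsmall}_{(G,k)}(S)\le |S|$ always and your chain $|S|/\opt_{\fdelsmall}(I)\le 1+|S'|/\opt_{\fdelsmall}(I)\le 2|S'|/\opt_{\twetadelsmall}(I')$ then goes through without a case split.
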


After the exhaustive application of \Cref{lrr:protrusion} and~\ref{lrr:two}, it only remains to bound the size of each part in $\mathcal{P}_{\nonclique}$, which can be done using Lemma~\ref{lem:dichotomy:compress}. 
The formal proof of Theorem~\ref{thm:lossy-kernel} is given next.

\lossykernel*

\begin{proof}

    We describe a polynomial time $2$-lossy compression algorithm below. Refer to Figure~\ref{fig:lossy-kernel} in the appendix. 
    Given an instance $(G,k)$ of \fdel,
    first compute a near-protrusion decomposition on the instance $(G,k)$ in polynomial time using Lemma~\ref{lem:near-protrusion}. This gives sets $X,Z$ with the properties listed in Lemma~\ref{lem:near-protrusion}.
    Set $\epsilon =1$ and
    for each connected component $C$ of $G-(X \cup Z)$, it checks whether \Cref{lrr:protrusion} is applicable. 
    Let $(G^1,k_1)$ be the instance returned by the reduction algorithm of \Cref{lrr:protrusion} at the end of its exhaustive application. 
    From Lemma~\ref{lem:protrusion-decomposition}, $G^1$ has an
    $(\alpha, \Oh(\eta))$-protrusion decomposition $(P_0,\ldots, P_{\ell})$
     for some (unbounded) $\alpha$
     and 
     $|P_0| = \Oh_{\eta}(k^3)$.
    
    Let $\mathcal{P}_{\nonclique}$ be the set of non-simplicial parts of $(P_1, \ldots, P_{\ell})$.
    From Lemma~\ref{lem:non-clique}, $|\mathcal{P}_{\nonclique}|=\Oh_{\eta}(k^5)$.
    Run the reduction algorithm of \Cref{lrr:two} on the instance $({G^1}_{\flow}[V(G^1) \setminus \bigcup_{P_i \in \mathcal{P}_{\clique}}P_i],k_1)$ of \fdel. 
    It returns an instance $(G^2,k_1)$ of \twetadel\ such that $G^2$ has an $(\Oh_{\eta}(k^5), \Oh(\eta))$-protrusion decomposition.
    Let $(G^3,k_1)$ be the instance returned by the algorithm of Lemma~\ref{lem:dichotomy:compress} on input $(G^2,k_1)$. Observe that $k_1 \leq k$.
    From Lemma~\ref{lem:dichotomy:compress}
    the number of vertices of $G^3$ is bounded by
    $\Oh_{\mathcal{F}}(k^5)$. Observe that if $G^3$ indeed has an at most $k$-sized vertex set whose deletion results in a graph of treewidth at most $\eta$, then treewidth of $G^3$ has to be at most $k+\eta$. Thus, the number of edges of $G^3$, and hence the size of $G^3$, is at most $(k+\eta)$ times the number of vertices of $G^3$. Thus, the size of $G^3$ is $\Oh_{\mathcal{F}}(k^6)$, otherwise $G^3$ has no $k$-sized solution.

    For any $\beta \geq 1$, 
    run the $\beta$-compression oracle for \twetadel\ on the instance $(G^3,k_1)$ and let 
     $S_3$ be the returned $\beta$-approximate solution of the instance $(G^3,k_1)$ of \twetadel.
    The algorithm of Lemma~\ref{lem:dichotomy:compress} provides a lifting algorithm that, in polynomial time, outputs a $\beta$-approximate solution $S_2$ for the instance $(G^2,k_1)$ of \twetadel.
    Given the $\beta$-approximate solution $S_2$ for the instance $(G^2,k_1)$ of \twetadel, from Lemma~\ref{lem:lrr-two-correct},
    the solution lifting algorithm of \Cref{lrr:two},  
    outputs a $2 \cdot \beta$-approximate solution $S_1$ for the instance $(G^1,k_1)$ of the \fdel\ problem.
    Finally since \Cref{lrr:protrusion} is a $(1+\epsilon)$-lossy strict reduction rule and $\epsilon=1$, using the solution lifting algorithm of \Cref{lrr:protrusion} and $S_1$, one can obtain a $2$-approximate solution for the instance $(G,k)$ of \fdel.

    Clearly, in the special case of \twetadel\ the above constitutes a $2$-lossy kernelization algorithm.
\end{proof}

\section{Uniform $(1+\epsilon)$-lossy polynomial compression protocol} 
\label{sec:lossy-protocol}
This section is devoted to the proof of Theorem~\ref{thm:lossy-protocol}.
Assume that we are given an instance $(G,k)$ of \fdel, $\epsilon >0$, 
and an $(\alpha, \Oh(\eta / \epsilon))$-protrusion decomposition $(P_0,P_1, \ldots, P_{\ell})$ of $G$ from Lemma~\ref{lem:protrusion-decomposition}, for some unbounded $\alpha$, with $|P_0|=\Oh_{\eta}(k^3)$.
Recall that the parts $\{P_1, \ldots, P_{\ell}\}$ of this protrusion decomposition were partitioned into two sets: $\mathcal{P}_{\clique}$ (simplicial parts) and $\mathcal{P}_{\nonclique}$ (non-simplicial parts).
Moreover, Lemma~\ref{lem:non-clique} estimates  $|\mathcal{P}_{\nonclique}|$ as $\Oh_{\eta}(k^5)$.

\begin{figure}[ht]
    \begin{center}
    \includegraphics[scale=0.33]{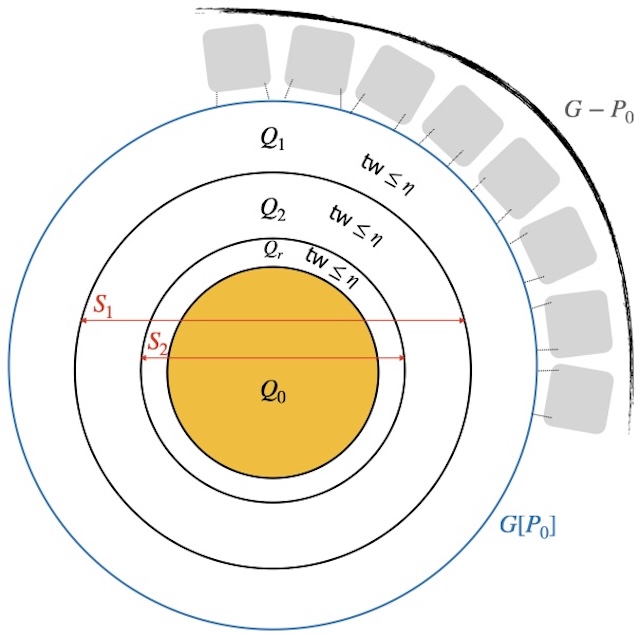}
    \caption{Illustration to \Cref{lem:partition-mod}. The protocol starts by finding a $\beta$-approximate solution $S_1$ to the \twetadel\ problem in $G[P_0]$. Then $Q_1$ is the set obtained after deleting $S_1$ from $P_0$; note that $\tw(G[Q_0]) \le \eta$. It then repeats the process of finding a solution to the \twetadel\ problem inside the old solution $S_1$ and continues doing so until the obtained new solution is a large fraction of the previous solution. Afterwards, \Cref{lrr:three} removes the final solution from the graph and compresses the remaining graph using \Cref{lem:bound-simplicial}.}
    \label{fig:lossy-protocol-oracle}
    \end{center}
\end{figure}

\begin{lemma}\label{lem:partition-mod}
    Consider $\epsilon > 0$ and $\beta \in \nn$.
    Let $r = 1+ \left \lceil \log_{1+\epsilon} \left ( \frac{1}{\epsilon}\right )  \right \rceil$.
    There exists a polynomial-time protocol which has access to a $\beta$-approximate kernelization oracle $\mathsf{O}$ for \twetadel\ of capacity $|P_0|$, 
    and when given $(G_{\flow}[P_0],k)$ as input,
    it makes at most $r$ calls 
    to the oracle $\mathsf{O}$,
    and outputs a partition of $P_0=(Q_0, \ldots, Q_r)$,
     such that
    \begin{enumerate}
        \item for each $i \in [1,r]$, treewidth of ${G}_{\flow}[Q_i]$ is at most $\eta$ ($Q_i$ is allowed to be empty), and
        \item \begin{enumerate}
            \item 
        $|Q_0| \leq (1+\epsilon)\cdot \beta \cdot   \opt_{\twetadelsmall}((G_{\flow}[Q_0],k))$ 
        or \label{item:halt:short}
            \item $|Q_0| \le \allowbreak \epsilon \cdot  \beta \cdot \opt_{\twetadelsmall}((G_{\flow}[P_0],k))$. \label{item:halt:long}
        \end{enumerate}
    \end{enumerate}
\end{lemma}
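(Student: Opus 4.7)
The plan is to mimic the iterative scheme outlined in the technical overview: repeatedly ask the oracle for a solution to \twetadel\ inside the previous solution, and halt as soon as two consecutive solutions are of comparable size. Set $S_0 := P_0$. For $j = 1, 2, \ldots, r$, invoke the oracle $\mathsf{O}$ on $(G_{\flow}[S_{j-1}], k)$ to obtain a $\beta$-approximate \twetadel-solution $S_j \subseteq S_{j-1}$; this is within the oracle's capacity since $|S_{j-1}| \le |P_0|$. Immediately after producing $S_j$, test the inequality $|S_j| > |S_{j-1}|/(1+\epsilon)$; if it holds, set $j^{*} := j-1$ and halt. If no iteration triggers the halt, set $j^{*} := r$. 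Output the partition $Q_0 := S_{j^{*}}$, $Q_i := S_{i-1}\setminus S_i$ for $i \in [1, j^{*}]$, and $Q_i := \emptyset$ for $i \in (j^{*}, r]$. Each step between oracle calls is clearly polynomial-time.

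The treewidth invariant is immediate: for every $i \in [1, j^{*}]$, the oracle output $S_i$ is an \twetadel-solution in $G_{\flow}[S_{i-1}]$, so $\tw(G_{\flow}[S_{i-1}\setminus S_i]) = \tw(G_{\flow}[Q_i]) \le \eta$, and empty $Q_i$ pose no problem. The more delicate part of the verification is distinguishing between the two halting conditions. If the protocol halts at some iteration $j \in [1, r]$ with $j^{*} = j-1$, then $|S_j| > |S_{j-1}|/(1+\epsilon)$ together with the oracle guarantee $|S_j| \le \beta\cdot\opt_{\twetadelsmall}((G_{\flow}[S_{j-1}], k))$ yields
\[
|Q_0| \;=\; |S_{j-1}| \;<\; (1+\epsilon)\cdot |S_j| \;\le\; (1+\epsilon)\cdot \beta\cdot\opt_{\twetadelsmall}((G_{\flow}[Q_0], k)),
\]
which is condition \eqref{item:halt:short}. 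The degenerate case $j^{*} = 0$ fits here as well, with $Q_0 = P_0$.

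If instead no halt ever fires, then $|S_j| \le |S_{j-1}|/(1+\epsilon)$ for every $j \in [1, r]$; in particular, for $j = 2, \ldots, r$, this chains to give $|S_r| \le |S_1|/(1+\epsilon)^{r-1}$, and combining with the $\beta$-approximation bound $|S_1| \le \beta \cdot \opt_{\twetadelsmall}((G_{\flow}[P_0], k))$ yields
\[
|Q_0| \;=\; |S_r| \;\le\; \frac{\beta\cdot \opt_{\twetadelsmall}((G_{\flow}[P_0], k))}{(1+\epsilon)^{r-1}}.
\]
Since $r - 1 = \lceil \log_{1+\epsilon}(1/\epsilon)\rceil \ge \log_{1+\epsilon}(1/\epsilon)$, we have $(1+\epsilon)^{r-1} \ge 1/\epsilon$, so condition \eqref{item:halt:long} follows. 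The total number of oracle invocations is at most $r$ in either branch, matching the advertised bound.

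The main obstacle to anticipate is verifying the bound at the correct scale of capping: the oracle returns a solution whose \emph{capped} value is within a factor $\beta$ of the optimum, and in principle one should argue that the inequality $|S_j| \le \beta \cdot \opt_{\twetadelsmall}((G_{\flow}[S_{j-1}], k))$ is usable even when some values hit the cap $k+1$. In the capped regime, if $|S_j| > k$ then the relevant instance has $\opt = k+1$ and both sides of the target inequality are $\Theta(k)$, so the bound is preserved; otherwise the uncapped argument above applies directly. This technicality aside, the construction and its analysis are straightforward given the monotonicity $\opt_{\twetadelsmall}((G_{\flow}[Y], k)) \le \opt_{\twetadelsmall}((G_{\flow}[Y'], k))$ for $Y \subseteq Y'$ (used implicitly to compare optima across iterations).
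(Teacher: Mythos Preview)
Your proposal is correct and follows essentially the same approach as the paper: iteratively call the oracle on the previous solution, halt when two consecutive sets have comparable size (your test $|S_j| > |S_{j-1}|/(1+\epsilon)$ is equivalent to the paper's $|Q_0| \le (1+\epsilon)|S_j|$), and otherwise telescope the $(1+\epsilon)^{-1}$ factors over $r-1$ steps against the initial bound $|S_1| \le \beta\cdot\opt_{\twetadelsmall}((G_{\flow}[P_0],k))$. The paper maintains $Q_0$ in place while you use explicit $S_j$'s, but the analysis is identical; your added remark on the capping technicality is something the paper does not spell out.
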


\begin{proof}
    We design an iterative procedure with at most $r$ steps. 
    See Figure~\ref{fig:lossy-protocol-oracle} for a visualization.
    Initialize $Q_0:=P_0$ and for each $i \in [1,r]$, set $Q_i:=\emptyset$.
    Let $S_1$ be the set returned by the oracle $\mathsf{O}$ on input $(G_{\flow}[P_0],k)$. 
    Then  $S_1$ is a $\beta$-approximate solution to the instance $(G_{\flow}[P_0],k)$ of \twetadel, and so $|S_1| \leq \beta \cdot \opt_{\twetadelsmall}((G[P_0], \allowbreak k))$.
    
    In each iteration, we maintain the following invariants: (i) $(Q_0,Q_1, \ldots, Q_r)$
    is a partition of $P_0$, (ii) for each $i \in [1, r]$, the treewidth of $G_{\flow}[Q_i]$ is at most $\eta$, and (iii) at the end of the $j$-th iteration, $|Q_0| \le \left ( \frac{1}{1+\epsilon} \right )^{j-1} |S_1|$.

    In the $j$-th iteration, $j \in [1,r]$,
    the protocol calls the oracle $\mathsf{O}$ on the instance $(G_{\flow}[Q_0],k)$ and receives a $\beta$-approximate solution $S_j$ to \twetadel\ on $(G_{\flow}[Q_0],k)$.
    Then $|S_j| \leq \beta \cdot \opt_{\twetadelsmall}((G_{\flow}[Q_0],k))$.
    The protocol proceeds with the following case distinction. 
    \begin{enumerate}
        \item If $|Q_0| \leq (1+\epsilon) \cdot |S_j|$, then the protocol terminates and outputs the partition $(Q_0, \ldots, Q_r)$ which satisfies condition (\ref{item:halt:short}).
        \item Otherwise, $|Q_0| > (1+\epsilon) \cdot |S_j|$. 
        In this case, we set $Q_j:= Q_0 \setminus S_j$.
        Clearly, the treewidth of $G_{\flow}[Q_j]$ is at most $\eta$.
        Next, we set $Q_0:= S_j$. 
        
        We now verify that the invariants are satisfied.
        Clearly $(Q_0,Q_1, \ldots, Q_r)$ is a partition of $P_0$ and the treewidth of each $G_{\flow}[Q_i]$, $i \in [1,r]$, is at most $\eta$.
        If $j=1$ then the invariant (iii) holds trivially.
        Otherwise, due to case distinction, the size of $Q_0$ drops by factor $(1+\eps)$,
        implying $|Q_0| \le \left ( \frac{1}{1+\epsilon} \right )^{j-1} |S_1|$.
    \end{enumerate}

    If the first case occurs before the $r$-th iteration, then we are done. 
   Otherwise, the protocol terminates after the $r$-th iteration. 
   The invariant (iii) implies
   $|Q_0| \le \left ( \frac{1}{1+\epsilon} \right )^{r-1} |S_1|$.
   Substituting the definitions of $r$ and $S_1$ yields condition (\ref{item:halt:long}): $|Q_0| \le \epsilon \cdot \beta \cdot \opt_{\twetadelsmall}((G_{\flow}[P_0],k))$. 
   The lemma follows.
\end{proof}

We argue that one can remove $Q_0$ from the graph while paying a small loss in accuracy.

\begin{lossyredprotocol}\label{lrr:three}
    Let $(G,k)$ be an instance of \fdel\ where \Cref{lrr:protrusion} has been exhaustively applied. Let $(P_0, \ldots, P_{\ell})$ be the protrusion decomposition of $(G,k)$ from Lemma~\ref{lem:protrusion-decomposition}.
    Let $(Q_0, Q_1, \ldots, Q_r)$ be the partition of $P_0$ obtained
    from Lemma~\ref{lem:partition-mod}, on input $(G_{\flow}[P_0],k)$.
    The reduction protocol outputs $(G':=G-Q_0, k':= k - |Q_0|)$.

    The solution lifting algorithm takes as input the instances $(G,k), (G',k')$ and $S' \subseteq V(G')$ which is a $\beta$-approximate solution for the instance $(G',k')$ of \fdel\, and outputs $S:= S' \cup Q_0$ as a solution for $(G,k)$ of \fdel.
\end{lossyredprotocol}

\begin{lemma}[$\bigstar$]\label{lem:lrr-three} 
    \Cref{lrr:three} is a $(1+\epsilon)$-lossy reduction protocol.
\end{lemma}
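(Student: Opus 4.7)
The plan is to verify two things about the lifted set $S = S' \cup Q_0$: that it is a feasible $\fcal$-deletion set in $G$ whenever $S'$ is feasible in $G'$, and that $|S| \le (1+\epsilon)\beta \cdot \opt_\fdelsmall((G,k))$ whenever $|S'| \le \beta \cdot \opt_\fdelsmall((G',k'))$. Feasibility is immediate since $G - S = G' - S'$ is $\fcal$-minor-free by assumption on $S'$, and consequently $|S| = |S'| + |Q_0|$.

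The key step I would establish is an additive lower bound on $\opt_\fdelsmall((G,k))$ that splits between $Q_0$ and $V(G')$. Let $S^\star$ be any optimal $\fcal$-deletion set in $G$. By Lemma~\ref{lem:fdel-tweta-flow}, $S^\star$ is also a \twetadel\ solution of $G_\flow$, so $G_\flow[Q_0] - (S^\star \cap Q_0)$, being an induced subgraph of $G_\flow - S^\star$, has treewidth at most $\eta$. Hence $|S^\star \cap Q_0| \ge \opt_\twetadelsmall((G_\flow[Q_0],k))$. Since $S^\star \setminus Q_0$ is an $\fcal$-deletion set of $G' = G - Q_0$, we also have $|S^\star \setminus Q_0| \ge \opt_\fdelsmall((G',k'))$. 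Summing the two inequalities yields
\[
\opt_\fdelsmall((G,k)) \;\ge\; \opt_\twetadelsmall((G_\flow[Q_0],k)) + \opt_\fdelsmall((G',k')).
\]

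Next, I would split on the two alternatives guaranteed by Lemma~\ref{lem:partition-mod}. In case~(a), $|Q_0| \le (1+\epsilon)\beta \cdot \opt_\twetadelsmall((G_\flow[Q_0],k))$, so combining with $|S'| \le \beta \cdot \opt_\fdelsmall((G',k'))$ and the displayed inequality gives
\[
|S| \le (1+\epsilon)\beta\bigl(\opt_\fdelsmall((G',k')) + \opt_\twetadelsmall((G_\flow[Q_0],k))\bigr) \le (1+\epsilon)\beta \cdot \opt_\fdelsmall((G,k)).
\]
In case~(b), Lemma~\ref{lem:fdel-tweta-flow} gives $\opt_\twetadelsmall((G_\flow[P_0],k)) \le \opt_\twetadelsmall((G_\flow,k)) \le \opt_\fdelsmall((G,k))$, hence $|Q_0| \le \epsilon\beta \cdot \opt_\fdelsmall((G,k))$, and therefore $|S| \le \beta \cdot \opt_\fdelsmall((G,k)) + \epsilon\beta \cdot \opt_\fdelsmall((G,k)) = (1+\epsilon)\beta \cdot \opt_\fdelsmall((G,k))$.

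The main technical obstacle is the additive decomposition established above: without it, case~(a) would only yield an approximation factor of roughly $(2+\epsilon)\beta$, since $|Q_0|$ could in principle be as large as the full \fdel\ optimum, and one cannot afford to double-count this contribution with $|S'|$. The transition from a ``reduction'' to a ``protocol'' is then immediate, since the only oracle calls come from the at most $r = 1 + \lceil \log_{1+\epsilon}(1/\epsilon) \rceil$ invocations inside Lemma~\ref{lem:partition-mod}, all performed in polynomial time, matching the round budget of Theorem~\ref{thm:lossy-protocol}. The capping-at-$(k+1)$ built into the $\val^\fdelsmall$-based definition of a lossy reduction is routine, since the inequalities above translate directly to the capped ratio $\val^\fdelsmall_{(G,k)}(S)/\opt_\fdelsmall((G,k)) \le (1+\epsilon)\cdot\val^\fdelsmall_{(G',k')}(S')/\opt_\fdelsmall((G',k'))$.
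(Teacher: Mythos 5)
Your proof is correct and follows essentially the same strategy as the paper's: establish that $|S|=|S'|+|Q_0|$, split on the two exit conditions of Lemma~\ref{lem:partition-mod}, and combine the hypothesis on $S'$ with a lower bound on $\opt_{\fdelsmall}((G,k))$. The one place where your argument is in fact cleaner is the additive decomposition $\opt_{\fdelsmall}((G,k))\ge\opt_{\twetadelsmall}((G_{\flow}[Q_0],k))+\opt_{\fdelsmall}((G',k'))$: you derive the $Q_0$ term by noting that any optimal $\fcal$-deletion set $S^\star$ makes $G_\flow-S^\star$ have treewidth $\le\eta$ (by Lemma~\ref{lem:fdel-tweta-flow} applied to the full graph), so the induced subgraph $G_\flow[Q_0\setminus S^\star]$ does too; the paper instead routes this through $\opt_{\fdelsmall}((G[Q_0],k))\ge\opt_{\twetadelsmall}((G_\flow[Q_0],k))$, which is an application of Lemma~\ref{lem:fdel-tweta-flow} to a subgraph where the ``augmented graph'' relation holds only because the witnessing paths live in $G$, not in $G[Q_0]$; your formulation sidesteps that subtlety entirely.

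One small caveat: your concluding sentence overstates what has been shown. You claim the capped ratio satisfies $\val^{\fdelsmall}_{(G,k)}(S)/\opt_{\fdelsmall}((G,k))\le(1+\epsilon)\cdot\val^{\fdelsmall}_{(G',k')}(S')/\opt_{\fdelsmall}((G',k'))$, but this strict-reduction-rule style inequality is stronger than what your case analysis produces (in case~(b) the RHS would need to be $\beta$ for the estimate $\beta+\epsilon\beta$ to dominate, which it need not be). What you actually need, and do prove in the body under the assumption $|S'|\le\beta\cdot\opt_{\fdelsmall}((G',k'))$, is the definitional bound $\val^{\fdelsmall}_{(G,k)}(S)/\opt_{\fdelsmall}((G,k))\le(1+\epsilon)\beta$; since capping only decreases $\val$, the uncapped estimate on $|S|$ suffices. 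You should also spell out the case $|S'|>k'$ (so that the hypothesis on $S'$ only gives $k'+1\le\beta\cdot\opt_{\fdelsmall}((G',k'))$, not a bound on $|S'|$); here $\val^{\fdelsmall}_{(G,k)}(S)=k+1=(k'+1)+|Q_0|$ and the identical two-case calculation applies with $k'+1$ in place of $|S'|$. The paper glosses over this too, so this is a shared, minor omission rather than a gap in your approach.
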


On the other hand, discarding $Q_0$ allows us to bound the number of simplicial parts.
\addtocounter{theorem}{1}

\begin{lemma}\label{lem:bound-simplicial}
    After the application of \Cref{lrr:three}, 
   the number of distinct cliques in $G_{\flow}[P_0]$ is $\Oh_{\eta}(k^{3r})$, where $r$ is as defined in Lemma~\ref{lem:partition-mod}.
\end{lemma}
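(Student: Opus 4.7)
The plan is to exploit the structure that Lossy Reduction Protocol~\ref{lrr:three} leaves behind: after removing $Q_0$, the remaining part of the root bag is exactly $Q_1 \cup Q_2 \cup \cdots \cup Q_r$, and by \Cref{lem:partition-mod} each $G_{\flow}[Q_i]$ has treewidth at most $\eta$. So I want to count cliques in $G_{\flow}[Q_1 \cup \cdots \cup Q_r]$ by decomposing them along this partition.

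The first key step is to observe that any clique $K$ in $G_{\flow}[Q_1\cup\cdots\cup Q_r]$ admits a unique decomposition $K = \bigsqcup_{i=1}^{r} K_i$, where $K_i := K \cap Q_i$ is (possibly empty, but necessarily) a clique in $G_{\flow}[Q_i]$. Therefore the total number of cliques is at most $\prod_{i=1}^{r} c_i$, where $c_i$ denotes the number of cliques (including the empty one) in $G_{\flow}[Q_i]$. This is the part where we trade one global count for $r$ local counts.

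The second step is to bound each $c_i$ using the bounded-treewidth structure. I appeal to the standard fact that every clique in a graph of treewidth $\eta$ is contained in some bag of any tree decomposition of width $\eta$, and a nice tree decomposition on $n$ vertices has $\Oh(n)$ bags of size at most $\eta+1$. Hence $c_i \le 2^{\eta+1}\cdot \Oh(|Q_i|) = \Oh_\eta(|Q_i|)$. Since $|Q_i|\le |P_0| = \Oh_\eta(k^3)$, we obtain $c_i = \Oh_\eta(k^3)$ for every $i \in [1,r]$.

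Combining the two steps, the number of cliques in $G_{\flow}[Q_1\cup\cdots\cup Q_r]$ is at most $\prod_{i=1}^{r} \Oh_\eta(k^3) = \Oh_\eta(k^{3r})$, as required. I expect no significant obstacle: the only subtlety is to make sure that $K_i$ really is a clique in $G_{\flow}[Q_i]$ (which is immediate because being a clique is closed under induced subgraphs) and that the empty clique is accounted for in both the upper bound on $c_i$ and the product. The argument is uniform in the sense that $r$ depends only on $\epsilon$ (and $\eta$ is a function of $\fcal$), so the exponent $3r$ does not blow up with the input.
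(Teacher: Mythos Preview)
Your proof is correct and follows essentially the same approach as the paper: partition the (post-reduction) root bag into $Q_1,\dots,Q_r$, bound the number of cliques in each $G_{\flow}[Q_i]$ by $\Oh_\eta(k^3)$ via the bounded-treewidth clique count, and take the product over $i$. You are slightly more explicit about why the product bound holds (the decomposition $K=\bigsqcup_i K_i$ and the role of the empty clique), but the argument is the same.
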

\begin{proof}
 Let $(Q_0, \allowbreak Q_1, \allowbreak \ldots, \allowbreak Q_r)$ be the partition of $P_0$ obtained from Lemma~\ref{lem:partition-mod}.
    After the application of \Cref{lrr:three},
    we can assume that $Q_0 = \emptyset$.
    Recall that for each $i \in [1,r]$, treewidth of $G_{\flow}[Q_i]$ is at most $\eta$. From Proposition~\ref{prop:number-of-cliques}, the number of distinct cliques in $G_{\flow}[Q_i]$ is $\Oh_{\eta}(|Q_i|) \leq \Oh_{\eta}(|P_0|) \leq \Oh_{\eta}(k^3)$. 
    Therefore, the number of distinct cliques in $G_{\flow}[P_0]$ is at most $\prod_{i =1}^r \Oh_{\eta}(k^3) = \Oh_{\eta}(k^{3r})$.
\end{proof}

\lossyprotocol*

\begin{proof}
    Fix any $\epsilon >0$.
    We describe a polynomial time $(1+\epsilon)$-lossy compression protocol of capacity $\Oh_{\eta}(k^7) + \Oh_{\eta}(k^{3r})$ and $1+r$ rounds.
    For any $\beta \geq 1$
    it has access to $\beta$-kernelization oracles for \fdel\ and \twetadel\ of capacities $\Oh_{\eta}(k^6) + \Oh_{\eta}(k^{3r +1})$.

    Given an instance $(G,k)$ of \fdel,
    first compute a near-protrusion decomposition on the instance $(G,k)$ in polynomial time using Lemma~\ref{lem:near-protrusion}. This gives sets $X,Z$ with the properties listed in Lemma~\ref{lem:near-protrusion}.
    For each connected component $C$ of $G-(X \cup Z)$, it checks whether \Cref{lrr:protrusion} is applicable. 
    Let $(G^1,k_1)$ be the instance returned by the reduction algorithm of \Cref{lrr:protrusion} at the end of its exhaustive application. 
    From Lemma~\ref{lem:protrusion-decomposition}, $G^1$ has an
    $(\alpha, \Oh(\eta/ \epsilon))$-protrusion decomposition $(P_0,\ldots, P_{\ell})$
     for some (unbounded) $\alpha$
     and 
     $|P_0| = \Oh_{\eta}(k^3)$.
    Run \Cref{lrr:three} on the instance $(G^1,k_1)$ to compute $(G^2,k_2)$.

    Consider the augmented graph $G^2_{\flow}$ of $G^2$.
     The number of $P_i$, $i \in [1,\ell]$, those neighborhood in $G^2_{\flow}$ is non-empty and not a clique is bounded by $\Oh_{\eta}(k^7)$ from Lemma~\ref{lem:non-clique}.
    For the remaining $P_i$, we partition them into sets such that each part in a fixed set has the same neighborhood in $G^2_{\flow}$ and any two sets have distinct neighborhood in $G^2_{\flow}$. Let the resulting sets be $R_1, \ldots, R_{\rho}$. 
    Note that for each $i \in [1,r]$, $N[R_i]$ is an $\Oh(\eta/ \epsilon)$-protrusion in $G$ and $N_{G^2_{\flow}}(R_i)$ is a clique in $G^2_{\flow}$.
    From Lemma~\ref{lem:bound-simplicial} we infer that $\rho= \Oh_{\eta}(k^{3r})$.

    We obtain that $G^2$ has a $(\delta, \gamma)$-protrusion decomposition, where $\delta= \Oh_{\eta}(k^5) + \Oh_{\eta}(k^{3r})$ and $\gamma= \Oh(\eta/ \epsilon)$.
    Let $(G^3,k_3)$ be the instance obtained from Lemma~\ref{lem:dichotomy:compress} on input $(G^2,k_2)$. 
    Then the number of vertices of $G^3$ is $\Oh_{\mathcal{F}}(\delta)$. Also, if $G^3$ has a $k$-sized solution then the treewidth of $G^3$ is at most $k+\eta$ and therefore the number of edges of $G^3$ is at most $\Oh_{\mathcal{F}}(k \cdot \delta)$.
    Finally run the $\beta$-kernelization algorithm for the problem \fdel\ on the instance $(G^3,k_3)$. Let $S_3$ be a $\beta$-approximate solution for the problem \fdel\ on the instance $(G^3,k_3)$.

    Using the polynomial-time algorithm of Lemma~\ref{lem:dichotomy:compress}, compute a $\beta$-approximate solution $S_2$ for the instance $(G^2,k_2)$ of \fdel.
    We use the solution lifting algorithm of Lemma~\ref{lem:lrr-three} with $S_2$:
    let $S_1$ be the corresponding $(1+\epsilon)\cdot \beta$-approximation solution obtained for the instance $(G^1,k_1)$ of \fdel.
    Finally, the solution lifting algorithm of Lemma~\ref{lem:lrr-one-correct} takes $S_1$ and gives a $(1+\epsilon)\cdot \beta$-approximation solution for the instance $(G,k)$. 
\end{proof}

\section{Conclusion}\label{sec:conclude}

In this work, we have presented new techniques to turn a near-protrusion decomposition into a protrusion decomposition (by sacrificing accuracy) and to process the latter in the presence of disconnected forbidden minors.
A main follow-up question is whether one really needs multiple calls to the oracle to obtain a uniform $(1+\eps)$-approximate kernel for \twetadel\ (and other cases of \fdel).
In other words, can we improve the uniform lossy kernelization protocol to uniform lossy kernelization?
Secondly, our protocol requires the oracle to handle instances of size $\Oh_{\eta,\eps} (k^{f(\eps)})$ for some function $f$.
Can we obtain a lossy kernel/protocol that is uniform also with respect to $\eps$?

\bibliography{references}

\appendix

\section{Preliminaries Continued}\label{app:prelims}

A graph $H$ is a {\em minor} of a graph $G$ (denoted $H \le_m G$) if it can be obtained from $G$ by a (possibly empty) series of vertex deletions, edge deletions, and edge contractions.
It is well-known that if $H \le_m G$ then there exists a {\em minor model} $\phi \colon V(H) \to 2^{V(G)}$ such that (i)~for each $v \in V(H)$ the graph $G[\phi(v)]$ is non-empty and connected, (ii) $\phi(u) \cap \phi(v) =\emptyset$ for $u\ne v \in V(H)$, and (iii)
if $uv \in E(H)$ then there exists an edge between $\phi(u)$ and $\phi(v)$.
For a graph family $\fcal$, 
a set $S \subseteq V(G)$ is called an 
\emph{\fcal-deletion set}
in $G$, if $G-S$ has no graph from $\mathcal{F}$ as a minor.

\begin{proposition}[{\cite[Fact 1]{DBLP:conf/stoc/LokshtanovPRS17}}]\label{prop:fact} 
    For any $a,b,c,d \in \mathbb{R}^+$, $\min\left \{ \frac{a}{c}, \frac{b}{d}\right \}   \leq \frac{a+b}{c+d}   \leq \max \left \{ \frac{a}{c} , \frac{b}{d} \right \}$.
\end{proposition}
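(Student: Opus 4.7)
The plan is to exploit the symmetry of the statement: swapping the pair $(a,c)$ with $(b,d)$ leaves all three quantities unchanged (the min and max by definition, and the middle term since addition is commutative). Hence without loss of generality I may assume $a/c \le b/d$. Under this assumption the left-hand side equals $a/c$ and the right-hand side equals $b/d$, so the two inequalities to prove become $a/c \le (a+b)/(c+d)$ and $(a+b)/(c+d) \le b/d$. The single algebraic fact that drives both of them is that, since $c,d>0$, the hypothesis $a/c \le b/d$ is equivalent by cross-multiplication to $ad \le bc$.

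For the lower bound, I multiply through by $c(c+d)>0$: the inequality $a/c \le (a+b)/(c+d)$ becomes $a(c+d) \le c(a+b)$, which after expanding and cancelling $ac$ from both sides reduces to $ad \le bc$. For the upper bound, I multiply through by $d(c+d)>0$: the inequality $(a+b)/(c+d) \le b/d$ becomes $d(a+b) \le b(c+d)$, which after expanding and cancelling $bd$ from both sides reduces again to $ad \le bc$. Both are thus immediate consequences of the normalizing WLOG assumption.

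There is no real obstacle here — the argument is entirely elementary, and the only point needing care is that positivity of $c$ and $d$ is used to justify cross-multiplication without flipping inequality signs. This is guaranteed by the hypothesis $a,b,c,d \in \mathbb{R}^+$.
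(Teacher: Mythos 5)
Your proof is correct. Note that the paper does not actually prove this proposition — it is imported verbatim as Fact~1 from Lokshtanov, Panolan, Ramanujan \& Saurabh (and in that source too it is stated essentially without proof as an elementary observation), so there is no ``paper proof'' to compare against. Your argument is the standard one for the mediant inequality: the symmetry-based WLOG reduction to $a/c \le b/d$ is clean, the two cross-multiplications do reduce to $ad \le bc$ after cancelling $ac$ (respectively $bd$), and you correctly flag that positivity of $c$, $d$, and $c+d$ is what licenses multiplying through without reversing the inequalities. Nothing is missing.
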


\begin{definition}[Tree Decomposition]
    For any graph $G$, a tree decomposition of $G$ is a pair $(T,\beta)$ where $T$ is a tree and $\beta: V(T) \to 2^{V(G)}$ satisfying the following properties.
    \begin{itemize}
        \item For each $v \in V(G)$, there exists $t \in V(T)$ such that $v \in \beta(t)$.
        \item For each $uv \in E(G)$, there exists $t \in V(T)$ such that $u,v \in \beta(t)$.
        \item For each $v \in V(G)$, $T[\{t: t \in V(T), v \in \beta(t)\}]$ is connected.
    \end{itemize}
\end{definition}

For any $t \in V(T)$, the set $\beta(t)$ is called a {\em bag} of the tree decomposition $(T,\beta)$.
The {\em  width} of $(T,\beta)$ is $\max_{t \in V(T)} |\beta(t)-1|$.
The {\em treewidth} of $G$ is the minimum width over all tree decompositions of $G$.

\begin{proposition}\label{prop:number-of-cliques}
An $n$-vertex graph of treewidth $\tw$ has $\Oh(2^{\tw} \cdot \tw \cdot n)$ distinct cliques.
\end{proposition}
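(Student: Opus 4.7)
The plan is to reduce the bound on distinct cliques to a counting argument over the bags of a tree decomposition, using the classical fact that every clique of $G$ is contained in some bag.

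First, I would fix a tree decomposition $(T,\beta)$ of $G$ of width $\tw$, and establish the following well-known structural claim: for every clique $K \sub V(G)$, there is a node $t \in V(T)$ with $K \sub \beta(t)$. The argument is the standard Helly-type one. For each $v \in V(G)$, let $T_v$ denote the subgraph of $T$ induced by those nodes $t$ with $v \in \beta(t)$; by the third condition of a tree decomposition, $T_v$ is a (non-empty) subtree of $T$. For any two vertices $u,v \in K$, the edge $uv$ must be covered by some bag, so $T_u \cap T_v \neq \emptyset$. Hence the subtrees $\{T_v : v \in K\}$ pairwise intersect, and by the Helly property for subtrees of a tree they share a common node $t$, whose bag $\beta(t)$ then contains $K$.

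Next, I would assume without loss of generality that $(T,\beta)$ is a nice tree decomposition, which has $\Oh(\tw \cdot n)$ nodes and can be obtained from an arbitrary width-$\tw$ decomposition via standard normalization. Each bag contains at most $\tw + 1$ vertices, so contributes at most $2^{\tw+1}$ subsets, which is an upper bound on the number of cliques it can contain as a subset. By the previous paragraph every clique of $G$ appears as a subset of at least one bag, so the total number of distinct cliques in $G$ is at most
\[
\Oh(\tw \cdot n) \cdot 2^{\tw+1} \;=\; \Oh(2^{\tw} \cdot \tw \cdot n),
\]
as claimed.

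The only nontrivial ingredient is the Helly argument placing every clique inside some bag; the rest is routine counting, and I do not anticipate any real obstacle here.
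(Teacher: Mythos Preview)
Your proposal is correct and follows essentially the same approach as the paper: take a tree decomposition with $\Oh(\tw \cdot n)$ nodes, use the fact that every clique lies inside some bag, and count at most $2^{\tw+1}$ subsets per bag. The only difference is that you spell out the Helly argument for the clique-in-bag fact, whereas the paper simply cites it.
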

\begin{proof}
    An $n$-vertex graph $G$ of treewidth $\tw$ has a tree decomposition of width $\tw$ and $\Oh(\tw \cdot n)$ nodes~\cite[Lemma~$7.4$]{DBLP:books/sp/CyganFKLMPPS15}.
    Since each clique of $G$ is contained inside some bag of this tree decomposition~\cite[Lemma~$12.3.5$]{DBLP:books/daglib/0030488},
    the number of distinct cliques are at most $2^{\tw+1}$ times the number of nodes in a tree decomposition.
\end{proof}

\begin{proposition}[{\cite[Theorem~$4$]{DBLP:journals/siamdm/BasteST20}}]\label{lem:f-del-on-tw}
    When $\mathcal{F}$ is a finite family of graphs and contains at least one planar graph,
    then \fdel\ can be solved in time $2^{\Oh_{\mathcal{F}} (\tw \log \tw)} \cdot n$ on an $n$-vertex graph of treewidth $\tw$.
\end{proposition}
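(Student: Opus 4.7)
The plan is to use dynamic programming on a nice tree decomposition $(T,\chi)$ of $G$ of width $\tw$ with $\Oh(\tw \cdot n)$ nodes, which can be computed in time $2^{\Oh(\tw)} \cdot n$. For each bag $\chi(t)$ I would maintain a collection of ``states'' that summarize, up to a carefully chosen equivalence, what partial deletion sets $S$ on the subgraph processed so far can be extended to a full $\mathcal{F}$-deletion set in $G$. The final answer is then read off from the set of states at the root. The key quantitative ingredient is that the number of distinct states per bag is bounded by $2^{\Oh_{\mathcal{F}}(\tw \log \tw)}$, which combined with $\Oh(\tw \cdot n)$ nodes yields the claimed running time.

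The state on a bag $\chi(t)$ of size $s \le \tw + 1$ should record: (i) the intersection of the candidate deletion set with $\chi(t)$; (ii) for each non-deleted vertex in $\chi(t)$, an optional labeling specifying which branch set of a partial minor model of some $F \in \mathcal{F}$ it represents, together with which vertex of $F$ that branch set targets; (iii) the partition of the active branch sets induced by their connected components inside the already processed subgraph (so that merges at higher nodes can be tracked); and, for each $F \in \mathcal{F}$, (iv) the subset of connected components of $F$ that have already been fully realized in the processed subgraph and may be dropped from the labeling. Let $h = h(\mathcal{F})$ be the maximum number of vertices over $F \in \mathcal{F}$. Then the number of labelings is at most $(|\mathcal{F}| \cdot h + 2)^{\tw+1}$, the number of connectivity partitions on at most $\tw+1$ elements is the Bell number $2^{\Oh(\tw \log \tw)}$, and the ``completed components'' flag contributes only a factor depending on $\mathcal{F}$. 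Multiplying yields $2^{\Oh_{\mathcal{F}}(\tw \log \tw)}$ states.

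Transitions at leaf, introduce, forget, and join nodes are then routine: at an introduce node we branch on whether the new vertex belongs to $S$ and, if not, which branch set (possibly a new singleton) it joins; at a forget node we project out the forgotten vertex, promoting a branch set to ``completed'' if it becomes closed below the current subtree; at a join node we combine two states by aligning the labeling on the shared bag and taking the transitive closure of the two connectivity partitions. A state is declared valid iff no graph $F \in \mathcal{F}$ is fully realized within the processed subgraph except those we wish to forbid, i.e., none at all. Each transition runs in time polynomial in the number of states, giving total time $2^{\Oh_{\mathcal{F}}(\tw \log \tw)} \cdot n$. The main obstacle is the join step in the presence of disconnected $F$: we must merge partial models so that the ``completed components'' bookkeeping is neither double-counted (two children independently completing the same component of $F$) nor lost (a component that only closes by combining branch sets from both children). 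Careful definition of the completion flag and the merge rule over connectivity partitions resolves this and is the technical heart of the single-exponential algorithm of Baste, Sau, and Thilikos.
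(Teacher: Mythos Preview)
The proposition is quoted from \cite{DBLP:journals/siamdm/BasteST20} and the paper does not supply its own proof, so there is nothing in-paper to compare against. Your sketch, however, has a genuine gap.

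Your state records, for each non-deleted bag vertex, \emph{one} optional assignment to a branch set of \emph{one} partial $F$-model, and you discard a state once that model completes. That bookkeeping is correct for the existential question ``does $G-S$ contain an $F$-minor?''---one witness suffices. But \fdel\ asks for the universal statement ``$G-S$ contains \emph{no} $F$-minor''. A single labeling per state cannot certify this: the particular partial model you track may never close, while a different, untracked, assignment of the same bag vertices to branch sets does close below the current node. Already for $\mathcal{F}=\{K_3\}$ (that is, \textsc{Feedback Vertex Set}) the standard DP does not track partial triangles; it tracks the forest structure of $G_t-S$ on the bag, precisely because one must rule out \emph{all} cycles, not a single candidate cycle. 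To make your scheme sound you would have to store the whole \emph{set} of realizable partial-model signatures on the bag, which pushes the state count to doubly exponential in $\tw$ and destroys the claimed bound.

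The argument of Baste, Sau, and Thilikos hinges on the planarity hypothesis, which your sketch never invokes: since some $F\in\mathcal{F}$ is planar, every $\mathcal{F}$-minor-free graph has treewidth at most a constant $\eta=\eta(\mathcal{F})$. Rather than tracking forbidden models, one tracks the interface of a width-$\eta$ partial tree decomposition of $G_t-S$ at the current bag, together with bounded-size folio information; the number of such interfaces over a bag of size at most $\tw+1$ is $2^{\Oh_{\mathcal F}(\tw\log\tw)}$, and that is where the running-time bound comes from. The fact that your outline yields the same exponent without ever using planarity is itself a warning sign: without that hypothesis the problem is not known to admit a $2^{o(\tw^2)}\cdot n^{\Oh(1)}$ algorithm.
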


\begin{proposition}[{\cite[Corollary~$1.1$]{DBLP:conf/soda/GuptaLLM019}}]
\label{lem:prelim:apx}
When $\mathcal{F}$ is a finite family of graphs that contains at least one planar graph,
    then \fdel\ admits a polynomial-time 
    $\Oh_{\mathcal{F}}(1)$-approximation algorithm.
\end{proposition}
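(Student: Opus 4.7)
The plan is to reduce the approximation task for \fdel\ to \twetadel\ for $\eta = \eta(\fcal)$, and then combine an approximation for the latter with an exact algorithm on bounded-treewidth graphs. Recall that, since $\fcal$ contains a planar graph, every $\fcal$-minor-free graph has treewidth at most~$\eta$, so every $\fcal$-deletion set is in particular a $\twetadelsmall$-solution; equivalently, $\opt_{\twetadelsmall}(G) \le \opt_{\fdelsmall}(G)$. This single inequality is what powers the whole reduction.

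First I would invoke a polynomial-time $\Oh(\log \eta)$-approximation for \twetadel, for instance the algorithm of Gupta, Lee, Li, Manurangsi, and W\l odarczyk that is already cited in the overview. This yields in polynomial time a vertex set $S_1 \sub V(G)$ with $\tw(G - S_1) \le \eta$ and
\[
|S_1| \;\le\; \Oh(\log \eta) \cdot \opt_{\twetadelsmall}(G) \;\le\; \Oh(\log \eta) \cdot \opt_{\fdelsmall}(G).
\]
For a fixed family $\fcal$, the factor $\Oh(\log \eta)$ is just $\Oh_{\fcal}(1)$.

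Next, since $\tw(G - S_1) \le \eta$, I would apply Proposition~\ref{lem:f-del-on-tw} to compute, in time $\Oh_{\fcal}(|V(G)|)$, an optimal $\fcal$-deletion set $S_2$ in $G - S_1$. The key observation is that if $S^*$ is an optimal $\fcal$-deletion set of $G$, then $S^* \setminus S_1$ is an $\fcal$-deletion set of $G - S_1$, because
\[
(G - S_1) - (S^* \setminus S_1) \;=\; G - (S_1 \cup S^*)
\]
is a subgraph of $G - S^*$ and hence $\fcal$-minor-free. Consequently $|S_2| \le |S^*| = \opt_{\fdelsmall}(G)$. Returning $S_1 \cup S_2$ then produces an $\fcal$-deletion set of $G$ of size at most $(\Oh(\log \eta) + 1)\cdot \opt_{\fdelsmall}(G) = \Oh_{\fcal}(1)\cdot \opt_{\fdelsmall}(G)$, as required.

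The main nontrivial ingredient is the $\Oh(\log \eta)$-approximation for \twetadel, which is the actual content of the cited result and is not something I would attempt to reprove from scratch; everything else is a black-box composition. The only subtlety worth double-checking is that the exact algorithm on bounded treewidth is applied on $G - S_1$ rather than on $G$, and that the clean inequality $\opt_{\fdelsmall}(G - S_1) \le \opt_{\fdelsmall}(G)$ relies on taking \emph{induced} subgraphs, which is precisely what happens here.
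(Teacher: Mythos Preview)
The paper does not prove this proposition at all; it is stated as a black-box citation of Corollary~1.1 in Gupta, Lee, Li, Manurangsi, and W\l{}odarczyk, so there is no ``paper's own proof'' to compare against. Your derivation is nonetheless correct and is essentially how that corollary is obtained in the cited source: the $\Oh(\log\eta)$-approximation for \twetadel\ (Theorem~1.1 there) yields a set $S_1$ with $\tw(G-S_1)\le\eta$ and $|S_1|\le \Oh_\fcal(1)\cdot\opt_{\fdelsmall}(G)$, after which an exact bounded-treewidth algorithm (your Proposition~\ref{lem:f-del-on-tw}) finishes the job. The inequality $\opt_{\twetadelsmall}(G)\le\opt_{\fdelsmall}(G)$ and the monotonicity $\opt_{\fdelsmall}(G-S_1)\le\opt_{\fdelsmall}(G)$ are used exactly as you state them, and both are sound.
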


\begin{proposition}[\cite{DBLP:journals/siamdm/GarneroPST15, kim15linear}]
\label{lem:prelim:topological}
Let $\beta \in \nn$, $H$ be an $h$-vertex
graph, and $G$ be an $n$-vertex $H$-topological-minor-free graph.
There is a linear-time algorithm, that given $G$ and $X \sub V(G)$ satisfying $\tw(G-X) \le \beta$, outputs an $(\alpha_H \cdot \beta \cdot |X|, \,2\beta + h)$-protrusion decomposition of $G$, where $\alpha_H$ depends only on $H$. 
\end{proposition}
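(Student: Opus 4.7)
The plan is to build the decomposition from a tree decomposition of $G-X$ of width $\beta$ by applying the LCA-closure trick over a carefully chosen set of anchor bags. First I would compute a tree decomposition $(T,\chi)$ of $G-X$ of width $\beta$ in linear time (via Bodlaender's algorithm) and root $T$ at an arbitrary node. Note that by assumption $\tw(G-X) \le \beta$, so this is available.

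Next, I would define a set $S \subseteq V(T)$ of anchor bags whose LCA closure $L := \lca(S)$ fully controls the interaction between $G-X$ and $X$. The natural attempt is to include in $S$ one bag containing some vertex of $N_{G-X}(X)$ per distinct ``trace'' $N_G(v) \cap X$ exhibited by a vertex $v \in V(G) \setminus X$. By Lemma \ref{lem:prelim:lca}, $|L| \le 2|S|$ and each connected component of $T - L$ has at most two neighbors in $L$. Setting $P_0 := X \cup \bigcup_{t \in L}\chi(t)$ and letting $P_1,\ldots,P_\ell$ be the maximal vertex sets of $V(G)\setminus P_0$ contained in single components of $T-L$, each $P_i$ has all of its $X$-neighbors sharing a single trace (of size at most $h$, see below). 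Consequently $N_G[P_i]$ has boundary of size at most $2(\beta+1) + h$ and admits a tree decomposition of width at most $2\beta + h$, obtained by gluing the sub-tree-decomposition of $G[P_i]$ inherited from $T$ with the at most $h$ boundary vertices from $X$. Since the number of components of $T-L$ is at most $|L|+1$, we get $\ell \le 2|S|+1$.

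The main obstacle is bounding $|S|$ linearly in $|X|$. For a general graph this fails: the number of traces on $X$ can be as large as $2^{|X|}$, and a single $x \in X$ can have unboundedly many external neighbors each sitting in a distinct bag. Here the $H$-topological-minor-free assumption is essential. Such graphs form a class of bounded expansion, and for bounded-expansion classes the neighborhood complexity of $X$---i.e.\ the number of distinct traces $\{N_G(v) \cap X : v \in V(G)\setminus X\}$---is linear in $|X|$, with a constant depending only on $H$. Moreover, traces of size exceeding $h$ can be ruled out because a sufficient concentration of such vertices would force an $H$-subdivision, capping each relevant trace at size $\le h-1$; the small number of vertices with ``large'' traces can then be added to $P_0$ directly at linear cost. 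Picking one anchor bag per trace therefore yields $|S| \le c_H \cdot |X|$, hence $|P_0| \le \alpha_H \cdot \beta \cdot |X|$ and $\ell \le O_H(|X|)$ as desired.

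Finally, I would argue linear-time implementability: Bodlaender's algorithm gives $(T,\chi)$ in linear time, the LCA closure $L$ is computed in linear time via standard LCA data structures, and enumerating the $O_H(|X|)$ distinct neighborhood traces is doable in linear time using low-treedepth colourings of bounded-expansion classes. The protrusion decomposition is then read off directly from the components of $T - L$.
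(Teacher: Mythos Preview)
The paper does not prove this proposition; it is quoted from \cite{DBLP:journals/siamdm/GarneroPST15, kim15linear} without argument, so there is no in-paper proof to compare against. Evaluating your sketch on its own merits:

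Your high-level template---tree decomposition of $G-X$, LCA closure of a bounded anchor set, protrusions read off from the components, with the $O_H(|X|)$ bound supplied by bounded expansion---matches how the cited constructions proceed. But the specific anchoring step is wrong. Marking one bag per distinct trace $N_G(v)\cap X$ and taking the LCA closure does \emph{not} force every vertex of a resulting part $P_i$ to share a single trace. Concretely, let $G-X$ be the path $v_1\cdots v_{3m}$ with the natural path tree decomposition rooted at $t_1$, let $X=\{x_1,\dots,x_m\}$, and make each $x_i$ adjacent to $v_i$ and to $v_{2m+i}$. The distinct traces are $\{x_1\},\dots,\{x_m\},\emptyset$; choosing anchors $t_1,\dots,t_{m+1}$ gives $L=\{t_1,\dots,t_{m+1}\}$, and the single remaining component yields $P_1=\{v_{m+3},\dots,v_{3m}\}$ with $N_G(P_1)\cap X = X$, of size $m$. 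Every vertex of this $G$ has degree at most $3$, so $G$ excludes $K_5$ as a topological minor; the hypothesis therefore does not rescue the step, and the protrusion boundary blows up.

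What the cited proofs anchor on is per vertex of $X$, not per trace: for each $x\in X$ one marks the leaves and branch nodes of the minimal subtree $T_x\subseteq T$ whose bags cover $N_G(x)\setminus X$. After LCA-closure, every component $C$ of $T-L$ satisfies $T_x\supseteq C$ or $T_x\cap C=\emptyset$ for each $x$, which is the structural property you actually need to bound $|N_G(P_i)\cap X|$ (and topological-minor-freeness then caps that set at $O(h)$). The quantity that must be shown to be $O_H(|X|)$ is the total number of leaves over all the $T_x$; this comes from an edge-density bound on a shallow minor of $G$ built over $X$---a different (though related) consequence of bounded expansion than the neighbourhood-complexity statement you invoke.
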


\subsection{Lossy Kernelization}\label{sec:prelims-lossy}

Let $\Pi$ be a parameterized minimization vertex-deletion graph problem and $I=(G,k)$ be an instance of $\Pi$ where $G$ is a graph and $k$ is the parameter and the solution budget. 
In our settings, $\Pi$ will be a placeholder for one of these two problems: \fdel\ (\fdelsmall) and \twetadel (\twetadelsmall).
In the \twetadel\ problem, $\eta$ is a fixed positive integer constant, 
the input is an undirected graph $G$ and a positive integer $k$, 
and the goal is to find at most $k$ vertices $X \subseteq V(G)$, such that the treewidth of $G-X$ is at most $\eta$. 

As the parameter $k$ is intended to capture the difficulty of the approximation task, we consider the solution costs capped at $k+1$, following the convention from \cite{DBLP:conf/stoc/LokshtanovPRS17}.

Let $I=(G,k)$ be an instance of $\Pi$.
For any $S \subseteq V(G)$, define   
$\val^{\Pi}_I(S): = \min\{|S|,k+1\}$ if $S$ is a solution of $\Pi$ in $G$,
otherwise define $\val^{\Pi}_I(S): = +\infty$.
Finally, $\opt_{\Pi}(I):= \min_{S \subseteq V(G)} \{\val^{\Pi}_{I}(S)\}$.
A set $S \subseteq V(G)$ is called a {\em $(\val^{\Pi}_I(S) / \opt_{\Pi}(I))$-approximate solution} of $\Pi$ for the instance $I$.

\begin{definition}[$\alpha$-lossy strict reduction rule \cite{DBLP:conf/stoc/LokshtanovPRS17}]
    Let $\Pi$ be a parameterized minimization problem. 
    Let $\alpha \geq 1$.
    An \emph{$\alpha$-lossy strict reduction rule} for $\Pi$ consists of a pair of polynomial-time algorithms, called the \emph{reduction algorithm}, and the \emph{solution lifting} algorithm.

    The reduction algorithm takes as input an instance $I=(G,k)$ of $\Pi$ and outputs another instance $I'=(G',k')$ of $\Pi$ such that $k'\leq k$.

    The solution lifting algorithm takes as input instances $I,I'$ and a set $S' \subseteq V(G')$,
    and outputs a a set $S \subseteq V(G)$ such that 

    $$\frac{\val^{\Pi}_I(S)}{\opt_{\Pi}(I)} \leq \max \left \{\alpha, \frac{\val^{\Pi}_{I'}(S')}{\opt_{\Pi}(I')} \right\}.$$ 
\end{definition}

A more general notion of an $\alpha$-lossy kernelization algorithm, called an {\em $\alpha$-lossy kernelization protocol} was defined in~\cite[Definition~$5$]{DBLP:conf/esa/FominLL0TZ23}. 
Below we define a more general notion called an {\em $\alpha$-lossy compression protocol}.

For any $\beta \geq 1$, $s \in \mathbb{N}$ and any parameterized minimization problem $\Pi$, 
a {\em $\beta$-approximate kernelization oracle for $\Pi$ of capacity $s$}, takes as input an instance $I=(G,k)$ of $\Pi$ where $|I|,k \leq s$, and outputs a $\beta$-approximate solution $S$ of the instance $I$ of $\Pi$.

\begin{definition}[$\alpha$-lossy compression protocol]
    Let $\alpha, \beta \geq 1$, $f: \mathbb{N} \times \mathbb{N} \to \mathbb{N}$ and $g: \mathbb{N} \times \mathbb{N} \to \mathbb{N}$.
    An {\em $\alpha$-lossy compression protocol} for a parameterized minimization problem $\Pi$,
    of {\em call size} $f(k,\alpha)$, and $g(k,\alpha)$ {\em rounds}, 
    is given as input an instance $(G,k)$ of $\Pi$ 
    and has access to $\beta$-approximate kernelization oracles $\mathsf{O}$ and $\mathsf{O}'$ of $\Pi$ and some other parameterized minimization problem $\Pi'$ respectively,
    of capacity $f(k,\alpha)$.
    It performs at most $g(k,\alpha)$ calls to $\mathsf{O}$ and/or $\mathsf{O}'$ 
    and any other operations in polynomial time, 
    and outputs an $(\alpha\cdot \beta)$-approximate solution $S$ for the problem $\Pi$ of the instance $(G,k)$.
\end{definition}

In our setting, $\Pi$ is either the \fdel\ problem of the \twetadel\ problem, and $\Pi'$ will be the \twetadel\ problem.
An {\em $\alpha$-lossy kernelization protocol} for $\Pi$ is an $\alpha$-lossy compression protocol when $\Pi' = \Pi$.
An {\em $\alpha$-lossy kernelization algorithm} is an $\alpha$-lossy kernelization protocol when the number of rounds is $1$.

For convenience, we also define an analogue of $\alpha$-lossy (non-strict) reduction rule to $\alpha$-lossy compression protocol. This will be useful to present cleanly the intermediate arguments.

\begin{definition}[$\alpha$-lossy reduction protocol]
    Let $\alpha, \beta \geq 1$, $f: \mathbb{N} \times \mathbb{N} \to \mathbb{N}$ and $g: \mathbb{N} \times \mathbb{N} \to \mathbb{N}$.
    An {\em $\alpha$-lossy reduction protocol} for a parameterized minimization problem $\Pi$,
    of {\em call size} $f(k,\alpha)$, and $g(k,\alpha)$ {\em rounds}, 
    is given as input an instance $(I,k)$ of $\Pi$ 
    and has access to $\beta$-approximate kernelization oracles $\mathsf{O}$ and $\mathsf{O}'$ of $\Pi$ and some other parameterized minimization problem $\Pi'$ respectively,
    of capacity $f(k,\alpha)$.
    It performs at most $g(k,\alpha)$ calls to $\mathsf{O}$ and/or $\mathsf{O}'$ 
    and any other operations in polynomial time, 
    and outputs an instance $(I',k')$ of $\Pi$.

    It is accompanied by a solution lifting algorithm which is a polynomial time algorithm and additionally has access to a $\beta$-approximate solution $S'$ of $\Pi$ of the instance $(I',k')$ and outputs a solution $S$ to the instance $(I,k)$ of $\Pi$ such that 
    $$\frac{\val^{\Pi}_I(S)}{\opt_{\Pi}(I)} \leq  \alpha \cdot \beta.$$ 
\end{definition}

\section{Missing proofs}

\subsection{Proof of Lemmas~\ref{lem:dichotomy:lca} and \ref{lem:dichotomy:product-packing}}

\lemDichLCA*
\begin{proof}
    Let $(T,\chi)$ be a rooted tree decomposition of $G$ of width $\le \beta$.
    We construct set $S_T \sub V(T)$ by choosing for each $v \in S$ some node $t \in V(T)$ for which $v \in \chi(t)$.
    Let $L = \overline{\mathsf{LCA}}(S_T)$.
    By \Cref{lem:prelim:lca} we know that $|L| \le 2|S_T| \le 2|S|$ and each connected component $C$ of $T-L$ satisfies $|N_T(C)| \le 2$.
    Let $\cal C$ denote the family of those components of $T-L$ with exactly two neighbors in $L$.
    By contracting each component of $T-L$ into one of its neighbors in $L$,
    we obtain a tree on $|L|$ nodes, whose edges correspond to $\cal C$; hence $|{\cal C}| \le |L|-1$.
    
    We define $P_0 = \chi(L)$. Clearly, $S \sub P_0$ and $|P_0| \le (\beta+1)\cdot |L| \le 2(\beta+1)\cdot |S|$.
    For each $C \in \cal C$ we create a set $P_C = \chi(C) \sm \chi(L)$ and add it to the decomposition.
    The neighborhood of $P_C$ is contained in two bags of $(T,\chi)$ so $|N_G(P_C)| \le 2(\beta+1)$ and so $N_G[P_C]$ is a $2(\beta+1)$-protrusion.
    Each of the remaining components of $T-L$ has one neighbor in $L$.
    For $t \in L$ we define $C_t \sub V(T)$ as the union of connected components of $T-L$ whose only neighbor in $L$ is $t$.
    Again, we add $P_t = \chi(C_t) \sm \chi(L)$ to the decomposition, for which $N_G[P_t]$ forms a $(\beta+1)$-protrusion.
    The number of sets in the decomposition is $\le 2|L|$ which is at most  $2(\beta+1)\cdot |S|$ because $\beta \ge 1$.
\end{proof}

\lemDichProduct*
\begin{proof}
    Let $\cal M$ be the family of vertex sets corresponding to all the minor models in the $(\fcal, c)$-minor packing.
    Since each $F \in \fcal$ is connected, also each $M \in \cal M$ induces a connected graph.
    Hence for each $j \in [1,q]$ there is at most one connected component of $G[Q_j]$ that contains a minor model from $\cal M$.
    When $G[Q_j]$ has more than one connected component, we split $Q_j$ into $Q^1_j$, containing the above minor model, and $Q^2_j = Q_j \sm Q^1_j$.
    Both $Q^1_j, Q^2_j$ are still $\beta_2$-protrusions in $G-P_0$. 
    We can thus assume that for each $j \in [1,q]$, the set $Q_h$ either induces a connected graph or is disjoint from $\cal M$.
    This increases the number of protrusions by a factor at most 2.

    We now construct protrusion decomposition $(R_0,R_1,\dots,R_r)$ of $G$ by setting $R_0 = P_0 \cup Q_0$ and describing a partition $\cal R$ of $V(G) \sm (P_0 \cup Q_0)$.
    For each $j \in [1,q]$ with $G[Q_j]$ connected, we add $Q_j$ to $\cal R$.
    There are at most $\alpha_2 \le  \alpha_2\cdot\beta_2$ such indices.
    Observe that in this case there must be a unique $i \in [1,p]$ for which $Q_j \sub P_i$ and so $N_G(Q_j) \sub N_G(P_i) \cup N_{G-P_0}(Q_j)$ has size $\le \beta_1 + \beta_2$, hence $Q_j$ forms a $(\beta_1 + \beta_2)$-protrusion.
    This already ensures that the $(\fcal, c)$-minor packing will be preserved.

    Next, we handle disconnected $G[Q_j]$.
    For such an index $j \in [1,q]$, consider the set $I_j \sub [1,p]$ of those indices $i$ for which $N_{G-P_0}(Q_j) \cap P_i \ne \emptyset$.
    Clearly, $|I_j| \le \beta_2$.
    The set $N_G(P_i \cap Q_j)$ is contained in $N_G(P_i) \cup N_{G-P_0}(Q_j)$ which has size at most $\beta_1 + \beta_2$, so again we obtain a $(\beta_1 + \beta_2)$-protrusion.
    For each $i \in I_j$ we insert $Q_j \cap P_i$ to $\cal R$.
    The number of such pairs is $\le \alpha_2 \cdot \beta_2$.
    
    Note that for each $i \in [1,p] \sm I_j$ satisfying $Q_j \cap P_i \ne \emptyset$, the set $Q_j \cap P_i$ must induce a union of connected components of $G[P_i]$.
    In particular, these components must be disjoint from $Q_0$.
    Consequently, for each $i \in [1,p]$ the subset $P'_i \sub P_i \sm Q_0$ that has not been yet covered by $\cal R$ induces a union of connected components of $G[P_i]$ disjoint from $Q_0$.
    We infer that $N_G(P'_i) \sub N_G(P_i)$ so $P'_i$ is a $\beta_1$-protrusion and it can be added to $\cal R$.
    This concerns $\le \alpha_1$ sets and concludes the construction.
\end{proof}

\subsection{Uniform lossy kernelization}\label{app:kernel}

\begin{observation}\label{obs:near-protrusion}
    For any \fcal-deletion set $S$ of $G$ of size at most $k$,
    and any connected component $C$ of $G-(X \cup Z)$,
    $|(N_G(C) \cap X) \setminus S| \leq \eta+1$.
\end{observation}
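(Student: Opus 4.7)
The plan is to combine the high-connectivity property from Lemma~\ref{lem:near-protrusion} with the standard fact that pairs of vertices linked by many internally vertex-disjoint paths must be co-located in every narrow tree decomposition. Let $C$ be a connected component of $G-(X\cup Z)$ and let $S$ be an $\fcal$-deletion set in $G$ of size at most $k$. Since $G-S$ is $\fcal$-minor-free and $\fcal$ contains a planar graph, $\tw(G-S)\le \eta$; fix a tree decomposition $(T,\chi)$ of $G-S$ of width at most $\eta$.

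Next, consider any two distinct vertices $u,v \in (N_G(C)\cap X)\sm S$. By Lemma~\ref{lem:near-protrusion}, there are at least $k+\eta+2$ internally vertex-disjoint paths between $u$ and $v$ in $G$. Removing the $|S|\le k$ vertices of $S$ destroys at most $k$ of these paths, so at least $\eta+2$ of them remain intact in $G-S$. In a tree decomposition of width at most $\eta$, any bag has size at most $\eta+1$, and hence separating $u$ and $v$ along any bag on the $u$-$v$ path in $T$ would yield a vertex cut of size $\le \eta+1$, contradicting Menger's theorem applied to the $\eta+2$ disjoint paths. Therefore the subtrees $\{t\in V(T) : u\in\chi(t)\}$ and $\{t\in V(T) : v\in \chi(t)\}$ intersect, i.e.\ $u$ and $v$ share a bag of $(T,\chi)$.

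Finally, by the Helly property for subtrees of a tree, a pairwise-intersecting family of connected subtrees of $T$ has a common node. Applying this to the subtrees of $T$ indexed by the vertices of $(N_G(C)\cap X)\sm S$, we find a single bag of $(T,\chi)$ containing all of them. Since this bag has at most $\eta+1$ vertices, we conclude $|(N_G(C)\cap X)\sm S|\le \eta+1$, as desired. The main (and only mildly subtle) step is the path-counting: making sure that $u,v\notin S$ so both endpoints survive, and that removing a set of size $\le k$ can knock out at most $k$ of the $k+\eta+2$ disjoint paths; everything else is a routine appeal to Menger's theorem and the Helly property.
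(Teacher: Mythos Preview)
Your proof is correct and follows essentially the same route as the paper's: use the $k+\eta+2$ disjoint paths from Lemma~\ref{lem:near-protrusion}, delete $S$ to retain $\eta+2$ of them, and conclude that every pair in $(N_G(C)\cap X)\sm S$ shares a bag in a width-$\eta$ tree decomposition of $G-S$, hence the whole set fits in one bag. The only difference is that you spell out the Helly-property step explicitly, whereas the paper leaves it implicit.
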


\begin{proof}[Proof of Observation~\ref{obs:near-protrusion}]
    Let $S$ be an \fcal-deletion set of $G$ of size at most $k$ and let $C$ is an arbitrary connected component of $G-(X \cup Z)$.
    Let $R:=(N_G(C) \cap X) \setminus S$. 
    Since $R\subseteq N_G(C) \cap X$, from Lemma~\ref{lem:near-protrusion}, for each $u,v \in R$, there exists at least $k+\eta+2$ vertex-disjoint $u$ to $v$ paths in $G$. Since $|S| \leq k$ and $R\cap S =\emptyset$, there exists at least $\eta+2$ vertex-disjoint $u$ to $v$ paths in $G-S$, for each pair $u,v \in R$. Since the treewidth of $G-S$ is at most $\eta$, $R$ is contained inside some bag of every tree decomposition of $G-S$ of width at most $\eta$. Therefore, $|R| \leq \eta +1$.
\end{proof}

\begin{proof}[Proof of \Cref{lem:lrr-one-correct}]
    Clearly both the reduction algorithm and the solution lifting algorithm run in polynomial time.
    Let $I:=(G,k)$ and $I':=(G',k')$.
    To prove the lemma, it remains to prove that

\begin{equation}\label{eqn:one}
    \frac{\val^{\fdelsmall}_{I}(S)}{\opt_{\fdelsmall}(I)} \leq \max \left \{(1+\epsilon), \frac{\val^{\fdelsmall}_{I'}(S')}{\opt_{\fdelsmall}(I')} \right\}.
\end{equation} 
If $S'$ is not an \fcal-deletion set of $G'$ then \Cref{eqn:one} holds.
Now,  let $S'$ be an \fcal-deletion set of $G'$.
Then $S$ is also an \fcal-deletion set of $G$. 
Also if $\val^{\fdelsmall}_{I'}(S') = k'+1$ then 
$\val^{\fdelsmall}_{I}(S) =k+1$ and \Cref{eqn:one} holds.
Otherwise, $\val^{\fdelsmall}_I(S) \leq \val^{\fdelsmall}_{I'}(S') + |N_G(C) \cap X|$.
Let $O$ be an optimal solution in $G$. From \Cref{obs:near-protrusion}, $|O \cap (N_G(C) \cap X)| \geq |N_G(C) \cap X| - (\eta +1)$. Therefore, $\opt_{\fdelsmall}(I) \geq \opt_{\fdelsmall}(I') + |N_G(C) \cap X| - (\eta+1)$.
Thus, we get that,

\begin{align*}
    \frac{\val^{\fdelsmall}_I(S)}{\opt_{\fdelsmall}(I)} &\leq \frac{\val^{\fdelsmall}_{I'}(S')+ |N_G(C) \cap X|}{\opt_{\fdelsmall}(I') + |N(C) \cap X| -(\eta+1)} &\\
                             &\leq \max \left\{\frac{\val^{\fdelsmall}_{I'}(S')}{\opt_{\fdelsmall}(I')}, \frac{|N_G(C) \cap X|}{|N_G(C) \cap X| -(\eta+1)} \right \} & \text{ (from Proposition~\ref{prop:fact})}\\
                             & \leq \max \left\{ \frac{\val^{\fdelsmall}_{I'}(S')}{\opt_{\fdelsmall}(I')},  (1+\epsilon)\right \} &
\end{align*}
The last inequality above follows because if, for the sake of contradiction, $\frac{|N_G(C) \cap X|}{|N_G(C) \cap X|-(\eta+1)} > (1+\epsilon)$, then it implies that $|N_G(C) \cap X| < \frac{(1+\epsilon)(\eta+1)}{\epsilon}$, which is a contradiction.
\end{proof}

\begin{observation}\label{lem:bound-neighbourhood}
    When \Cref{lrr:protrusion} is no longer applicable, for each connected component of $C$ of $G-(X \cup Z)$, it holds that
    $|N(C)| \leq \eta'$
    where $\eta':= \frac{(1+\epsilon)(\eta+1)}{\epsilon} + 2(\eta+1)$.
\end{observation}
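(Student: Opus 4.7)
The plan is to decompose the neighborhood $N(C)$ according to its intersection with the two sets $X$ and $Z$, and then bound each piece separately using the information already available. Since $C$ is a connected component of $G-(X \cup Z)$, any vertex adjacent to $C$ in $G$ must lie in $X \cup Z$, so we may write
\[
N_G(C) \;=\; (N_G(C) \cap X) \;\cup\; (N_G(C) \cap Z).
\]

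First I would recall the guarantee of Lemma~\ref{lem:near-protrusion}: the second bullet directly yields $|N_G(C) \cap Z| \le 2(\eta+1)$. This takes care of the $Z$-side of the neighborhood without any further work, and it is a statement about the near-protrusion decomposition that is preserved throughout the reduction process (since \Cref{lrr:protrusion} only removes vertices of $X$ and does not touch $Z$ nor introduce new components of $G-(X\cup Z)$ with larger $Z$-neighborhood).

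Next, for the $X$-side I would use the hypothesis that \Cref{lrr:protrusion} is no longer applicable. By the definition of the rule, applicability triggers whenever some component $C$ satisfies $|N_G(C) \cap X| \ge \frac{(1+\epsilon)(\eta+1)}{\epsilon}$; therefore inapplicability to every component $C$ means
\[
|N_G(C) \cap X| \;<\; \frac{(1+\epsilon)(\eta+1)}{\epsilon}.
\]

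Adding the two bounds gives
\[
|N_G(C)| \;\le\; |N_G(C) \cap X| + |N_G(C) \cap Z| \;\le\; \frac{(1+\epsilon)(\eta+1)}{\epsilon} + 2(\eta+1) \;=\; \eta',
\]
which is precisely what the observation claims. There is no real obstacle here: the statement is a direct combination of the structural property of the near-protrusion decomposition and the stopping condition of the lossy reduction rule. The only minor care needed is to verify that \Cref{lrr:protrusion} does not invalidate the hypothesis of Lemma~\ref{lem:near-protrusion} in the reduced instance, but this is exactly the remark already made in the paragraph just before the observation, namely that after an application of \Cref{lrr:protrusion} the surviving set $X$, together with $Z$, continues to satisfy the properties of Lemma~\ref{lem:near-protrusion}.
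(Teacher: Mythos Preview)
Your proof is correct and is precisely the argument the paper intends: the observation is stated without proof because it follows immediately by splitting $N_G(C)$ into its $X$- and $Z$-parts, bounding the $Z$-part by Lemma~\ref{lem:near-protrusion} and the $X$-part by the stopping condition of \Cref{lrr:protrusion}, exactly as you do.
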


The proof of Lemma~\ref{lem:protrusion-decomposition} follows from \Cref{lem:bound-neighbourhood}.

\begin{proof}[Proof of \Cref{lem:protrusion-decomposition}]
   Define $\eta^*:= \eta + \eta'$, where $\eta'$ is as defined in Lemma~\ref{lem:protrusion-decomposition}. Then $\eta^*=\Oh(\eta / \epsilon)$.
   The construction of the desired $(\alpha, \eta^*)$-protrusion decomposition of $V(G) = (P_0, \ldots, P_{\ell})$ is pretty straightforward.
   The set $P_0:=X \cup Z$. From Lemma~\ref{lem:near-protrusion}, $|P_0| = \Oh_{\eta}(k^3)$.
   Partition the set of connected components of $G-(X \cup Z)$ into parts $(\mathcal{C}_1, \ldots, \mathcal{C}_{\ell})$ such that for each $i\in [1, \ell]$, 
   the connected components in part $\mathcal{C}_i$ have the same neighbourhood, and for any $i,j \in [1,\ell]$, $i\neq j$, 
   $C \in \mathcal{C}_i$ and $C' \in \mathcal{C}_j$,
   $N(C) \neq N(C')$.
   For each $i \in [1,\ell]$,
   set $P_i:= \bigcup_{C \in \mathcal{C}_i} C$.
   It is straightforward to verify that each $P_i$ satisfies the properties stated in the lemma.
   
    Finally, we show that $N[P_i]$ is an $\eta^*$-protrusion. Indeed,
    $|N(P_i)|= |N[\bigcup_{C \in \mathcal{C}_i} C]| = |N(C^*)|$ for any $C^* \in \mathcal{C}_i$ and $|N(C^*)| \leq \eta'$ from \Cref{lem:bound-neighbourhood}.
    Furthermore, from Lemma~\ref{lem:near-protrusion}, $X$ is an \fcal-deletion set of $G$. 
    Therefore for each connected component $C$ of $G-(X \cup Z)$,
    $G[C]$ has no graph from $\mathcal{F}$ as a minor, and since $\mathcal{F}$ contains the planar graph $F_{\planar}$, the treewidth of $G[C]$ is at most $\eta$. 
    Therefore, the treewidth of $G[\bigcup_{C \in \mathcal{C}_i} C]$ is also upper bounded by $\eta$.
    Finally, the treewidth of $G[N[\bigcup_{C \in \mathcal{C}_i} C]]$ is at most the treewidth of $G[\bigcup_{C \in \mathcal{C}_i} C]$, plus $|N(\bigcup_{C \in \mathcal{C}_i} C)|$, which is at most $\eta + \eta'$.
    Therefore, $N[P_i]$ is an $\eta^*$-protrusion in $G$.
\end{proof}

\begin{proof}[Proof of \Cref{lem:fdel-tweta-flow}]
    Consider the \twetadel\ problem first. 
    Since $G_{\flow}$ is a supergraph of $G$, any solution of $G_{\flow}$ is also a solution of $G$.
    For the other direction, let $S \subseteq V(G)$ such that $|S| \leq k$ and the treewidth of $G-S$ is at most $\eta$.  Let $(T,\beta)$ be a tree decomposition of $G-S$ of width at most $\eta$.
    We show that $(T,\beta)$ is also a tree decomposition of  $G_{\flow}-S$. 
    For this, we show that for each edge $uv \in E(G_{\flow}) \setminus E(G)$,
    either $S \cap \{u,v\} \neq \emptyset$, or there exists $t \in V(T)$ such that $u,v \in \beta(t)$.

    If $S \cap \{u,v\} = \emptyset$, then in $G -S$, there are at least $\eta+2$ vertex-disjoint paths between $u$ and $v$.
    For the sake of contradiction, say it is not the case  that there exists $t \in V(T)$ such that $u,v \in \beta(t)$. Root the tree $T$ arbitrarily. Let $t_u \in V(T)$ (resp. $t_v \in V(T)$) be the closest to the root of $T$ such that $u \in \beta(t_u)$ and $v \in \beta(t_v)$.
    By assumption, $t_u \neq t_v$.
    Let $t \in V(T)$ be the neighbour of $t_u$ on the unique path between $t_u$ and $t_v$ in $T$.
    Then $(\beta(t_u) \cap \beta(t)) \cap \{u,v\} = \emptyset$. Also, $\beta(t_u) \cap \beta(t)$ is a $u$-$v$ separator in $G-S$. Since $|\beta(t_u) \cap \beta(t)| \leq \eta +1$, by the Menger's theorem there are at most $\eta +1$ vertex-disjoint paths between $u$ and $v$ in $G-S$, which is a contradiction.

    Next we move to the \fdel\ problem. Let $S$ be an \fcal-deletion set of $G$ of size at most $k$. We show that the treewidth of $G_{\flow} -S$ is at most $\eta$. Since \fcal\ contains the planar graph $F_{\planar}$, treewidth of $G-S$ is at most $\eta$. From the above arguments, if $(T,\beta)$ is a tree decomposition of $G-S$ of width at most $\eta$, then $(T,\beta)$ is also a tree decomposition of $G_{\flow} - S$.
\end{proof}

\begin{proof}[Proof of \Cref{lem:non-clique}]
    From the construction of $G_{\flow}$,
    for each $u,v \in P_0$ such that $uv \not \in E({G_{\flow}})$,
    there exists at most $k+\eta+1$ vertex-disjoint paths from $u$ to $v$ in $G$.
    For each $P_i$, $i \geq 1$, if $u,v \in N(P_i)$, then there exists a path from $u$ to $v$ in $G[P_i \cup \{u,v\}]$.

    Since for each $i,j \geq 1$, $i \neq j$, $P_i \cap P_j =\emptyset$, we conclude that the size of 
    $\mathcal{P}_{u,v}: =\{P_i: P_i \in \mathcal{P}_{\nonclique}, u,v \in N(P_i)\}$ is at most $ k+\eta+1$, as otherwise the edge $uv$ would be present in $G_{\flow}$.
    Since the neighbourhood of every component in  $\mathcal{P}_{\nonclique}$ is not a clique, 
    for every component of  $\mathcal{P}_{\nonclique}$,
    there exists a non-adjacent pair in $G_{\flow}[P_0]$ which is in the neighbourhood of this component.

    Since the number of non-adjacent pairs in $G_{\flow}[P_0]$ can be naively bounded by ${|P_0| \choose 2} = \Oh(|P_0|^2) = \Oh_{\eta}(k^6)$ (Lemma~\ref{lem:protrusion-decomposition}), 
    this bounds the size of $\mathcal{P}_{\nonclique}$ by $\Oh_{\eta}(k^7)$. Below we give a better bound to prove the lemma.

    Recall that $P_0 = X \cup Z$.
    Also $|X| = \Oh_{\eta}(k)$ and $|Z|= \Oh_{\eta}(k^3)$ from Lemma~\ref{lem:near-protrusion}.
    Furthermore, since from Lemma~\ref{lem:near-protrusion}, since $X$ is an \fdel\ set of $G$ and $X \cap Z =\emptyset$, 
     and the treewidth of $G_{\flow}[Z]$ is at most $\eta$.

     The number of components in $\mathcal{P}_{\nonclique}$ that have a pair  of non-adjacent neighbours in $X$ is at most at most $(k+\eta+1) \cdot {|X| \choose 2} = \Oh_{\eta}(k^3)$.
     Similarly, the number of components in $\mathcal{P}_{\nonclique}$ that have a pair of non-adjacent neighbours, one of which is in $X$ and the other is in $Z$ is at most $(k+\eta+1) \cdot (|X| \cdot |Z|) =\Oh_{\eta}(k^5)$.

     It remains to bound the number of components in $\mathcal{P}_{\nonclique}$ that have a pair of non-adjacent neighbours in $Z$.
     With each such component $C$, associate an arbitrary {\em non-edge witness pair with both endpoints in $Z$}, that is $\{x,y\}$, where $x,y \in Z \cap N(C)$ such that $xy \not \in E(G_{\flow})$.
     We can then bound the number of non-edge witnesses with both endpoints in $Z$, by the number of edges in a graph of treewidth $\eta$ on vertex set $|Z|$ as follows. 
     Observe that for a non-simplicial component $C$ with a witness pair $\{u,v\}$,
     we can contract $C$ into $u$ to obtain the edge $uv$, and this contraction does not increase the treewidth of $G-X$. 
     Hence each non-simplicial component that sees a non-edge with both endpoints in $Z$, can be charged to an edge of this bounded treewidth (bounded by at most $\eta$) minor on vertex set $Z$.
     Since the number of edges on any $n$-vertex graph of treewidth at most is $\eta$ is $\Oh_{\eta}(n)$ (because treewidth at most $\eta$ graphs have degeneracy at most $\eta$), the number of edges of this minor on the vertex set $Z$, and hence the number of non-edge witness pairs in $Z$, are $\Oh_{\eta}(|Z|) = \Oh_{\eta}(k^3)$. This bounds the number of components in $\mathcal{P}_{\nonclique}$ that have a non-edge witness pair in $Z$ by $\Oh_{\eta}(k^4)$.

     This concludes that the size of $\mathcal{P}_{\nonclique}$ is $\Oh_{\eta}(k^5)$.
\end{proof}

\begin{proof}[Proof of \Cref{lem:lrr-two-correct}]
    It is clear that the reduction algorithm runs in polynomial time. 
    We first show that the solution lifting algorithm runs in polynomial time.
    
    \begin{claim}
        If $S'$ is a solution of \twetadel\ on the instance $(G',k)$,
    then the
    treewidth of $G-S'$ is $\Oh(\eta)$.
    \end{claim}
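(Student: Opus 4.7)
The plan is to construct an explicit tree decomposition of $G - S'$ of width $\Oh(\eta)$ by starting with a tree decomposition $(T', \chi')$ of $G' - S'$ of width at most $\eta$ (which exists by hypothesis), and then gluing in, for each simplicial part $P_i \in \mathcal{P}_{\clique}$, a tree decomposition of the protrusion $G[N_G[P_i] \setminus S']$. Note that $V(G)$ is the disjoint union of $V(G')$ with the simplicial parts, and distinct simplicial parts lie in distinct connected components of $G - (X \cup Z)$; moreover $N_G(P_i) \subseteq X \cup Z \subseteq V(G')$. Consequently, every edge of $G - S'$ either lies inside $G' - S'$ or inside a single $G[N_G[P_i]]$, so the reconstruction step is well-posed.

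The central observation is that simpliciality gives a \emph{single} gluing bag. Since $P_i$ is simplicial, $N_{G_{\flow}}(P_i)$ is empty or forms a clique in $G_{\flow}$. As $G'$ is an induced subgraph of $G_{\flow}$ and $N_{G_{\flow}}(P_i) \subseteq V(G')$, this clique survives in $G'$, so $N_G(P_i) \setminus S'$ is a clique in $G' - S'$. By the standard fact that every clique of a graph fits inside a bag of any tree decomposition, there is a node $t_i \in V(T')$ with $N_G(P_i) \setminus S' \subseteq \chi'(t_i)$ (if $N_G(P_i) = \emptyset$, pick $t_i$ arbitrarily). By \Cref{lem:protrusion-decomposition} instantiated with $\epsilon = 1$, the set $N_G[P_i]$ is an $\Oh(\eta)$-protrusion, so $G[N_G[P_i] \setminus S']$ admits a tree decomposition $(T_i, \chi_i)$ of width $\Oh(\eta)$. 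Inflate every bag of $\chi_i$ by $N_G(P_i) \setminus S'$ (a set of size $\Oh(\eta)$); the width remains $\Oh(\eta)$ and now every bag of $\chi_i$ contains $N_G(P_i) \setminus S'$.

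Finally, form $(T, \chi)$ by taking $T'$ together with each $T_i$, adding an edge from $t_i$ to an arbitrary node of $T_i$, and letting $\chi$ agree with $\chi'$ on $V(T')$ and with the inflated $\chi_i$ on $V(T_i)$. Vertex-coverage and edge-coverage follow from the decomposition of $E(G - S')$ described above. For the connected-subtree condition: vertices living only in $V(G') \setminus \bigcup_j P_j$ appear only in $T'$; vertices in $P_i \setminus S'$ appear only in $T_i$; and each $v \in N_G(P_i) \setminus S'$ has its subtree in $T'$ touching $t_i$ and appears in every bag of $T_i$, so the two subtrees merge across the gluing edge into a connected subtree. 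The total width is $\max(\eta, \Oh(\eta)) = \Oh(\eta)$, as required. The only subtle point is the translation of ``clique in $G_{\flow}$'' to ``clique in $G' - S'$,'' which is where simpliciality of $P_i$ is essential: without a single common bag for $N_G(P_i) \setminus S'$ one could not attach the protrusion's tree decomposition at one point without blowing up the width.
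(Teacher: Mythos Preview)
Your proof is correct and follows essentially the same approach as the paper: build a width-$\eta$ tree decomposition of $G'-S'$, locate each clique $N_G(P_i)\setminus S'$ in a single bag $t_i$, and glue in a bounded-width decomposition of the protrusion at $t_i$. The only cosmetic differences are that the paper bounds $|N(P_i)\setminus S'|\le\eta+1$ via the clique-in-a-bag argument (rather than invoking the $\Oh(\eta)$-protrusion bound from \Cref{lem:protrusion-decomposition}) and phrases the construction as a tree decomposition of $G_{\flow}-S'$ before restricting to the subgraph $G-S'$.
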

    
    \begin{proof}
    The treewidth of $G'-S'$ is at most $\eta$.
    Let $(T,\beta)$ be a tree decomposition of $G'-S'$ of width at most $\eta$.
    For each $P_i \in \mathcal{P}_{\clique}$, by the definition of $\mathcal{P}_{\clique}$, $N_{G_{\flow}}(P_i)$ is a clique.
    Therefore, 
    $N_{G_{\flow}}(P_i) \setminus S'$ is also a clique and it is contained in some bag of $(T,\beta)$ (\cite[Lemma~$12.3.5$]{DBLP:books/daglib/0030488}).
    Moreover, $|N_{G_{\flow}}(P_i) \setminus S'| \leq \eta +1$. 

    From Lemma~\ref{lem:protrusion-decomposition}, $P_i$ is a disjoint union of connected components of $G-(X \cup Z)$ where $X$ is an \fcal-deletion set, therefore
   the treewidth of $G[P_i]=G_{\flow}[P_i]$ is at most $\eta$.
   Since $|N_{G_{\flow}}(P_i) \setminus S'| \leq \eta +1$,
   a tree decomposition of $G_{\flow}[N[P_i]] \setminus S'$ of width  at most $2 \eta +1$ can be obtained from a tree decomposition of width at most $\eta$ for $G_{\flow}[P_i]$ by adding $N(P_i) \setminus S'$ to each of its bags. 
   Let such a tree decomposition of $G_{\flow}[N[P_i]] \setminus S'$ be $(T_i,\beta_i)$.

   A tree decomposition of $G_{\flow}-S'$ of width $\Oh(\eta)$ can be obtained from $(T,\beta)$, by doing the following procedure, for each $P_i \in \mathcal{P}_{\clique}$: 
   let $t_i \in V(T)$ be a bag of $(T,\beta)$ containing $N_{G_{\flow}}(P_i) \setminus S'$. 
   Attach the tree $T_i$ to $T$ by making an arbitrary node $t \in V(T_i)$ adjacent to $t_i \in V(T)$. 
   It can be seen easily that the resulting tree represents a tree decomposition of $G_{\flow}-S'$  of width which is at most the maximum of the width of $(T,\beta)$ and $(T_i,\beta_i)$ for all $i$ such that $P_i \in \mathcal{P}_{\clique}$.
   Finally, since $G$ is a subgraph of $G_{\flow}$, the treewidth of $G-S'$ is at most the treewidth of $G_{\flow}-S'$. 
   \end{proof}

    The solution lifting algorithm runs the algorithm from \Cref{lem:f-del-on-tw} on the graph $G-S'$, to compute an optimum \fcal-deletion set $S''$ of $G-S'$.
    Since treewidth of $G-S'$ is $\Oh(\eta)$ from the above claim, the solution lifting algorithm runs in time $\Oh_{\fcal,\eta}(n)$.

    Let $I:=(G,k)$ and $I':=(G',k)$.
    If $S'$ is a solution of \twetadel\ on the instance $I'$, then $S:= S' \cup S''$ is clearly an \fcal-deletion set of $G$ because $S''$ is an \fcal-deletion set of $G-S'$.
    If $\val^{\twetadelsmall}_{I'}(S') = k'+1$,
    then $\val^{\fdelsmall}_{I}(S) = k+1$. 
    From Lemma~\ref{lem:fdel-tweta-flow},
     $\opt_{\twetadelsmall}(I') \leq \opt_{\fdelsmall}(I)$. Therefore, in this case we already obtain 
      \begin{align*}
       \frac{\val^{\fdelsmall}_I(S)}{\opt_{\fdelsmall}(I)} \leq  \frac{\val^{\twetadelsmall}_{I'}(S')}{\opt_{\twetadelsmall}(I')}.
   \end{align*}

    Finally,
    to prove the lemma, it remains to show that if $S'$ is a solution of size at most $k$ of \twetadel\ on the instance $I'$, then

    \begin{align*}
       \frac{|S|}{\opt_{\fdelsmall}(I)} \leq 2\cdot  \frac{|S'|}{\opt_{\twetadelsmall}(I')}.
   \end{align*}

Since $|S''|= \opt_{\fdelsmall}((G-S',k))$ and $|S|=|S'|+|S''|$, we get

        \begin{align*} 
        \frac{|S|}{\opt_{\fdelsmall}(I)}
                    &= \frac{|S'|}{\opt_{\fdelsmall}(I)} +  \frac{\opt_{\fdelsmall}((G-S',k))}{\opt_{\fdelsmall}(I)} 
                     \leq  \frac{|S'|} {\opt_{\fdelsmall}(I)} +  1  &  \\
                     & \leq  \frac{|S'|} {\opt_{\twetadelsmall}((G_{\flow},k))} +  1 & \text{ (from Lemma~\ref{lem:fdel-tweta-flow})}\\
                     & \leq  \frac{|S'|} {\opt_{\twetadelsmall}(I')} +  1
                     \leq  2\cdot \frac{|S'|} {\opt_{\twetadelsmall}(I')}. & 
    \end{align*}
\end{proof}

\begin{figure}[h!]
    \includegraphics[scale=0.5]{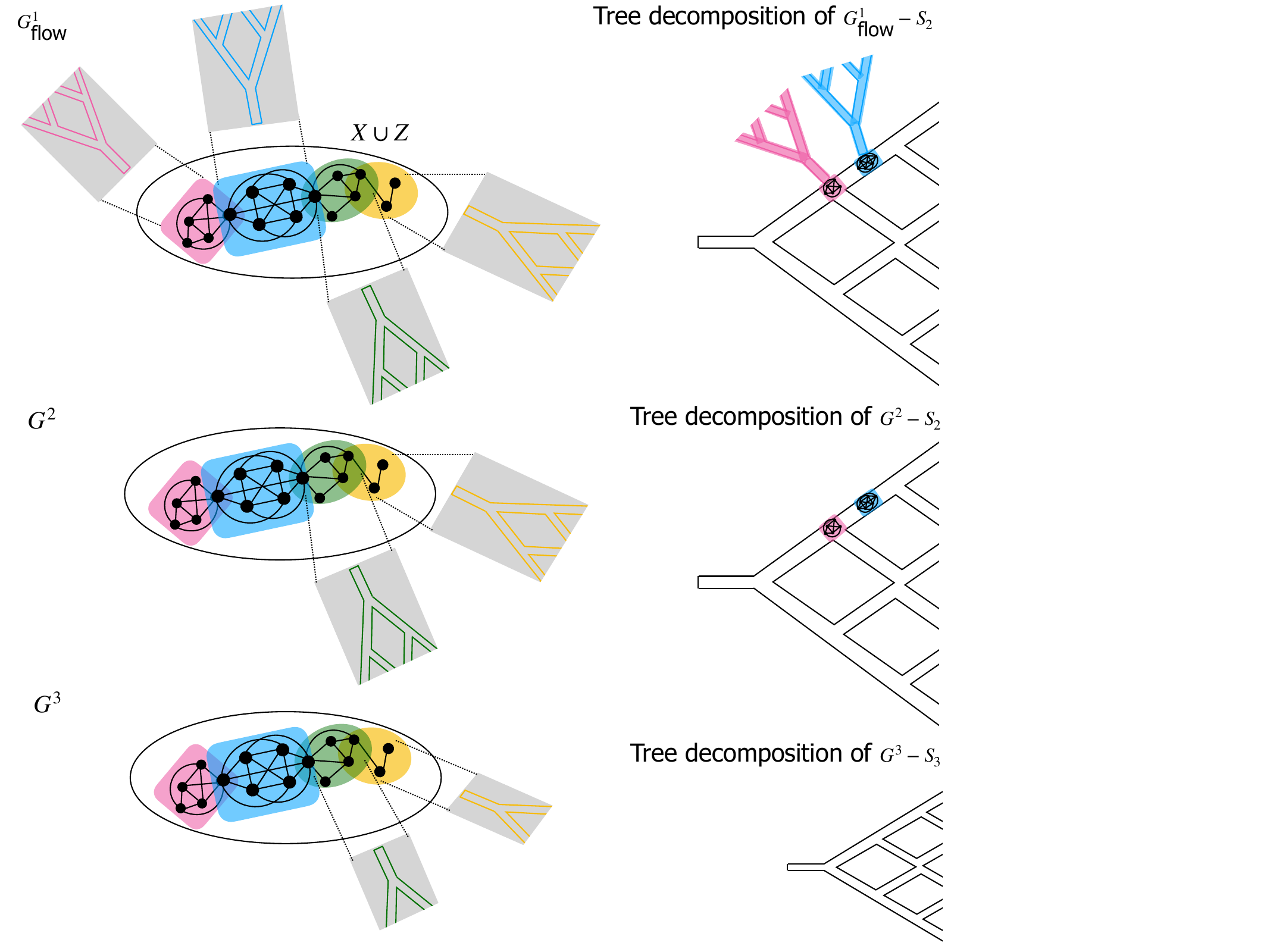}
    \caption{Illustration to \Cref{thm:lossy-kernel}. Left column: Top figure represents the protrusion decomposition; the oval is the set $X \cup Z$ and the different colored parts are $P_1,P_2,P_3,P_4$; the pink and blue parts are simplicial with pink and blue clique neighbourhoods in $X \cup Z$ respectively; the green and yellow parts are non-simplicial with green and yellow parts in $X \cup Z$ as their neighbourhoods respectively. The middle graph is obtained after removing the simplicial parts (pink and blue) and the last graph is obtained after reducing the simplicial parts into constant size using \Cref{lem:dichotomy:compress}. \\
    Right column: The bottom-most part represents the tree decomposition of $G^3-S_3$, the middle part is the tree decomposition of $G^2-S_2$ and the top part is the tree decomposition of $G^1_{\flow} - S_2$ obtained by attaching the pink and blue tree decompositions of the pink and blue parts, after adding their pink and blue neighbourhoods to each bag of their tree decompositions, respectively.}
    \label{fig:lossy-kernel}
\end{figure}

\subsection{Uniform lossy protocol}\label{app:protocol}

\begin{proof}[Proof for \Cref{lem:lrr-three}]
    Clearly, if $S'$ is an \fcal-deletion set of $G'$ then $S$ is an \fcal-deletion set of $G$.
    Let $I:=(G,k)$ and $I':=(G',k')$.
    If $|S'| > k'$, then $|S| >k$, hence, $\val^{\fdelsmall}_I(S) = \val^{\fdelsmall}_{I'}(S')$ and the lemma follows.
    Finally,
    say $S'$ is an \fdel\ of $G'$ of size at most $k'$. 
    It remains to prove that 

    $$\frac{|S|}{\opt_{\fdelsmall}(I)} \leq (1+\epsilon) \cdot \beta.$$

    From Lemma~\ref{lem:partition-mod} two cases can arise: either $|Q_0| \leq (1+\epsilon)\beta \cdot \opt_{\twetadelsmall}((G_{\flow}[Q_0],k))$ or $|Q_0| \leq \epsilon \beta \cdot \opt_{\twetadelsmall}((G_{\flow}[P_0],k))$. 

    In the first case, 
    $|S| \leq |S'| + |Q_0| 
            \leq |S'| + (1+\epsilon)\beta \cdot \opt_{\twetadelsmall}((G_{\flow}[Q_0],k))$,
and 
        $\opt_{\fdelsmall}(I) \geq \opt_{\fdelsmall}(I') + \opt_{\fdelsmall}((G[Q_0],k)) 
            \geq \opt_{\fdelsmall}(I') + \opt_{\twetadelsmall}((G_{\flow}[Q_0],k))$ from Lemma~\ref{lem:fdel-tweta-flow}.
 Combining these we get

    \begin{equation*}
        \frac{|S|}{\opt_{\fdelsmall}(I)} \leq \frac{|S'| + (1+\epsilon)\beta \cdot \opt_{\twetadelsmall}((G_{\flow}[Q_0],k))}{\opt_{\fdelsmall}(I') + \opt_{\twetadelsmall}((G_{\flow}[Q_0], k))} 
        \leq \max  ( \beta, {(1+\epsilon)\beta }). 
    \end{equation*}
    where the latter inequality follows from Proposition~\ref{prop:fact}.
    In the second case, 
    \begin{align*}
        \frac{|S|}{\opt_{\fdelsmall}(I)} &= \frac{|S'|}{\opt_{\fdelsmall}(I)} + \frac{\epsilon \beta \cdot \opt_{\twetadelsmall}((G[P_0],k))}{\opt_{\fdelsmall}(I)} & \\
        &  \leq \frac{|S'|}{\opt_{\fdelsmall}(I')} + \frac{\epsilon \beta \cdot \opt_{\fdelsmall}((G[P_0],k))}{\opt_{\fdelsmall}(I)} & \text{ (from Lemma~\ref{lem:fdel-tweta-flow})}\\
        &  \leq \beta + \epsilon \beta = (1+\epsilon)\beta. &
    \end{align*}
\end{proof}

\subsection{Computable protrusion replacer: Proof of Lemma~\ref{lem:prelim:replace}}\label{sec:computable-protrusion-repalcer}

\protrusionreplacer*

\noindent{\bf{Encoder.}}
Let $\mathcal{F}_p$ be the collection of all graphs on at most $p$ vertices, where $p:= \max_{F \in \mathcal{F}} |V(F) |   +   r$.
Let $\mathcal{C}$ be the collection containing sets of pairs $(F,\mathcal{A})$, where $F \in \mathcal{F}_p$ and $\mathcal{A}$ is a partition of $I$ where $I \subseteq [1,r]$.
Observe that $|\mathcal{C}| \leq 2^{2^{\Oh(r \log r)} 2^{\Oh(p^2)} \cdot 2^r}$ (because $|\mathcal{F}_p| =  2^{\Oh(p^2)}$).

Given any $r$-boundaried graph $H^*$,
$F \in \mathcal{F}_p$, $B \subseteq \partial(H)$,
and a partition $\mathcal{A}$ of $B$,
we say that {\em $(F,\mathcal{A})$-rooted minor is witnessed by $H^*$},
if there exists a minor model $\psi$ of $F \in \mathcal{F}_p$ in $H^*$, such that each part of the partition of $\mathcal{A}$ is a subset of some branch set of the minor model of $\psi$, and no vertex of $\partial(H^*) \setminus \bigcup_{A \in \mathcal{A}} A$ is in any branch set of $\psi$.

Define $\mathcal{L}_{\mathcal{C}}$ as the language containing triplets $(H^*,S,R)$, where $H^*$ is an $r$-boundaried graph, $S \subseteq V(H^*)$,
and $R \in \mathcal{C}$,
such that $(F,\mathcal{A}) \in R$ if and only if $(F,\mathcal{A})$ is witnessed as a rooted minor in $H^*-S$.
The pair $\mathcal{E}:=(\mathcal{C},\mathcal{L}_{\mathcal{C}})$ is called the {\em encoder for \fdel}.

For any $R \in \mathcal{C}$, let $f_{H^*}(R) := \min \{d: \exists S^* \subseteq V(H^*), |S^*| \leq d, (G,S,R) \in \mathcal{L}_{\mathcal{C}}\}$. If such a set $S^*$ does not exist, then $f_{H^*}(R) = +\infty$.

\noindent{\bf Equivalence relation.} 
For any two $r$-boundaried graphs $H_1,H_2$ of treewidth at most $r$, we say that $H_1 \sim^r_{\mathcal{E}} H_2$ if
\begin{itemize}
    \item for each $R \in \mathcal{C}$,
    $f_{H_1}(R) = f_{H_2}(R)$, and

    \item $h$-{\em folio}$(H_1 - \partial(H_1)) = h$-{\em folio}$(H_2  -   \partial(H_2))$. 
\end{itemize}

For any graph class $\mathcal{G}$, we say that $H_1 \sim^r_{\mathcal{G}} H_2$, 
if 
for each $r$-boundaried graph $G^*$, $(H_1 \oplus G^*) \in \mathcal{G}$ if and only $(H_2 \oplus G^*) \in \mathcal{G}$.
We say that $H_1 \sim^r_{\mathcal{E}, \mathcal{G}} H_2$, 
if $H_1 \sim_{\mathcal{E}}  H_2$ and
for each $r$-boundaried graph $G^*$, $(H_1 \oplus G^*) \in \mathcal{G}$ if and only $(H_2 \oplus G^*) \in \mathcal{G}$.

\begin{lemma}\label{lem:fdel-equiv}
    Let $G$ be a graph such that $G=(G^* \oplus H)$, where $H$ is an $r$-boundaried graph.
    Let $\widehat{H}$ be an $r$-boundaried graph such that $H \sim^r_{\mathcal{E}}    \widehat{H}$ (or $H \sim^r_{\mathcal{E}}    \widehat{H}$).
    Let $G'=(G^* \oplus \widehat{H})$.

    If $G$ has an \fdel\ set $S$ such that $|S \cap V(H)| \leq d$, then $G'$ has an \fdel\ set $S'$ such that $|S'| = |S|$ and $|S' \cap V(\widehat{H})| \leq d$. Such a set can be computed in linear time.
    If $G'$ has an \fdel\ set $S'$ such that $|S' \cap \widehat{H}| \leq d$, then there exists an \fdel\ set $S$ of $G$ such that $|S| = |S'|$ and $|S \cap V(H)| \leq d$. Such a set can also be computed in linear time.
\end{lemma}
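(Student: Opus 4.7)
The plan is to exploit the encoder equality $f_H(R)=f_{\widehat H}(R)$ on every state $R\in\mathcal{C}$: I read off the state realised by the restriction of $S$ to $V(H)$, transport an equally-small witness into $\widehat H$ using the equivalence, and splice it back with the rest of $S$ to produce $S'$.

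Concretely, set $S_H:=S\cap V(H)$ and let $R$ be the (unique) state with $(H,S_H,R)\in\mathcal{L}_{\mathcal{C}}$; that is, $R$ enumerates every pair $(F',\mathcal{A})$ whose rooted minor is witnessed by $H-S_H$. Since $f_H(R)\le|S_H|\le d$ and $f_H(R)=f_{\widehat H}(R)$ by equivalence, there is $\widehat S_H\subseteq V(\widehat H)$ of size at most $|S_H|$ with $(\widehat H,\widehat S_H,R)\in\mathcal{L}_{\mathcal{C}}$. I set $S':=(S\setminus V(H))\cup\widehat S_H$, padding with arbitrary extra vertices (from $V(\widehat H)$ or, failing that, from $V(G^*)\setminus V(H)$) to achieve $|S'|=|S|$ while keeping $|S'\cap V(\widehat H)|\le|S_H|\le d$. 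The encoder state $R$ of $(H,S_H)$ is computable in time linear in $|V(H)|$ by a standard bounded-treewidth dynamic program, and $\widehat H$ has $\Oh_{h,r,d}(1)$ vertices, so the lookup of $\widehat S_H$ is constant-time.

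The core task is verifying that $G'-S'$ is $\mathcal{F}$-minor-free. Assume for contradiction that some $F\in\mathcal{F}$, $|V(F)|\le h$, has a minor model $\phi=(B_v)_{v\in V(F)}$ in $G'-S'$. I extract from $\phi$ a rooted minor $(F',\mathcal{A})$ witnessed by $\widehat H-\widehat S_H$ as follows: the vertex set of $F'$ consists of all connected components of $(\widehat H-\widehat S_H)[B_v\cap V(\widehat H)]$ over $v\in V(F)$, with edges placed between two such components whenever they touch in $\widehat H$; the partition $\mathcal{A}$ of $\bigcup_v B_v\cap\partial\widehat H$ groups boundary vertices by the component they lie in. The hard part — and the place where the definition of $\mathcal{F}_p$ earns its keep — is the bound $|V(F')|\le p=h+r$: every boundary-touching component consumes at least one vertex of $\partial\widehat H$, contributing at most $r$ such components in total, while every remaining component must coincide with an entire $B_v$ buried in $V(\widehat H)\setminus\partial\widehat H$, contributing at most $|V(F)|\le h$ more. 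Hence $(F',\mathcal{A})\in R$, and by equivalence the same rooted minor is witnessed by $H-S_H$; substituting this witness for the $\widehat H$-portion of $\phi$ and reattaching it to the $G^*$-portion along the boundary identifications prescribed by $\mathcal{A}$ produces a minor model of $F$ in $G-S$, contradicting that $S$ is an $\mathcal{F}$-deletion set.

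The reverse direction is symmetric, with the roles of $H$ and $\widehat H$ swapped. The $h$-folio clause of $\sim^r_{\mathcal{E}}$ adds no new strength to this argument, since interior-only minors are already captured by taking $\mathcal{A}=\emptyset$ in the encoder; its presence in the definition is a convenience for preserving the folio explicitly, as required by item~1 of Lemma~\ref{lem:prelim:replace}. The only real subtleties will be the counting $|V(F')|\le h+r$ sketched above and the careful bookkeeping in the gluing step to ensure that the boundary identifications of the $\widehat H$-witness match those of the replacement $H$-witness — both routine once the rooted-minor formalism is in place.
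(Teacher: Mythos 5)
Your proof follows essentially the same path as the paper's: read off the encoder state $R$ of $(H,S\cap V(H))$, use $f_H(R)=f_{\widehat H}(R)$ to find a replacement deletion set inside $\widehat H$, splice, and then argue by contradiction that any leftover minor model of some $F\in\mathcal{F}$ in $G'-S'$ restricts to a rooted minor $(F',\mathcal{A})\in R$ that can be re-witnessed in $H-S_H$ and reattached to a model of $F$ in $G-S$. You are slightly more explicit than the paper on two points that it glosses over: (i) the count $|V(F')|\le h+r=p$, which you derive cleanly by separating boundary-touching components (at most $r$) from whole interior branch sets (at most $h$), whereas the paper just asserts the $|V(F^*)|+r$ bound; and (ii) the equality $|S'|=|S|$, which the paper claims directly by asserting $|X_{\widehat H}|=|X_H|$, while the function $f$ only guarantees $f_{\widehat H}(R)=f_H(R)\le|S_H|$ — your padding step patches this, though you should pad only from $V(G^*)\setminus V(\widehat H)$ (never from $V(\widehat H)$, which could break the $d$-bound and, more seriously, perturb the encoder state of $\widehat H-\widehat S_H$); equivalently, one can just weaken the conclusion to $|S'|\le|S|$, which is all that downstream uses in Lemma~\ref{lem:prelim:replace}. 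Your closing observation that the $h$-folio clause in $\sim^r_{\mathcal{E}}$ is formally redundant given the $f$-equality applied to $R$ at $S=\emptyset$ is also correct and matches how the paper itself argues inside Lemma~\ref{lem:dp-friendly}.
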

\begin{proof}
    Let $S$ be an \fdel\ set of $G$ such that $|S \cap V(H)| \leq d$.
   Let $X_H:= S \cap V(H)$.
    Let $R$ be the collection of those pairs $(F,\mathcal{A})$, $F \in \mathcal{F}_p$ and $\mathcal{A}$ is a partition of some subset of $\partial(H)$, such that $(F,\mathcal{A})$ is witnessed as a rooted minor in $H - X_H$.

    Since $H \sim^r_{\mathcal{E}} \widehat{H}$,
    $f_H(R) = f_{\widehat{H}}(R)$.
    That is, there exists $X_{\widehat{H}} \subseteq V(\widehat{H})$ such that $|X_{\widehat{H}}| = |X_{H}|$ and $\widehat{H} - X_{\widehat{H}}$ witnesses exactly the pairs in $R$ as rooted minors.
    Since $H$ has bounded treewidth, such a set $X_H$ can also be computed in linear time using Courcelle's theorem.

    We now claim that $S':=(S \setminus X_H) \cup X_{\widehat{H}}$ is an \fdel\ set in $G'$.
    For the sake of contradiction, say there exists $F^* \in \mathcal{F}$ such that $F^*$ is a minor of $G' - S'$.
    Let $\psi^*$ be a minor model of $F^*$ in $G'-S'$.
    Consider the restriction of $\psi^*$ on the subgraph $H$ of $G$. In this restriction, some branch sets of $\psi$ may not be connected. But, since $H$ is an $r$-boundaried graph, this restriction has at most $|V(F^*)| +r$ branch sets and therefore, witnessing a $(F,\mathcal{A})$-rooted minor model where $F$ is a graph on at most $p$ vertices.
    Since $(F, \mathcal{A})$ is also witnessed as a rooted minor by $H-X_H$, this implies that $F$ is a minor of $G-S$ as well, which is a contradiction.
    The other direction is analogous.
\end{proof}

\begin{lemma}\label{lem:equiv-one}
    The number of equivalence classes of $\sim^r_{\mathcal{E}}$ is at most $(d+2)^{2^{2^{\Oh(r \log r) }    2^{\Oh(p^2)} \cdot 2^r}} \cdot 2^{2^{\Oh(p^2)}}$.
\end{lemma}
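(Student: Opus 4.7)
The plan is to count the equivalence classes of $\sim^r_{\mathcal{E}}$ by recording, for each $r$-boundaried graph $H$, the unique pair of data that determines its class: (i) the function $f_H \colon \mathcal{C} \to \{0,1,\dots,d+1\}$ obtained from $f_H$ after capping the value at $d+1$ (treating $+\infty$ and any value larger than $d$ identically), and (ii) the $h$-folio of $H - \partial(H)$. The first piece captures the first condition in the definition of $\sim^r_{\mathcal{E}}$, while the second piece captures the second condition.

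For the first factor, I would observe that since we are ultimately interested in solutions that intersect the protrusion in at most $d$ vertices, any two graphs $H_1,H_2$ whose $f$-functions agree on the capped range $\{0,\dots,d,+\infty\}$ behave identically with respect to the protrusion replacer. Thus, up to the equivalence used to bound the size of a canonical representative, the relevant data is a map $\mathcal{C} \to \{0,1,\dots,d,+\infty\}$, taking one of $d+2$ values per argument. Using the bound $|\mathcal{C}| \le 2^{2^{\Oh(r \log r)}\, 2^{\Oh(p^2)}\, \cdot \, 2^r}$ already established in the definition of the encoder, the number of such functions is at most
\[
(d+2)^{|\mathcal{C}|} \;\le\; (d+2)^{2^{2^{\Oh(r \log r)}\, 2^{\Oh(p^2)}\, \cdot\, 2^r}}.
\]

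For the second factor, I would count the number of possible $h$-folios of $H - \partial(H)$. An $h$-folio is by definition a subset of the set of isomorphism classes of graphs on at most $h$ vertices, and the number of such isomorphism classes is at most $2^{\Oh(h^2)}$. Since the family $\mathcal{F}$ has maximum graph size $h$ and $p = \max_{F \in \mathcal{F}} |V(F)| + r \ge h$, we have $h^2 \le p^2$, so the number of possible $h$-folios is bounded by $2^{2^{\Oh(p^2)}}$. Multiplying the two factors then gives exactly the claimed bound.

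The only subtle point in the plan is justifying the capping of $f_H$ at $d+1$, which does not follow from the literal definition of $\sim^r_{\mathcal{E}}$ given above but is consistent with how the equivalence is used in the protrusion replacer of Lemma~\ref{lem:prelim:replace}: only solutions with at most $d$ vertices inside the protrusion are lifted, so any two graphs with the same capped $f$-function are interchangeable. Once this convention is fixed, the remaining counting is routine and the two factors combine to yield the target inequality.
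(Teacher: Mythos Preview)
Your proposal is correct and follows essentially the same two-factor count as the paper: bound the number of possible functions $f_H$ by $(d+2)^{|\mathcal{C}|}$ and the number of possible $h$-folios by $2^{2^{\Oh(p^2)}}$, then multiply. The paper's proof states the $(d+2)^{|\mathcal{C}|}$ bound without comment, implicitly using the same capping of $f_H$ at $d+1$ that you make explicit; your added justification for why values above $d$ can be identified is a welcome clarification rather than a departure from the paper's argument.
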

\begin{proof}
    For any $r$-boundaried graph $H^*$, the number of different functions $f_{H^*}$ is at most $(d+2)^{|\mathcal{C}|}$.
    The number of graphs of at most $p$ vertices is $2^{\Oh(p^2)}$ and so the number of different collections containing graphs on at most $h$ vertices, and thus the  number of distinct $h$-{\em folios}, is at most $2^{2^{\Oh(p^2)}}$. 
\end{proof}

\begin{proposition}[\cite{DBLP:journals/siamdm/GarneroPST15}]\label{prop:equiv-top-minor}
    If $\mathcal{G}$ is a class of graphs that exclude a fixed graph $Q$ as a (topological) minor,
    then the number of equivalence classes of $\sim^r_{\mathcal{G}}$ is at most $2^{r \log r} \cdot |V(Q)|^{r} \cdot 2^{|V(Q)|^2}$.
\end{proposition}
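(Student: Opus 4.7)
The plan is to assign to each $r$-boundaried graph $H$ a finite \emph{signature} that records precisely how $H$ can participate, through its boundary, in a (topological) minor model of $Q$, and to show that $\sim^r_{\mathcal{G}}$-classes are in bijection with these signatures. I would handle the topological-minor case first, then note that the minor case is analogous (with branch sets playing the role of subdivision paths).

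For topological minors, consider a model of $Q$ inside $H \oplus G^*$: an injection $\phi\colon V(Q) \to V(H \oplus G^*)$ together with internally vertex-disjoint paths, one per edge of $Q$. Cutting the model at $\partial {\bf H}$ decomposes it into an ``$H$-side witness'' and a ``$G^*$-side witness'' that agree on $\partial{\bf H}$. The $H$-side witness is captured by (a) an assignment $\alpha\colon \partial{\bf H}\to V(Q)\cup E(Q)\cup\{\bot\}$ that declares, for each boundary vertex, whether it plays the role of a branch vertex of $Q$, of an internal subdivision vertex of some edge of $Q$, or is unused; (b) the set $V_H \sub V(Q)$ of branch vertices landing inside $H$; and (c) the combinatorial pattern in which each subdivision path traverses $\partial{\bf H}$. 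Internal vertex-disjointness forces every boundary vertex to be touched at most once, which is the key source of savings in the count.

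The \emph{signature} of $H$ is then the set of all such $H$-side witness types that are actually realisable inside $H$. If $H_1$ and $H_2$ share a signature then, for every $G^*$, a topological model of $Q$ in $H_1 \oplus G^*$ transfers to one in $H_2 \oplus G^*$ by replacing its $H_1$-part by the $H_2$-part certified by the same entry, yielding $H_1 \sim^r_{\mathcal{G}} H_2$. For the counting: $|V(Q)|^r$ bounds the number of possible assignments $\alpha$ (absorbing the extra label $\bot$ into the constant); $2^{|V(Q)|^2}$ bounds the auxiliary adjacency data on $V_H$ (encoded as a subgraph/relation on $V(Q)$); and $2^{r\log r}$ bounds the number of orderings/pairings of boundary vertices along the subdivision paths, using the fact that each boundary vertex lies on at most one such path. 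Multiplying these yields the claimed bound.

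The main obstacle will be fixing a canonical form of the $H$-side witness so that the ``replacement'' step is well-defined and still produces internally vertex-disjoint paths in $H_2\oplus G^*$. The bookkeeping is most delicate when one subdivision path crosses $\partial{\bf H}$ several times: one must verify that the crossing pattern, together with $\alpha$, encodes everything the $G^*$-side can observe, so that the $G^*$-side witness can be reattached to the $H_2$-side witness without conflicts. A detailed implementation of this program is carried out by Garnero et al.~\cite{DBLP:journals/siamdm/GarneroPST15}, whose argument I would follow.
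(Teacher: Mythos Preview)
The paper does not supply its own proof of this proposition; it is quoted verbatim from Garnero et al.~\cite{DBLP:journals/siamdm/GarneroPST15} and used as a black box. Your sketch is a faithful high-level outline of the argument in that reference (signature of an $r$-boundaried graph recording all realisable boundary interfaces of a $Q$-model, replacement argument for equivalence, and the three-factor count), so your approach coincides with the source the paper defers to.
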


\begin{lemma}\label{lem:equiv-two}
    The number of equivalence classes of $\sim_{\mathcal{E}, \mathcal{G}}$, when $\mathcal{G}$ is the class of graphs that exclude a fixed graph $Q$ as a (topological) minor is at most 
    $(d+2)^{2^{2^{\Oh(r \log r) }    2^{\Oh(p^2)} \cdot 2^r}} \cdot 2^{2^{\Oh(p^2)}} \cdot 2^{r \log r} \cdot |V(Q)|^{r} \cdot 2^{|V(Q)|^2}$.
\end{lemma}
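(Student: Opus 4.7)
The plan is to observe that $\sim^r_{\mathcal{E}, \mathcal{G}}$ is, by definition, the common refinement of the two relations $\sim^r_{\mathcal{E}}$ and $\sim^r_{\mathcal{G}}$. Indeed, the definition given just before the lemma reads $H_1 \sim^r_{\mathcal{E}, \mathcal{G}} H_2$ iff $H_1 \sim^r_{\mathcal{E}} H_2$ and, for every $r$-boundaried graph $G^*$, the graph $H_1 \oplus G^*$ belongs to $\mathcal{G}$ iff $H_2 \oplus G^*$ does; the second condition is exactly $H_1 \sim^r_{\mathcal{G}} H_2$. So each equivalence class of $\sim^r_{\mathcal{E}, \mathcal{G}}$ is the non-empty intersection of an equivalence class of $\sim^r_{\mathcal{E}}$ with one of $\sim^r_{\mathcal{G}}$.

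From this, a standard counting argument immediately yields the desired bound: the number of classes of a common refinement of two equivalence relations is at most the product of the numbers of classes of the two relations, since each class of the refinement is identified by a pair of classes, one from each.

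Thus the proof reduces to two invocations of earlier results. First I apply Lemma~\ref{lem:equiv-one} to bound the number of equivalence classes of $\sim^r_{\mathcal{E}}$ by $(d+2)^{2^{2^{\Oh(r \log r)} 2^{\Oh(p^2)} \cdot 2^r}} \cdot 2^{2^{\Oh(p^2)}}$. Next, since $\mathcal{G}$ is the class of graphs excluding a fixed graph $Q$ as a (topological) minor, Proposition~\ref{prop:equiv-top-minor} bounds the number of equivalence classes of $\sim^r_{\mathcal{G}}$ by $2^{r \log r} \cdot |V(Q)|^{r} \cdot 2^{|V(Q)|^2}$. Multiplying the two bounds yields the claimed expression.

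There is essentially no obstacle here; the only point worth emphasising is the straightforward fact that two equivalence relations on the same ground set can be simultaneously ``intersected'' and the number of classes of the intersection is bounded by the product. The strength of the lemma comes from the two previously established counts, not from any additional structural insight.
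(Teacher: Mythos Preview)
Your proposal is correct and matches the paper's approach exactly: the paper's proof is the one-line statement ``This follows from Lemma~\ref{lem:equiv-one} and Proposition~\ref{prop:equiv-top-minor},'' and you have simply spelled out the common-refinement argument that justifies multiplying the two bounds.
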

\begin{proof}
    This follows from Lemma~\ref{lem:equiv-one} and Proposition~\ref{prop:equiv-top-minor}.
\end{proof}

Henceforth, let $\texttt{index}_{\mathcal{E}}:= (d+2)^{2^{2^{\Oh(r \log r) }    2^{\Oh(p^2)} \cdot 2^r}} \cdot 2^{2^{\Oh(p^2)}}$ 
and $\texttt{index}_{\mathcal{E}, \mathcal{G}}: = (d+2)^{2^{2^{\Oh(r \log r) }    2^{\Oh(p^2)} \cdot 2^r}} \cdot 2^{2^{\Oh(p^2)}} \cdot 2^{r \log r} \cdot |V(Q)|^{r} \cdot 2^{|V(Q)|^2}$.

\noindent{\bf DP-friendly equivalence relation.} 
Let $H$ be an $r$-boundaried graph of treewidth at most $r$.
Let $(T, \beta)$ be a rooted tree decomposition 
of $H$ such that $\partial(H)$ is in the root bag of $(T,\beta)$.
For any node $t$ of the tree decomposition $(T,\beta)$,
let $H_t$ be the $r'$-boundaried graph $H$ induced by the vertices in all the bags that are descendants of $t$ in $T$, including $t$, with $\beta(t)$ as the boundary.

We say that $\sim^r_{\mathcal{E}}$ (and $\sim^r_{\mathcal{E}, \mathcal{G}}$)
is {\em DP-friendly} 
if for any $t \in V(T)$, $r' \leq r$,
for any $r'$-boundaried graph $H^*$ such that $H_t \sim^{r'}_{\mathcal{E}} H^*$ (and $H_t \sim^{r'}_{\mathcal{E}, \mathcal{G}} H^*$),
it holds that $H \sim^r_{\mathcal{E}} H'$ (resp.~$H \sim^r_{\mathcal{E}, \mathcal{G}} H'$), where $H'$ is obtained from $H$ by replacing $H_t$ with $H^*$. That is, $H' = (H \setminus H_t) \oplus H^*$.

\begin{lemma}\label{lem:dp-friendly}
    The equivalence relation $\sim^r_{\mathcal{E}}$ (and $\sim^r_{\mathcal{E}, \mathcal{G}}$)
    is DP-friendly.
\end{lemma}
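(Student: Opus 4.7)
My plan is to verify the two conditions defining $\sim^r_{\mathcal{E}}$ separately. Let $K$ denote the \emph{outer} boundaried subgraph of $H$ obtained by removing the internal vertices $V(H_t)\setminus\beta(t)$, with boundary $\beta(t)$, so that $H = K\oplus H_t$ and the replacement is $H' = K\oplus H^*$. Because $(T,\beta)$ places $\partial(H)$ in the root bag and $H_t$ hangs below $t$, the tree-decomposition connectivity property forces $\partial(H)\cap V(H_t)\subseteq\beta(t)$; thus $\partial(H)$ lives entirely inside $K$, and after identifying $\beta(t)$ with $\partial(H^*)$ via the labeling we have $\partial(H') = \partial(H)$.

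For the equality $f_H(R) = f_{H'}(R)$ I will show $f_{H'}(R)\le f_H(R)$, the reverse being symmetric. Take an optimum $S\subseteq V(H)$ with $(H,S,R)\in\mathcal{L}_\mathcal{C}$ and split it as $S = S_t\uplus S_K$ according to $V(H_t)\setminus\beta(t)$ versus $V(K)$. Let $R_t$ be the rooted-minor collection witnessed by $H_t - S_t$ with respect to boundary $\beta(t)$; then $f_{H_t}(R_t)\le|S_t|$, so by $H_t\sim^{r'}_{\mathcal{E}} H^*$ there exists $S^*\subseteq V(H^*)$ of size $\le|S_t|$ with $(H^*,S^*,R_t)\in\mathcal{L}_\mathcal{C}$. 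Setting $S' = S^*\uplus S_K$, the argument reduces to a composition claim: the rooted-minor collection of $(K\oplus X) - (S_K\uplus T)$ on $\partial(H)$ depends on $(X,T)$ only through the rooted-minor collection of $X-T$ on $\beta(t)$. Granting this, applying the claim with $(H_t,S_t)$ and with $(H^*,S^*)$ yields the same output collection $R$, so $(H',S',R)\in\mathcal{L}_\mathcal{C}$ and $f_{H'}(R)\le|S'|\le|S| = f_H(R)$.

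Equality of $h$-folio on $H-\partial(H)$ and $H'-\partial(H')$ follows from the same composition machinery applied with $S_t = \emptyset$ and with the vertices of $\partial(H)\cap\beta(t)$ forbidden from appearing in any branch set. A minor $M$ of size $\le h$ in $H-\partial(H)$ decomposes via its restriction to $V(H_t)$ into a rooted-minor pattern $(F_t,\mathcal{A}_t)$ witnessed by $H_t$, together with a complementary partial model in $K-\partial(H)$; crucially $|V(F_t)|\le h+r'\le p$ because each of the $\le h$ branch sets of $M$ contributes at most $r'$ boundary-induced fragments, so $F_t\in\mathcal{F}_p$. Applying $H_t\sim^{r'}_{\mathcal{E}} H^*$ with $R$ equal to the full rooted-minor collection of $H_t$ at $S_t=\emptyset$ transports $(F_t,\mathcal{A}_t)$ to $H^*$, and gluing with the $K$-side reconstructs $M$ as a minor of $H'-\partial(H')$; the converse direction is symmetric. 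For $\sim^r_{\mathcal{E},\mathcal{G}}$ we additionally need to preserve $Q$-topological-minor-freeness of $H\oplus G^*$ for every outer $G^*$, which is the standard DP-friendliness of $\sim^{r'}_{\mathcal{G}}$ underlying Proposition~\ref{prop:equiv-top-minor}, obtained by the same gluing argument applied to topological-minor models crossing $\beta(t)$.

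The principal technical obstacle is establishing the composition claim in both directions. The subtlety is that a single branch set of the ambient minor model, when restricted to $V(H_t)$, can decompose into several connected components reconnected only by paths through $K$; the flexibility to introduce up to $r'$ fragment vertices in $F_t$ per original branch set and to group them via the partition $\mathcal{A}_t$ of $\beta(t)$ is precisely what makes the rooted-minor language closed under gluing across $\beta(t)$. Verifying that restriction and gluing are mutually inverse, and that pattern graphs stay inside $\mathcal{F}_p$ with $p = \max_{F\in\mathcal{F}}|V(F)| + r$ as defined, is the routine but bookkeeping-heavy step.
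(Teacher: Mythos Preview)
Your proposal is correct and follows essentially the same approach as the paper: split an optimum deletion set across the separator $\beta(t)$, read off the rooted-minor collection $R_t$ on the inner piece, use $H_t\sim^{r'}_{\mathcal{E}} H^*$ to transport it, and invoke a composition/gluing claim to conclude. The one noteworthy difference is in the $h$-folio part. You argue it directly by decomposing a minor model of $H-\partial(H)$ across $\beta(t)$ and transporting the inner rooted pattern; the paper instead observes that pairs $(F,\emptyset)$ in the encoder already encode the folio of $H-\partial(H)$, so once $f_H=f_{H'}$ is established the folio equality is an immediate corollary (the unique $R$ with $f_H(R)=0$ coincides with the unique $R$ with $f_{H'}(R)=0$). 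The paper's shortcut is slicker and avoids redoing the gluing argument, while your route is more self-contained. One minor wording issue: your justification of $|V(F_t)|\le h+r'$ reads as ``each of the $\le h$ branch sets contributes at most $r'$ fragments,'' which would give $h\cdot r'$; the correct count is that at most $r'$ fragments in total touch $\beta(t)$ (they are disjoint and each contains a boundary vertex) and every fragment not touching $\beta(t)$ is a full branch set, hence at most $h$ of those, yielding $h+r'\le p$.
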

\begin{proof}
    We prove the lemma for $\sim^r_{\mathcal{E}}$.
    The proof for $\sim^r_{\mathcal{E}, \mathcal{G}}$ is analogous.

    Let $H$ be an $r$-boundaried graph of treewidth at most $r$. 
    Let $(T,\beta)$ be a rooted tree decomposition of $H$ such that $\partial(H)$ is contained in the root bag of $(T,\beta)$.
    For any $t \in V(T)$, 
    let $H_t$ be the boundaried graph $H$ induced by the vertices which are contained in the bags of $(T,\beta)$ which are descendants of $t$, including $t$, where the boundary is $\beta(t)$.
    Let $H^* \sim^r_{\mathcal{E}} H_t$. Let $H':=(H\setminus H_t) \oplus H^*$.
    
    We want to show that $H' \sim^r_{\mathcal{E}} H$.
    We will show that for any $R \in \mathcal{C}$, $f_{H'}(R) = f_{H}(R)$.
    Let $X_H \subseteq V(H)$ such that $|X_H|$ is minimum and $(H,X_H,R) \in \mathcal{L}_{\mathcal{C}}$.
    Let $X_{H_t}:=X_H \cap V(H_t)$.
    Let $R_{H_t}$ be the set of those $(F,\mathcal{A})$ pairs which are witnessed as rooted minors in $H_t \setminus R_{H_t}$. 
    Since $H_t \sim^r_{\mathcal{E}}  H^*$,
    there exists a set $X_{H^*} \subseteq V(H^*)$ such that $|X_{H^*}| = |X_{H_t}|$ and $H^*-X_{H^*}$ witnessing exactly the pairs in $R_{H_t}$ as rooted minors.
    Then it is easy to see that if $S \subseteq V(H)$ such that $H-S$ witnesses exactly the pairs in $R$ as rooted minors, then $S':= (S \setminus X_{H_t}) \oplus X_{H^*}$ is a subset of $V(H')$ and $H'-S'$ witnesses exactly the pairs in $R$ as rooted minors. Also $|S'| = |S|$. Thus, $f_{H}(R) = f_{H'}(R)$.

    We now show that $h$-{\em folio}$(H -  \partial(H)) = h$-{\em folio}$(H'  - \partial(H'))$.
    Let $R$ be the collection of all pairs $(F, \emptyset)$, such that $F \in h$-{\em folio}$(H - \partial(H))$.
    Then $(H,\emptyset,R) \in \mathcal{L}_{\mathcal{C}}$. Because $H \sim^r_{\mathcal{E}}   H'$, we conclude that $h$-{\em folio}$(H - \partial(H)) = h$-{\em folio}$(H' -\partial(H'))$.
\end{proof}

\noindent{\bf Representative size.}

\begin{lemma}\label{lem:rep-size}
    If $\sim^r_{\mathcal{E}}$ (or $\sim^r_{\mathcal{E}, \mathcal{G}}$) is DP-friendly, 
    then each equivalence class of this equivalence relation contains a graph with at most $2^{\texttt{index}_{\mathcal{E}}  +1 } \cdot (r+1)$ (resp. $2^{\texttt{index}_{\mathcal{E}, \mathcal{C}}  +1 } \cdot (r+1)$ ).
\end{lemma}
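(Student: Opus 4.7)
The plan is a standard minimum-representative argument, driven by the DP-friendly property. Fix an equivalence class $\mathcal{K}$ of $\sim^r_{\mathcal{E}}$ and let $H$ be a member of $\mathcal{K}$ of minimum number of vertices; I will show $|V(H)| \le 2^{\texttt{index}_{\mathcal{E}}+1}\cdot (r+1)$. Since $H$ is an $r$-boundaried graph of treewidth at most $r$ (the latter coming from the fact that $\sim^r_{\mathcal{E}}$ is only defined on such graphs), I fix a nice rooted tree decomposition $(T,\beta)$ of $H$ of width at most $r$ in which the root bag contains $\partial(H)$, and I additionally binarize $T$ so every node has at most two children while preserving the width bound (this is classical and adds only a linear factor which is already absorbed in the forthcoming exponential estimate).

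For each node $t \in V(T)$ with $|\beta(t)| = r'+1 \le r+1$, recall the $r'$-boundaried graph $H_t$ defined in the excerpt. I claim that as $t$ ranges over any root-to-leaf path $\pi$ of $T$, the boundaried graphs $(H_t)_{t\in\pi}$ all lie in pairwise distinct equivalence classes of the relations $\sim^{r'}_{\mathcal{E}}$ (taking the appropriate $r' \le r$ for each $t$). Suppose not: let $t_1$ be a proper descendant of $t_2$ on $\pi$ with $|\beta(t_1)|=|\beta(t_2)|=r'+1$ and $H_{t_1} \sim^{r'}_{\mathcal{E}} H_{t_2}$. Because $H_{t_1}$ is a strict subgraph of $H_{t_2}$ (their boundaries have the same labels but $H_{t_1}$ omits the bags strictly between $t_1$ and $t_2$, as well as anything branching off them), the graph $H' := (H\setminus H_{t_2}) \oplus H_{t_1}$ has strictly fewer vertices than $H$. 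By the DP-friendly property (Lemma~\ref{lem:dp-friendly}), $H' \sim^r_{\mathcal{E}} H$, contradicting the minimality of $|V(H)|$.

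It follows that on any root-to-leaf path, the length is bounded by the total number of equivalence classes of the relations $\sim^{r'}_{\mathcal{E}}$ for $r' \le r$, which by Lemma~\ref{lem:equiv-one} is at most $\texttt{index}_{\mathcal{E}}$ (the bound is monotone in the boundary size, so summing over $r' \le r$ is absorbed into the double-exponential). Hence the depth of the binarized tree decomposition is at most $\texttt{index}_{\mathcal{E}}$, so $|V(T)| \le 2^{\texttt{index}_{\mathcal{E}}+1}$, and since each bag contributes at most $r+1$ vertices we obtain $|V(H)| \le 2^{\texttt{index}_{\mathcal{E}}+1}\cdot (r+1)$, as required. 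The argument for $\sim^r_{\mathcal{E},\mathcal{G}}$ is identical, with the same DP-friendliness invocation and the count $\texttt{index}_{\mathcal{E},\mathcal{G}}$ provided by Lemma~\ref{lem:equiv-two}.

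The only subtle point—and the one I would write out with care—is the justification that DP-friendly replacement $(H\setminus H_{t_2})\oplus H_{t_1}$ is really well-defined and strictly decreases the vertex count: this needs the small technical observation that in a nice tree decomposition one can always choose $t_1,t_2$ on the same root-to-leaf path so that $H_{t_1}$ is a subgraph of $H_{t_2}$ with a matching boundary labeling. Beyond that the lemma is essentially a pigeonhole on equivalence classes along a path.
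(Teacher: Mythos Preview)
Your proposal is correct and follows essentially the same argument as the paper: pick a minimum-size representative, use DP-friendliness to show that no two nodes on a root-to-leaf path of a nice (binary) tree decomposition can yield equivalent boundaried subgraphs, and bound the depth by the number of equivalence classes. Your explicit handling of the varying boundary sizes $r'\le r$ along the path is in fact slightly more careful than the paper's own proof, which applies $\sim^r_{\mathcal{E}}$ uniformly without commenting on this point.
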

\begin{proof}
    We prove the lemma for $\sim^r_{\mathcal{E}}$.
    The proof for $\sim^r_{\mathcal{E}, \mathcal{G}}$ is similar.
    Fix an equivalence class $\mathfrak{C}$ of $\sim^r_{\mathcal{E}}$.
    Let $H' \in \mathfrak{C}$ be a graph with the smallest number of vertices.
    We will show that $|V(H)| \leq 2^{\texttt{index}_{\mathcal{E}}  +1 } \cdot r$.

    Let $(T,\beta)$ be a rooted nice tree decomposition of $H'$ of width at most $r$ such that $\partial(H')$ is contained in the root bag.
    For any $t \in V(T)$, by ${H'}_t$ we denote the boundaried graph $H$ induced on the vertices which are in the bags which are descendants of $t$ in $T$, including $t$, with $\beta(t)$ as its boundary.

    Consider a root to some leaf path in $T$. Let $s,t \in V(T)$ be two nodes on such a path.
    We first claim that ${H'}_s \not  \sim^r_{\mathcal{E}} {H'}_t$.
    Indeed, as otherwise say without loss of generality ${H'}_s  \subsetneq {H'}_t$.
    Because $\sim^r_{\mathcal{E}}$ is DP-friendly
    $((H' \setminus {H'}_t) \oplus {H'}_s) \sim^r_{\mathcal{E}}  H'$.
    Because $|V((H' \setminus {H'}_t) \oplus {H'}_s))| < V(H')$,
    this contradicts that ${H'}$ is a graph of the smallest number of vertices in $\mathfrak{C}$.

    Thus, we conclude that for each root to leaf path of $T$, for any $s,t \in V(T)$ on this path, ${H'}_s$ and ${H'}_t$ belong to different equivalence classes of $\sim^r_{\mathcal{E}}$. 
    From Lemma~\ref{lem:equiv-one}, 
    the number of equivalence classes of $\sim^r_{\mathcal{E}}$ is at most $\texttt{index}_{\mathcal{E}}$.
    Since $(T,\beta)$ is a nice tree decomposition $T$ is a binary tree, therefore the number of nodes in $T$ is at most $2^{\texttt{index}_{\mathcal{E}} +1}$. The number of vertices in $H'$ is then the number of nodes in $T$ times $r+1$ (because each bag of $(T,\beta)$ has at most $r+1$ vertices).
\end{proof}

Henceforth, $\texttt{size}_{\mathcal{E}}:= 2^{\texttt{index}_{\mathcal{E}} +1} \cdot (r+1)$ and 
$\texttt{size}_{\mathcal{E}, \mathcal{G}}:= 2^{\texttt{index}_{\mathcal{E}, \mathcal{G}} +1} \cdot (r+1)$.

\noindent{\bf Finding representatives.}

\begin{lemma}\label{lem:find-rep}
    Let $\mathcal{G}$ be a graph class where every graph excludes a fixed graph $Q$ as a (topological) minor.
    Given an $r$-boundaried graph $H$ of treewidth $r$ on at least $\texttt{size}_{\mathcal{E}}+1$ (resp.~$\texttt{size}_{\mathcal{E},  \mathcal{G}} +1$) vertices,
    in $\Oh_{r,d, \mathcal{F}}(|H|)$ (resp~$\Oh_{r,d, \mathcal{F}, |Q|}(|V(H)|)$) time,
    one can find an $r$-boundaried graph $\tilde{H}$ such that $|V(\tilde{H})| \leq \texttt{size}_{\mathcal{E}}$ (resp.~$|V(\tilde{H})| \leq \texttt{size}_{\mathcal{E}, \mathcal{G}}$),
    and $H \sim^r_{\mathcal{E}} \tilde{H}$ (resp.~$H \sim^r_{\mathcal{E}} \tilde{H}$).
\end{lemma}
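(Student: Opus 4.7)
The plan is to follow the standard protrusion replacement paradigm of~\cite{DBLP:journals/jacm/BodlaenderFLPST16, DBLP:journals/siamdm/GarneroPST15}. First, in a preprocessing phase whose cost depends only on $r, d, \mathcal{F}$ (and additionally on $|V(Q)|$ for the $\sim^r_{\mathcal{E},\mathcal{G}}$ variant), I would enumerate, for each $r' \in [0, r+1]$, all $r'$-boundaried graphs on at most $\texttt{size}_{\mathcal{E}} + r + 1$ vertices up to label-preserving isomorphism. For each such graph $G^*$, compute its signature: the function $f_{G^*}$ restricted to $\mathcal{C}$ together with $h$-\emph{folio}$(G^* - \partial(G^*))$ (and, for the $\mathcal{G}$-variant, additionally the $\sim^{r'}_{\mathcal{G}}$-class obtainable via Proposition~\ref{prop:equiv-top-minor}). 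All these computations are on bounded-size graphs, so they take time depending only on $r, d, \mathcal{F}$. I would then group graphs by identical signature and, for each equivalence class of $\sim^{r'}_{\mathcal{E}}$ (respectively $\sim^{r'}_{\mathcal{E},\mathcal{G}}$), store a smallest representative in a table $\mathcal{R}_{r'}$; by Lemma~\ref{lem:rep-size} each such representative has at most $\texttt{size}_{\mathcal{E}}$ (resp.\ $\texttt{size}_{\mathcal{E},\mathcal{G}}$) vertices. Finally, I would build a constant-time lookup structure returning, for any bounded-size $r'$-boundaried graph, the equivalent representative from $\mathcal{R}_{r'}$.

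Next, I would compute a nice tree decomposition $(T, \beta)$ of $H$ of width at most $r$ in $\Oh_r(|V(H)|)$ time using Bodlaender's algorithm~\cite{DBLP:journals/siamcomp/Bodlaender96}, rooted so that $\partial(H)$ is contained in the root bag. The algorithm then processes $T$ bottom-up, maintaining at each node $t$ a $|\beta(t)|$-boundaried graph $H'_t$ with $\partial(H'_t) = \beta(t)$, satisfying the invariant that $H'_t \sim^{|\beta(t)|}_{\mathcal{E}} H_t$ and $|V(H'_t)| \le \texttt{size}_{\mathcal{E}}$. At a leaf, one sets $H'_t := H_t$, which has at most $r+1 \le \texttt{size}_{\mathcal{E}}$ vertices. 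At an introduce, forget, or join node, construct $H''_t$ by combining the representatives stored at the children with the corresponding tree-decomposition operation; because each child contributes at most $\texttt{size}_{\mathcal{E}}$ vertices and we add or remove only a handful of vertices, $H''_t$ has at most $2\,\texttt{size}_{\mathcal{E}} + r + 1$ vertices. Query the precomputed table to obtain the representative $H'_t \in \mathcal{R}_{|\beta(t)|}$ equivalent to $H''_t$. DP-friendliness of $\sim^r_{\mathcal{E}}$ (Lemma~\ref{lem:dp-friendly}) guarantees that replacing the current subgraph at $t$ by $H'_t$ preserves the $\sim^r_{\mathcal{E}}$-class of the full graph, so the invariant propagates up the tree. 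At the root, output $\tilde{H} := H'_{\text{root}}$, which is $\sim^r_{\mathcal{E}}$-equivalent to $H$ and has at most $\texttt{size}_{\mathcal{E}}$ vertices.

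The main obstacle will be justifying the running-time bound: the preprocessing is astronomically large (a tower function of $r, d$, and $p$), but it is independent of $H$ and is absorbed into the hidden $\Oh_{r, d, \mathcal{F}}(\cdot)$ constant. During the DP, each of the $\Oh_r(|V(H)|)$ nodes requires only constant-time operations on boundaried graphs of size $\Oh_{r, d, \mathcal{F}}(1)$: signature computation, table lookup, and the combine step are all constant-time. Hence the overall running time is $\Oh_{r, d, \mathcal{F}}(|V(H)|)$. For the $\sim^r_{\mathcal{E}, \mathcal{G}}$ variant, augmenting signatures with the $\sim^r_{\mathcal{G}}$-class changes only the constants and the representative-size bound to $\texttt{size}_{\mathcal{E}, \mathcal{G}}$, yielding the $\Oh_{r, d, \mathcal{F}, |Q|}(|V(H)|)$ runtime stated in the lemma.
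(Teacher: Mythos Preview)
Your approach is correct and follows a well-known alternative to what the paper does. The paper proceeds \emph{iteratively}: it repeatedly locates a node $t$ deepest in the nice tree decomposition for which $|V(H_t)| = \texttt{size}_{\mathcal{E}}+1$, looks up an equivalent graph of size at most $\texttt{size}_{\mathcal{E}}$ in a precomputed table $\mathfrak{R}$ of all $r$-boundaried graphs on at most $\texttt{size}_{\mathcal{E}}+1$ vertices, performs the replacement (DP-friendliness justifies that the whole graph stays in its class), and repeats until $|V(H)| \le \texttt{size}_{\mathcal{E}}$; each round strictly decreases the vertex count, giving at most $|V(H)|$ rounds. You instead run a single bottom-up pass, maintaining at every node a small representative of the subgraph rooted there. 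Both routes are standard in the protrusion-replacement literature; your single-pass DP is arguably cleaner and avoids rescanning the decomposition after each replacement, while the paper's iterative version gets away with a lookup table of graphs only up to size $\texttt{size}_{\mathcal{E}}+1$.

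One small inconsistency in your writeup: in preprocessing you enumerate graphs on at most $\texttt{size}_{\mathcal{E}}+r+1$ vertices, but you then observe that $H''_t$ at a join node can have up to $2\,\texttt{size}_{\mathcal{E}}+r+1$ vertices, so the table lookup as stated would not cover it. The fix is immediate---either enlarge the enumeration bound to $2\,\texttt{size}_{\mathcal{E}}+r+1$, or process a join as two successive glue-and-compress steps---and does not affect the asymptotics.
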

\begin{proof}
    We prove the lemma for $\sim^r_{\mathcal{E}}$. The proof for $\sim^r_{\mathcal{E}, \mathcal{G}}$ is analogous.
    Let $\mathfrak{R}$ be the collection of all $r$-boundaried graphs on at most $\texttt{size}_{\mathcal{E} }+1$ vertices. This can be computed in $\Oh_{\texttt{size}_{\mathcal{E}}}(1)$.
    For each graph $H' \in \mathfrak{R}$,
    we can compute $f_{H'}$ in time $\Oh_{\texttt{size}_{\mathcal{E}}}(1)$. 
    Thus, for any $H',H'' \in \mathfrak{R}$, 
    one can check whether $H' \sim^r_{\mathcal{E}} H''$ in time $\Oh_{\texttt{size}_{\mathcal{E}}}$.

    Let $(T,\beta)$ be a rooted nice tree decomposition of $H$ such that $\partial(H)$ is contained in the root of $(T,\beta)$.
    Let $t \in V(T)$ be the furthest from the root of $T$ such that $|V(H_t)| = \texttt{size}_{\mathcal{E}} +1$ (because $(T,\beta)$ is a nice tree decomposition, for any $s,t \in V(T)$ such that $t$ is a parent of $s$, $|V(H_s)| \leq |V(H_t)| \leq  |V(H_s)| +1$). 

    Let $H_t$ belong to the equivalence class $\mathfrak{C}$ of $\sim^r_{\mathcal{E}}$.
    Since $|V(H_t)| = \texttt{size}_{\mathcal{E}}+1$ and there exists a graph, say $\tilde{H_t}$ on at most $\texttt{size}_{\mathcal{E}}$ vertices in $\mathfrak{C}$, such that $H_t \sim^r_{\mathcal{E}} \tilde{H_t}$.
    Since $\mathfrak{R}$ contains all $r$-boundaried graphs on at most $\texttt{size}_{\mathcal{E}}$ vertices, it contains the graphs $H_t$ and $\tilde{H_t}$ too. Moreover, given $H_t$ can one find $\tilde{H_t}$ or some other equivalent and smaller graph from $\mathfrak{R}$, by comparing $H_t$ to each graph in $\mathfrak{R}$ on at most $\texttt{size}_{\mathcal{E}}$ vertices and checking if they are equivalent under $\sim^r_{\mathcal{E}}$ in $\Oh_{\texttt{size}_{\mathcal{E}}}(1)$ time.

    Having found $\tilde{H_t}$ with strictly smaller vertices than $H_t$, replace $H_t$ with $\tilde{H_t}$ in $H$. That is, set $H:= (H\setminus H_t) \oplus \tilde{H_t}$.
    The new graph $H$ is equivalent to the old $H$ because $\sim^r_{\mathcal{E}}$ is DP-friendly.
    Repeat the above procedure on the new $H$ until the size of $H$ is at most $\texttt{size}_{\mathcal{E}}$. Since each round takes $\Oh_{\texttt{size}_{\mathcal{E}}}(1)$ time and in each round the size of $H$ strictly decreases, the total time taken is $\Oh_{\texttt{size}_{\mathcal{E}}}(|V(H)|)$.
\end{proof}

\begin{proof}[Proof of Lemma~\ref{lem:prelim:replace}]  
    Using Lemma~\ref{lem:fdel-equiv},
    it is enough to find a graph $\widehat{H}$ of bounded size which is in the same equivalence class of $\sim^r_{\mathcal{E}}$ as $H$.
    From Lemmas~\ref{lem:equiv-one} (resp.~\ref{lem:equiv-two}),~\ref{lem:dp-friendly},~\ref{lem:rep-size} and~\ref{lem:find-rep}, such a graph $\widehat{H}$ can be found in linear time.
\end{proof}

\end{document}